\documentclass[11pt,letterpaper]{article}
\usepackage[T1]{fontenc}
\usepackage[utf8]{inputenc}
\usepackage{mathpazo}
\usepackage[scaled=.94]{helvet}
\usepackage{amsmath,amssymb,amsthm,mathtools}
\usepackage[margin=0.88in,top=0.82in,bottom=0.85in,headheight=23pt,headsep=18pt]{geometry}
\usepackage{microtype}
\usepackage{pgfplots}
\pgfplotsset{compat=1.18}
\usepackage{xcolor,booktabs,tabularx,array,enumitem}
\usepackage{fancyhdr,titlesec,needspace,float}
\usepackage[colorlinks=true,linkcolor=navy,citecolor=gold,urlcolor=gold]{hyperref}
\usepackage[nameinlink,noabbrev,capitalise]{cleveref}

\definecolor{navy}{HTML}{1C2843}
\definecolor{gold}{HTML}{AE7100}
\definecolor{muted}{HTML}{657083}
\definecolor{rulegray}{HTML}{DCE1E8}
\hypersetup{pdftitle={Cities of Signals: Compression, Separation, and the Geometry of Novelty},pdfauthor={Marc Nunes},pdfsubject={Research Note, Version 17, October 2026},pdfkeywords={signal compression, directional statistics, novelty, spherical geometry, mean resultant length}}
\renewcommand{\headrulewidth}{0.4pt}
\renewcommand{\headrule}{\hbox to\headwidth{\color{rulegray}\leaders\hrule height\headrulewidth\hfill}}
\titleformat{\section}{\large\bfseries\color{navy}}{\thesection.}{.6em}{}
\titleformat{\subsection}{\normalsize\bfseries\color{navy}}{\thesubsection.}{.6em}{}
\titlespacing*{\section}{0pt}{2.3ex plus .7ex minus .3ex}{1ex}
\titlespacing*{\subsection}{0pt}{1.8ex plus .5ex minus .2ex}{.7ex}
\setlist{itemsep=2pt,topsep=4pt,leftmargin=1.7em}

\allowdisplaybreaks[1]
\newtheoremstyle{paperplain}{8pt}{7pt}{\itshape}{}{\bfseries}{.}{.5em}{}
\theoremstyle{paperplain}
\newtheorem{theorem}{Theorem}[section]
\newtheorem{proposition}[theorem]{Proposition}
\newtheorem{corollary}[theorem]{Corollary}

\theoremstyle{definition}
\newtheorem{definition}[theorem]{Definition}
\newtheorem{example}[theorem]{Example}
\theoremstyle{remark}
\newtheorem{remark}[theorem]{Remark}

\newcommand{\R}{\mathbb R}
\newcommand{\E}{\mathbb E}
\newcommand{\Prob}{\mathbb P}
\newcommand{\sphere}{\mathbb S}
\newcommand{\ip}[2]{\langle #1,#2\rangle}
\newcommand{\norm}[1]{\lVert #1\rVert}
\newcommand{\IC}{\operatorname{IC}}
\newcommand{\Var}{\operatorname{Var}}
\newcommand{\Cov}{\operatorname{Cov}}
\newcommand{\tr}{\operatorname{tr}}
\newcommand{\id}{\mathrm I}
\newcommand{\dto}{\mathrel{\Rightarrow}}
\newcommand{\doi}[1]{\href{https://doi.org/#1}{doi:#1}}

\begin{document}
\hypersetup{pageanchor=false}
\begin{titlepage}
\thispagestyle{empty}
\vspace*{0.80in}
\noindent{\sffamily\bfseries\color{navy}\fontsize{27}{32}\selectfont Cities of Signals\par}
\vspace{0.16in}
\noindent{\sffamily\color{navy}\fontsize{14}{19}\selectfont Compression, Separation, and the Geometry of Novelty\par}
\vspace{0.28in}
\noindent{\color{gold}\rule{1.48in}{1.4pt}}
\vspace{0.27in}

\noindent Marc Nunes\par
\smallskip
\noindent{\small AlphaNova\quad$\cdot$\quad\href{mailto:marc.nunes@alphanova.tech}{marc.nunes@alphanova.tech}}\par
\vspace{0.18in}
\noindent{\small\color{muted}Research Note\quad$\cdot$\quad Version 17\quad$\cdot$\quad Revised 2 October 2026}

\vspace{0.53in}
\noindent\begin{minipage}{\textwidth}
\small
\textbf{Abstract.}
A cross-sectional signal is a forecast vector over $d$ assets at each date; demeaned and normalized, it is a point on a sphere. Its \emph{city} is the direction of its time-averaged vector in a common target-aligned frame: a compressed summary that can be published in place of the full history. We ask when the angle between two cities is a conservative estimate of the angle between the two histories. For independent, uniformly distributed histories, the probability that the history angle is at least the city angle tends to one when the city angle is at most $90^\circ$, with an $O(T^{-1/2})$ error bound that is uniform in dimension, where $T$ is the history length. Selecting histories for a positive average information coefficient (IC) changes the unconditional limit to a constant $\gamma_q>1/2$, approximately $56\%$ for 20 assets. The ordering bound extends to independent, rotationally symmetric histories with temporal dependence, and the conditional limit survives when the products of the two histories' lag correlations are summable. A separate stationary ergodic theorem allows the two signals to share movement and identifies the population margin that determines the limiting ordering. Empirically, among 266,815 positive-IC pairs in AlphaNova's Competition May 2026, the history angle is at least the city angle for $58.9\%$ of pairs overall, for $74.1\%$ at city angles up to $90^\circ$, and for $1.5\%$ beyond $90^\circ$. These frequencies describe the observed library; they do not validate the uniform null. Residual agreement is predominantly positive, and an empirical second-moment diagnostic cannot be matched within that null by adjusting temporal dependence alone. As a screen on the full 776-signal sample, a city angle of at least $60^\circ$ selects pairs with a $94.3\%$ success rate at the same history-angle threshold, against a base rate of $85.0\%$, while retaining $72.5\%$ of pairs. On the deterministic side, explicit constructions show what city directions alone cannot certify, the retained lengths of the means give sharp bounds, and means over shared time blocks recover the missing information, in the competition library only at fine temporal resolution. Persistent positive IC confines a city and bounds its motion; convergence additionally requires a stable mean direction.
\end{minipage}

\vfill
\noindent{\footnotesize\color{muted}\textbf{Keywords:} cross-sectional signals; compression; novelty; directional statistics; mean resultant length; temporal alignment; residual frame freedom; spherical geometry.}
\end{titlepage}
\hypersetup{pageanchor=true}
\setcounter{page}{1}

\section{The question: what can a city certify?}

A signal library contains histories of cross-sectional forecasts. At each date a signal is a vector over the available assets; after demeaning and normalization, that vector lies on a sphere. Comparing two complete histories requires comparing their forecasts at matched dates. A platform that cannot expose every stored history may instead publish a compressed summary of each signal. The summary studied here is the signal's time-averaged forecast vector, computed in a common target-aligned coordinate frame and rescaled to unit length. We call that direction the signal's \emph{city}.

The practical question is this: if a candidate's city is far from the city of an existing signal, is the candidate's history also far from that signal's history? We study this question one pair at a time; a guarantee that holds simultaneously for every signal in a library is a stronger requirement. A useful answer must keep two distinctions in view. The first is between a guarantee that holds for arbitrary histories and a statement that holds with high probability under a specified temporal model. The second is between signed and absolute correlation. A nearly opposite signal counts as separated under a one-sided correlation cap, but not under a rule that excludes strong correlation of either sign.

We ask two empirical questions about each pair of signals. \emph{Angle preservation} asks whether the angle between the two histories is at least the angle between their cities. \emph{Threshold screening} asks, among pairs whose city angle clears a chosen threshold, how often the history angle clears a specified threshold. When the two thresholds match, preservation implies screening success for an accepted pair. The converse fails: a pair can fall short of preservation and still clear the required history threshold. Both questions are answered on the signal library of AlphaNova's Competition May 2026 (\cref{sec:empirical}). Among pairs in which both signals have a positive average information coefficient (IC, the cross-sectional correlation between signal and realized target), $58.9\%$ preserve separation. The rate is $74.1\%$ when the city angle is at most $90^\circ$ and $1.5\%$ beyond $90^\circ$. For the screening question, on the full sample, a $60^\circ$ city cutoff gives $94.3\%$ success at the same history threshold, against a base rate of $85.0\%$. On the theory side, a uniform-history model, with an extension to serially dependent histories, serves as a reference, and a separate population-margin theorem allows the two signals to share movement. Observing aggregate preservation rates similar to a model's limits does not establish either model's assumptions.

This note follows three related studies. \emph{Signal Correlation, IC, and PnL Dependence} separates target alignment from transverse signal geometry and identifies the residual frame freedom~\cite{nunesIC}. \emph{Large Signal Libraries} studies limits of averages over signal designs~\cite{nunesLarge}. \emph{Separated Signal Libraries} treats correlation constraints as separation conditions on whole histories~\cite{nunesSeparated}. Here we average over time within a single signal, and we compare compressed images of histories rather than the histories themselves. The construction and its motivating questions were introduced in the AlphaNova posts~\cite{blogCities,blogCompression,blogWandering}.

The main observation is that a city discards two kinds of information. Temporal averaging removes fluctuations around the mean; normalization then removes the mean's length. The first loss leaves an unobserved residual cross-moment. The second makes an arbitrarily small mean look like a full-length direction. Together they explain why separated cities can coexist with almost identical histories.

Keeping the length of the mean, called the \emph{mean resultant length}, makes the lost information explicit and yields our first bound. The paper then adds four things: unit-vector examples showing that, under stated dimension conditions, this bound cannot be improved; the probability that a pair preserves separation under a uniform null model; nested intervals obtained from progressively finer time blocks; and deterministic bounds on how the city of a signal with a persistent positive information coefficient can move. \Cref{sec:impossible,sec:sharp} develop the constructions and certificates. \Cref{sec:prob} studies independent and temporally dependent rotationally symmetric histories, random temporal matching, independent-date concentration, and stationary ergodic population margins. \Cref{sec:blocks,sec:stability} develop refinement and dynamics. \Cref{sec:empirical} presents numerical checks and an empirical study of Competition May 2026 signals, and \Cref{sec:conclusion} concludes.

\subsection{Relation to established theory and scope}

The normalized Euclidean mean is a standard object in directional statistics~\cite{mardiajupp}. The same normalized centroid of unit vectors is the concept vector of spherical $k$-means clustering~\cite{dhillon}. Its asymptotic behavior on spheres has been studied without rotational-symmetry assumptions by Hendriks, Landsman, and Ruymgaart~\cite{hendriks}. Bounds on an average inner product minus the inner product of averages belong to the broader family of covariance and Gr\"uss-type inequalities; see, for example, Dragomir~\cite{dragomir1999,dragomir}. The random-matching statistic is a permutation sum of the kind studied by Hoeffding~\cite{hoeffding}. Our independent-date concentration argument is an application of McDiarmid's bounded-differences inequality~\cite{mcdiarmid}.

These ingredients are classical; in particular, the basic residual bound is Cauchy--Schwarz. The contribution is their assembly, together with four additions: explicit unit-vector constructions that attain the bounds; ordering bounds that are uniform in dimension, both for independent dates and for temporally dependent, rotationally symmetric histories (\cref{thm:uniformorderingrate,thm:dependentordering}); a hierarchy of nested intervals; and deterministic city dynamics that separate positive target alignment, confinement, and convergence. The population-margin extension uses the stationary ergodic law of large numbers. The empirical study measures preservation among positive-IC signals and compares threshold screening with its base rate and its retention cost. We do not claim a new general concentration inequality or a new theory of directional means. Nor is the city a similarity-preserving sketch in the sense of random projections~\cite{jl} or rounding-based hashes~\cite{charikar}: those are randomized maps designed to preserve inner products, whereas the city is a deterministic temporal average that discards them unless further information is retained.

Four limits on scope apply throughout. First, all results concern signal correlation in a specified common frame. Aligning realized targets is an ex post diagnostic operation, so the geometric constructions are not a deployable forecasting rule. Second, the certificates are sufficient conditions: when a certificate is inconclusive, the pair may still be distinct. Third, each probabilistic claim holds under its stated sampling law. None of them, taken alone, establishes diversification of profit and loss (PnL), economic value, or stability across future market regimes. Fourth, the empirical study uses historical signal evaluations from AlphaNova's Competition May 2026 and describes the exported signal geometry under its stored frame convention. It establishes neither prospective trading performance nor the sampling assumptions of the uniform null.

\section{Histories, cities, and the common frame}\label{sec:setup}

Let there be $d$ assets and define the neutral space
\begin{equation}
 H=\{s\in\R^d:\mathbf 1^\top s=0\},\qquad q=\dim H=d-1\ge2.
\end{equation}
A demeaned, normalized signal at date $t$ is $S_{i,t}\in\sphere(H)$, where $\sphere(H)$ denotes the unit sphere of $H$. Thus $\sphere(H)\cong\sphere^{q-1}$. As an illustration, with $d=20$ assets, $q=19$ and the sphere is $\sphere^{18}$. Throughout, $d$ remains arbitrary subject to the stated dimension assumptions.

Let $Q_t\in\sphere(H)$ be the normalized realized target direction. Fix a unit vector $e\in H$ as the North Pole and choose an orthogonal map $G_t:H\to H$ with $G_tQ_t=e$. The \emph{same} map is applied to every signal at date $t$. Write
\begin{equation}\label{eq:frame}
 X_{i,t}=G_tS_{i,t},\qquad
 p_{i,t}=\ip{X_{i,t}}e=\ip{S_{i,t}}{Q_t}=\IC_{i,t}.
\end{equation}
Here IC denotes the cross-sectional information coefficient~\cite{grinold,qianhua}: the correlation between the demeaned, normalized signal and target. At each date, we thus express every signal in a coordinate system where the realized target points along the fixed North Pole axis. The signal's coordinate along this axis is its realized IC; its perpendicular component describes its geometry orthogonal to the target. This alignment is constructed after the target is observed. Because the same orthogonal map is applied to every signal, their pairwise angles at that date are preserved.

All subsequent temporal averages use this specified frame convention. Dates are equally weighted unless block weights are explicitly introduced.

\begin{definition}[City and mean resultant length]\label{def:city}
For $T$ dates, set
\begin{equation}\label{eq:city}
 m_i=\frac1T\sum_{t=1}^T X_{i,t},\qquad
 r_i=\norm{m_i},\qquad c_i=\frac{m_i}{r_i}\quad(r_i>0).
\end{equation}
The direction $c_i$ is the city. The scalar $r_i$ is its mean resultant length~\cite{mardiajupp}, also called the retained length below. A city is undefined when $m_i=0$; the mean vector itself remains well-defined.
\end{definition}

The city minimizes average squared chordal distance to the observed vectors; it is the extrinsic sample mean of the history on the sphere~\cite{bhattacharya}. Indeed, for $c\in\sphere(H)$,
\begin{equation}
 \frac1T\sum_t\norm{X_{i,t}-c}^2=2-2\ip{m_i}c.
\end{equation}
The minimizer is uniquely $c_i$ if $r_i>0$, whereas every direction is a minimizer if $r_i=0$.

For a pair of signals define
\begin{equation}\label{eq:correlations}
 D_{ij}=\frac1T\sum_t\ip{X_{i,t}}{X_{j,t}},\qquad
 C_{ij}=\ip{c_i}{c_j}.
\end{equation}
Because $G_t$ is common and orthogonal, $D_{ij}$ equals the original time-averaged cross-sectional signal correlation. Equivalently it is the inner product of the unit histories
\begin{equation}
 x_i=T^{-1/2}(X_{i,1},\ldots,X_{i,T})\in\R^{qT}.
\end{equation}
The history and city angles are
\begin{equation}\label{eq:angles}
 \theta_{ij}=\arccos D_{ij},\qquad \phi_{ij}=\arccos C_{ij}.
\end{equation}
The first is the angle between stacked histories, not the average of the datewise angles. \Cref{tab:notation} collects the recurring quantities.

\begin{table}[H]
\centering\small
\caption{Recurring quantities and the information they retain.}
\label{tab:notation}
\begin{tabularx}{\textwidth}{@{}lX@{}}
\toprule
Quantity & Meaning\\
\midrule
$X_{i,t}$ & Unit signal at one date in the common target-aligned frame\\
$m_i=r_ic_i$ & Temporal mean, retaining both direction and length\\
$r_i$ & Mean resultant length; $r_i^2$ is the energy of the constant component\\
$D_{ij}$ & Matched-date inner product averaged over the history\\
$C_{ij}$ & Inner product of the two city directions\\
$K_{ij}$ & Inner product of temporally centered histories\\
$\kappa$ & Population residual cross-moment under a specified joint signal law\\
\bottomrule
\end{tabularx}
\end{table}

\subsection{The remaining frame choice matters}\label{sec:gauge}

Fixing the target at $e$ leaves a residual orthogonal action on $e^\perp$. This freedom preserves every matched-date comparison but need not preserve temporal averages.

\begin{proposition}[City geometry is not invariant to arbitrary residual frame changes]\label{prop:gauge}
Suppose $R_t$ is orthogonal and $R_te=e$. Replacing $G_t$ by $R_tG_t$ preserves all $D_{ij}$ and all $p_{i,t}$. In general it changes the cities, their retained lengths, and their mutual angles. If $R_t=R$ is constant over time, city angles and retained lengths are preserved.
\end{proposition}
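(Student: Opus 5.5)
The proof has three parts: invariance of the matched-date quantities, an explicit counterexample for the general case, and invariance when the residual rotation is constant. Throughout, write $X'_{i,t}=R_tX_{i,t}$ for the signals in the modified frame.

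\emph{Invariance of $D_{ij}$ and $p_{i,t}$.} The map $R_tG_t$ is orthogonal, and it still sends $Q_t$ to $e$ because $R_tG_tQ_t=R_te=e$, so it is an admissible frame map. Orthogonality gives $\ip{X'_{i,t}}{X'_{j,t}}=\ip{X_{i,t}}{X_{j,t}}$ at every date; averaging over $t$ preserves $D_{ij}$. For the IC coordinate, $R_t$ is orthogonal and fixes $e$, so $R_t^\top e=e$ as well. Hence $\ip{R_tX_{i,t}}{e}=\ip{X_{i,t}}{R_t^\top e}=\ip{X_{i,t}}{e}=p_{i,t}$.

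\emph{Constant $R$.} If $R_t=R$ for all $t$, linearity of the average gives $m'_i=Rm_i$. Then $r'_i=\norm{Rm_i}=r_i$, and $c'_i=Rc_i$ whenever $r_i>0$. Therefore $C'_{ij}=\ip{Rc_i}{Rc_j}=C_{ij}$, so city angles and retained lengths are preserved.

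\emph{The general case: a counterexample.} The claim that cities, lengths and angles change "in general" needs one explicit instance, which I construct as follows. Since $q\ge2$, pick a unit vector $u\in e^\perp$. Take $T=2$ and two signals with
\[
X_{1,1}=X_{1,2}=u,\qquad X_{2,1}=X_{2,2}=u .
\]
In the original frame, $r_1=r_2=1$ and $C_{12}=1$. Now set $R_1=\id$ and let $R_2$ act on $e^\perp$ as $-\id$ while fixing $e$; this is orthogonal and fixes $e$. Then
\[
m'_1=\tfrac12(u-u)=0,
\]
so the retained length drops from $1$ to $0$ and the city becomes undefined. This already shows that retained lengths and cities change.

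To exhibit a change in the city angle with both cities still defined, I would instead use signals with a nonzero IC component. Take
\[
X_{1,t}=\frac{e+u}{\sqrt2},\qquad X_{2,t}=\frac{e-u}{\sqrt2}\qquad (t=1,2).
\]
In the original frame, $C_{12}=0$, since $\ip{e+u}{e-u}=1-1=0$. Under the same $R_1=\id$ and $R_2|_{e^\perp}=-\id$, the second-date vectors swap their $u$ components. Both means become $e/\sqrt2$, so $C'_{12}=1$, while each retained length falls from $1$ to $1/\sqrt2$.

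The only care needed is to make sure each such $R_t$ is a genuine orthogonal map of $H$ fixing $e$. Writing $R_t=1\oplus O_t$ on $\mathrm{span}(e)\oplus e^\perp$ with $O_t$ orthogonal on $e^\perp$ makes this immediate.
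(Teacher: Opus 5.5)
Your proof is correct and follows the paper's argument: the invariance of $D_{ij}$ and $p_{i,t}$ comes from orthogonality and $R_t^\top e=e$, and in the constant case $R$ factors out of every temporal average. Your second counterexample is the paper's construction (constant histories $ae+bu$ and $ae+bv$, with the identity on one date and reflection of the transverse space on the other) specialized to $a=b=1/\sqrt2$ and $v=-u$, and the extra zero-IC example, in which the mean vanishes, is a harmless addition.
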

\begin{proof}
The first claims follow from orthogonality and $R_te=e$. For the failure of invariance, take two constant histories
\[
 X_t=ae+bu,\qquad Y_t=ae+bv,
\]
where $a,b>0$, $a^2+b^2=1$, and $u,v\in e^\perp$ are distinct unit vectors. On two dates use the identity and then reflection of the entire transverse space. The transformed means both equal $ae$, so the cities coincide. The unchanged matched-date correlation is $a^2+b^2\ip uv$, which need not be one. If the residual map is constant, it factors out of every temporal average and acts on all cities by the same isometry.
\end{proof}

Theorems about cities therefore require a complete frame convention shared by all signals. One possible choice is a measurable Householder rule that depends on the target; another is a transported frame. The results below hold for any specified convention that satisfies their assumptions. Changing the convention can change the numerical cities, and it can change whether a probabilistic assumption holds. A frame that is estimated or transported using several dates can also introduce dependence between transformed observations. Whenever we assume independent dates below, the assumption is about the transformed observations.

\subsection{Average IC and retained length}

The longitudinal coordinate of the mean gives the exact identity
\begin{equation}\label{eq:ic}
 \overline{\IC}_i=\ip{m_i}e=r_i\ip{c_i}e,
 \qquad |\overline{\IC}_i|\le r_i\le1.
\end{equation}
Thus the city and its retained length together recover average IC. A strictly positive average IC guarantees that the empirical mean is nonzero. It does not keep the retained length bounded away from zero along a sequence of problems, and it does not force the transverse component of the mean to vanish. \Cref{sec:stability} derives the quantitative consequences of a uniform positive IC margin and separates confinement from convergence.

\section{Why city direction alone is insufficient}\label{sec:impossible}

City separation cannot imply history separation for all histories unless more than the city direction is known. This remains true even when every observed IC is positive.

To see what is at stake, suppose a platform publishes only city directions and a candidate's city lies far from every published city. The next theorem constructs pairs for which that appearance is wrong: the cities are any prescribed angle apart, yet the full histories are nearly identical.

\begin{theorem}[Separated cities with almost identical histories]\label{thm:impossible}
Assume $q\ge3$. For every fixed angle $0<\phi<\pi$ there are pairs of unit-vector histories whose cities are exactly $\phi$ apart, whose ICs are strictly positive at every date, and whose history correlations approach one.
\end{theorem}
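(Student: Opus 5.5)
The plan is to give an explicit two-date construction in which both histories are dominated by a large common transverse component that alternates in sign. That component cancels in each temporal mean. What survives in the means is a small residual, and we shape the residual so that the two cities sit at the prescribed angle. This makes concrete the mechanism described before \cref{thm:impossible}: normalization turns an arbitrarily short mean into a full-length direction.

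\emph{Choice of vectors.} Since $q\ge3$, we can pick orthonormal vectors $e,u,w\in H$, where $e$ is the North Pole. Fix $0<\phi<\pi$ and set $s=\cot(\phi/2)>0$. For a small parameter $\varepsilon>0$, put $\delta=s\varepsilon$ and $\beta=\sqrt{1-\delta^2-\varepsilon^2}$, which is real once $\varepsilon$ is small enough. Take $T$ even and a sign sequence $\sigma_t=(-1)^t$. Define
\[
 X_t=\delta e+\varepsilon u+\beta\sigma_t w,\qquad Y_t=\delta e-\varepsilon u+\beta\sigma_t w .
\]

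\emph{Checks, in order.}
\begin{enumerate}
 \item Each $X_t$ and $Y_t$ lies in $\sphere(H)$, because $e,u,w$ are orthonormal and the squared coefficients sum to $\delta^2+\varepsilon^2+\beta^2=1$.
 \item By \eqref{eq:frame}, the IC at every date is $\ip{X_t}e=\ip{Y_t}e=\delta>0$.
 \item Since $\sum_t\sigma_t=0$, the means are $m_X=\delta e+\varepsilon u$ and $m_Y=\delta e-\varepsilon u$. Both are nonzero, so both cities are defined.
 \item The cosine of the city angle is
 \[
 C=\frac{\delta^2-\varepsilon^2}{\delta^2+\varepsilon^2}=\frac{s^2-1}{s^2+1}=\cos\phi,
 \]
 using the half-angle identity with $s=\cot(\phi/2)$. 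So the cities are exactly $\phi$ apart, for every admissible $\varepsilon$.
 \item The matched-date inner product is the same at every date, namely $\delta^2-\varepsilon^2+\beta^2=1-2\varepsilon^2$. Hence $D=1-2\varepsilon^2\to1$ as $\varepsilon\to0$.
\end{enumerate}
Letting $\varepsilon\downarrow0$ along a sequence therefore gives the required family of pairs.

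\emph{Main obstacle.} The only delicate point is reconciling three requirements: strictly positive IC at every date, a city angle that can be large (up to obtuse), and near-identical histories. Positive IC forces the $e$-component of both means to be positive. To produce a large angle, that component must be of the same order as the separating transverse residual $\pm\varepsilon u$. This is why $\delta$ is tied to $\varepsilon$ by the fixed ratio $s$, rather than being held fixed while $\varepsilon$ shrinks. The third orthogonal direction $w$ is also essential. It carries almost all of the energy, which cancels in the mean but is shared date by date. That is exactly where the hypothesis $q\ge3$ is used.
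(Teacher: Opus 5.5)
Your proof is correct and is the paper's construction in a different parametrization. Setting $r=\sqrt{\delta^2+\varepsilon^2}$ gives $\delta=r\cos(\phi/2)$ and $\varepsilon=r\sin(\phi/2)$, so your histories are exactly $r c_\pm+\sigma\sqrt{1-r^2}\,w$, and your $D=1-2\varepsilon^2$ coincides with the paper's $1-r^2(1-\cos\phi)$.
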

\begin{proof}
Choose orthonormal vectors $e,u,w\in H$ and set
\[
 c_+=\cos(\phi/2)e+\sin(\phi/2)u,\qquad
 c_-=\cos(\phi/2)e-\sin(\phi/2)u.
\]
For $0<r<1$ and two dates indexed by $\sigma\in\{-1,1\}$, define
\begin{equation}\label{eq:counterexample}
 X_\sigma=rc_++\sigma\sqrt{1-r^2}\,w,\qquad
 Y_\sigma=rc_-+\sigma\sqrt{1-r^2}\,w.
\end{equation}
All observations have unit norm. The means are $rc_+$ and $rc_-$, so the city angle is $\phi$. Every IC equals $r\cos(\phi/2)>0$. Nevertheless,
\begin{equation}\label{eq:counterexampleD}
 D=1-r^2(1-\cos\phi)\longrightarrow1\qquad(r\downarrow0).
\end{equation}
Repeating the two dates gives the same construction for any even history length.
\end{proof}

The construction works because both histories share one large oscillation, $\pm\sqrt{1-r^2}\,w$, which cancels in each temporal mean. What survives averaging is a short mean of length $r$, and normalization then presents the small difference between the two means as a full city angle. The message for screening is that city directions alone cannot guarantee novelty for arbitrary histories; \cref{thm:sharp} shows what retaining the lengths recovers.

\begin{example}[A finite positive-IC construction]\label{ex:finite}
Taking $\phi=120^\circ$ and $r=0.1$ in \cref{eq:counterexample} gives $D=0.985$ and IC $0.05$ at each date. Widely separated cities can therefore arise from two almost identical histories with positive realized target alignment.
\end{example}

The converse also fails: identical cities do not imply similar histories. Fix $a\in(0,1)$ and orthonormal $e,u,v\in H$. On two dates let
\[
 X_\sigma=ae+\sigma\sqrt{1-a^2}\,u,\qquad
 Y_\sigma=ae+\sigma\sqrt{1-a^2}\,v.
\]
Both cities are $e$, but $D=a^2$, which can be arbitrarily small. More generally, if a population has positive mean IC and zero mean transverse component for every signal, all population cities coincide at the North Pole. A first-moment summary then loses differences carried by temporal fluctuations.

\section{Sharp deterministic certificates}\label{sec:sharp}

The retained lengths quantify the energy discarded by temporal averaging. Define
\begin{equation}\label{eq:residuals}
 A_{i,t}=X_{i,t}-m_i,\qquad
 K_{ij}=\frac1T\sum_t\ip{A_{i,t}}{A_{j,t}}.
\end{equation}
These residuals are deviations from the signal's own time average. They should not be confused with the components orthogonal to the realized-return direction studied in~\cite{nunesIC}. Write
\[
 B_{ij}=\sqrt{(1-r_i^2)(1-r_j^2)}
\]
for the residual-energy radius.

Suppose the platform also publishes each retained length $r_i$. How much can then be said about $D_{ij}$ for an arbitrary pair? The next theorem gives an interval that always holds and that, under the stated dimension conditions, cannot be narrowed using cities and lengths alone.

\Needspace{14\baselineskip}
\begin{theorem}[Sharp bounds from cities and retained lengths]\label{thm:sharp}
For any two unit-vector histories with nonzero means,
\begin{equation}\label{eq:decomposition}
 D_{ij}=r_ir_jC_{ij}+K_{ij},
\end{equation}
and
\begin{equation}\label{eq:sharp}
 \boxed{\left|D_{ij}-r_ir_jC_{ij}\right|
 \le B_{ij}.}
\end{equation}
For $q\ge3$ and every even $T\ge2$, the entire interval in \cref{eq:sharp} is attainable for arbitrary prescribed city directions and retained lengths $r_i,r_j\in(0,1]$. For $q\ge4$, this statement holds for every $T\ge2$, including odd lengths.
\end{theorem}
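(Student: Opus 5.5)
The plan has three parts: the identity \cref{eq:decomposition}, the bound \cref{eq:sharp}, and attainability.

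\emph{The identity.} Expand $\ip{X_{i,t}}{X_{j,t}}=\ip{m_i+A_{i,t}}{m_j+A_{j,t}}$ and average over $t$. The cross terms vanish because $\sum_t A_{i,t}=0$. This leaves $\ip{m_i}{m_j}+K_{ij}=r_ir_jC_{ij}+K_{ij}$.

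\emph{The bound.} Apply Cauchy--Schwarz in $\R^{qT}$ to the stacked residuals to get $|K_{ij}|\le\bigl(\tfrac1T\sum_t\norm{A_{i,t}}^2\bigr)^{1/2}\bigl(\tfrac1T\sum_t\norm{A_{j,t}}^2\bigr)^{1/2}$. Since each $X_{i,t}$ is a unit vector, the same centering gives $\tfrac1T\sum_t\norm{A_{i,t}}^2=1-r_i^2$. Hence $|K_{ij}|\le B_{ij}$.

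\emph{Attainability, even $T$.} Generalize the construction of \cref{thm:impossible}. Fix cities $c_i,c_j$ and lengths $r_i,r_j$. Since $c_i,c_j$ span at most two dimensions and $q\ge3$, there is a unit vector $w\perp c_i,c_j$. On paired dates $\sigma=\pm1$ set
\[
X_\sigma=r_ic_i+\sigma\sqrt{1-r_i^2}\,w,\qquad Y_\sigma=r_jc_j+\sigma\lambda\sqrt{1-r_j^2}\,w .
\]
For $\lambda=\pm1$ these are unit vectors with the prescribed means, and $K=\lambda B_{ij}$, which gives the endpoints. To reach interior values, use a second direction in the residual of $Y$. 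Either take $w'$ orthonormal to $w$ inside $\{c_i,c_j\}^\perp$, which needs $q\ge4$ unless $c_i,c_j$ are parallel. Or, better, vary the residual over date pairs: use $+w$ on a fraction of the pairs and $-w$ on the rest. That only hits rational points. A cleaner route is to rotate $Y$'s residual direction within the plane spanned by $w$ and a vector orthogonal to $c_j$; that vector need not be orthogonal to $c_i$, since only $\perp c_j$ is needed for the norm of $Y$. Concretely, take $Y_\sigma=r_jc_j+\sigma\sqrt{1-r_j^2}(\cos\alpha\,w+\sin\alpha\,v)$ with $v\perp c_j,w$ a unit vector; such a $v$ exists when $q\ge3$. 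Then $K=B_{ij}\cos\alpha$ sweeps $[-B_{ij},B_{ij}]$ continuously. I will use this version for even $T$ by repeating the two-date block $T/2$ times. The degenerate case $r_i=1$ is trivial, since then $B_{ij}=0$.

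\emph{Attainability, odd $T$.} This is the main obstacle. A residual sequence with zero mean and constant norm $\rho$ over an odd number of dates cannot be the single $\pm$ oscillation along a line. I would instead use three dates with residuals $\rho(\cos(2\pi k/3)\,w_1+\sin(2\pi k/3)\,w_2)$, $k=0,1,2$, which sum to zero and have constant norm. These need a two-dimensional residual plane orthogonal to $c_i$. Let $Y$ use the same angular pattern in a plane orthogonal to $c_j$, and take $Y$'s plane to be a rotation of $X$'s plane; the matched-date inner products then average to $\rho_i\rho_j$ times a cosine controlled by the rotation. The dimension count is the key point: I need planes orthogonal to $c_i$ and to $c_j$, respectively, that can be aligned or anti-aligned. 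Since $\{c_i,c_j\}^\perp$ has dimension at least $q-2\ge2$ when $q\ge4$, a common plane $P$ orthogonal to both cities exists. With such a $P$, reversing orientation gives $K=-B_{ij}\cdot\tfrac12$, which falls short of $-B_{ij}$. To fix this, use the three-date block for one part of the history and two-date blocks for the rest, writing $T=3+2k$, and weight the blocks so the interval is still reached. The endpoint $-B_{ij}$ is the worry. Perhaps I should instead take $Y$'s residual pattern to be $-1$ times $X$'s residual pattern at each date, which is legitimate in the common plane $P$. That gives $K=-B_{ij}$ exactly, and rotating $Y$'s residual within $P\oplus$ (an extra direction) interpolates continuously. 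Checking that this extra direction is available inside the $q\ge4$ budget is the step I expect to be most delicate.
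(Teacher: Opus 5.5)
\textbf{What is correct.} Your identity, your Cauchy--Schwarz bound and your even-$T$ construction are correct. In that construction $X$ has residual $\pm w$ with $w\perp c_i,c_j$, and $Y$ has residual $\pm(\cos\alpha\,w+\sin\alpha\,v)$ with $v\perp c_j,w$. This is essentially the paper's own argument.

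\textbf{The gap: odd $T$ for $q\ge4$.} This part is not proved. You stop at ``the step I expect to be most delicate.'' Both devices you reach for, block weighting and an extra direction beyond the common plane, target an obstacle that does not exist. It stems from the claim that the three-point pattern cannot reach $-B_{ij}$.

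The missing idea is to rotate $Y$'s residual pattern continuously inside the common plane itself. Take the plane $P=\operatorname{span}(w,v)\subseteq\{c_i,c_j\}^\perp$ that you already found for $q\ge4$. Put $u_t=\cos(2\pi t/T)\,w+\sin(2\pi t/T)\,v$. These unit vectors sum to zero for every $T\ge2$, odd or even, so no mixture of three-date and two-date blocks is needed. Let $R_\beta$ rotate $P$ by $\beta$ and set
\[
 X_{i,t}=r_ic_i+\sqrt{1-r_i^2}\,u_t,\qquad
 X_{j,t}=r_jc_j+\sqrt{1-r_j^2}\,R_\beta u_t .
\]
Both histories are unit, since their residuals lie in $P$, which is orthogonal to both cities. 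Their means are $r_ic_i$ and $r_jc_j$, since $\sum_tR_\beta u_t=R_\beta\sum_tu_t=0$. Moreover $\ip{u_t}{R_\beta u_t}=\cos\beta$ at every date, so $K_{ij}=B_{ij}\cos\beta$. As $\beta$ runs over $[0,\pi]$, this sweeps the whole interval $[-B_{ij},B_{ij}]$.

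Your ``$-1$ times $X$'s residual pattern'' is simply the case $\beta=\pi$, because $-\id$ on a plane is a rotation. So the interpolation happens within $P$ and needs no extra direction.

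\textbf{Where the detour came from.} It came from looking only at symmetries of the three-point configuration. Reversing its orientation, i.e.\ a reflection of $P$, actually gives $K_{ij}=0$, because $\tfrac13\sum_{k=0}^2\cos(4\pi k/3)=0$. A one-date cyclic shift gives $-B_{ij}/2$. Neither is an extreme once arbitrary rotations of $P$ are allowed. If you do want to keep mixed blocks, they also work with no weighting, provided $Y$'s residual at every date is the $R_\beta$-image of $X$'s residual in $P$: every date then contributes the same $\cos\beta$. As written, though, the odd-length sharpness claim remains unestablished.
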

\begin{proof}
Expand $X_{i,t}=m_i+A_{i,t}$ and use $\sum_tA_{i,t}=0$ to obtain \cref{eq:decomposition}. The unit-vector constraint gives
\begin{equation}\label{eq:energy}
 \frac1T\sum_t\norm{A_{i,t}}^2=1-r_i^2.
\end{equation}
Cauchy--Schwarz in the history Hilbert space yields \cref{eq:sharp}.

For attainability, fix any $z\in[-1,1]$. Since $q\ge3$, choose a unit vector $w$ orthogonal to both prescribed cities. Choose a unit vector $v$ orthogonal to $c_j$ and $w$, and put
\[
 a=w,\qquad b=zw+\sqrt{1-z^2}\,v.
\]
Then $a\perp c_i$, $b\perp c_j$, $\norm a=\norm b=1$, and $\ip ab=z$. Each residual needs to be orthogonal only to its own city. For two dates indexed by $\sigma\in\{-1,1\}$, define
\[
 X_{i,\sigma}=r_ic_i+\sigma\sqrt{1-r_i^2}\,a,\qquad
 X_{j,\sigma}=r_jc_j+\sigma\sqrt{1-r_j^2}\,b.
\]
Each observation is unit, the means are $r_ic_i$ and $r_jc_j$, and averaging gives $K_{ij}=zB_{ij}$. Varying $z$ attains the full interval already at $T=2$. Repeating the pair gives every even $T$. If either retained length is one, the interval is a singleton and the same construction applies.

For $q\ge4$, choose an orthonormal pair $w,v$ perpendicular to both cities, and put
\[
 u_t=\cos(2\pi t/T)w+\sin(2\pi t/T)v,\qquad 1\le t\le T.
\]
These unit vectors sum to zero for every $T\ge2$. Let $R_\beta$ rotate their plane by $\beta$ and take $\cos\beta=z$. The histories
\[
 X_{i,t}=r_ic_i+\sqrt{1-r_i^2}\,u_t,\qquad
 X_{j,t}=r_jc_j+\sqrt{1-r_j^2}\,R_\beta u_t
\]
are unit, have the prescribed means, and satisfy $K_{ij}=B_{ij}\cos\beta=zB_{ij}$. This proves the extension to odd lengths.
\end{proof}

In words, the full-history correlation splits into a part the compressed summary sees, $r_ir_jC_{ij}$, and a part it does not, $K_{ij}$, whose size is limited by how much of each history was averaged away. Long retained lengths leave a narrow interval and a usable certificate; short ones leave an interval close to $[-1,1]$. Sharpness means that this width is a limit on the information in the summary, not slack in the proof. \Cref{sec:positiveic} shows the consequence for the competition sample, where retained lengths are small and no pair is certified at threshold $0.5$.

The dimension threshold is optimal for a guarantee covering every even history length. For $q=2$, $T=2$, and $0<r_i,r_j<1$, the two observations of each history must be its mean plus or minus a residual orthogonal to that mean. If $J$ is a quarter-turn in the plane, their residual directions can only be $\pm Jc_i$ and $\pm Jc_j$. Consequently
\[
 K_{ij}\in\left\{-B_{ij}C_{ij},\,B_{ij}C_{ij}\right\}.
\]
For noncollinear cities, $|C_{ij}|<1$, so neither endpoint $\pm B_{ij}$ is attainable. Coincident or antipodal cities attain the endpoints, but still only two values at $T=2$. Thus dimension two cannot support the uniform full-interval statement.

Sharpness is proved only under the stated conditions on dimension and history length. We do not establish full-interval attainability for odd $T$ in dimension $q=3$; the two-date argument says nothing about that case. The inequality itself holds for every $T$ and every dimension. If a mean is zero, the inequality remains valid with $\ip{m_i}{m_j}$ in place of $r_ir_jC_{ij}$, and no city needs to be defined.

The interval turns directly into a screening rule: a pair is certified below a correlation threshold $\tau$ whenever the upper end of its interval is. Two versions are needed, because a one-sided cap treats a nearly opposite signal as novel and an absolute cap does not.

\begin{corollary}[Signed and absolute novelty certificates]\label{cor:screen}
Let $\tau\in[0,1]$ be a correlation threshold. Then the following implications hold deterministically:
\begin{align}
 r_ir_jC_{ij}+B_{ij}\le\tau
 &\quad\Longrightarrow\quad D_{ij}\le\tau,\label{eq:signedscreen}\\
 r_ir_j|C_{ij}|+B_{ij}\le\tau
 &\quad\Longrightarrow\quad |D_{ij}|\le\tau.\label{eq:absscreen}
\end{align}
\end{corollary}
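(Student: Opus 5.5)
The plan is to read both implications directly off the two-sided interval in \cref{thm:sharp}, using only its inequality \cref{eq:sharp}; the attainability part is not needed. First, if one of the means vanishes, I will use the remark after the theorem: \cref{eq:sharp} remains valid with $\ip{m_i}{m_j}$ in place of $r_ir_jC_{ij}$. With the convention $r_ir_jC_{ij}=\ip{m_i}{m_j}=0$ in that case, the argument below covers it verbatim. So I will assume throughout that
\[
 r_ir_jC_{ij}-B_{ij}\le D_{ij}\le r_ir_jC_{ij}+B_{ij}.
\]

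For the signed certificate \cref{eq:signedscreen}, I use only the upper half of this interval. It gives $D_{ij}\le r_ir_jC_{ij}+B_{ij}$. The hypothesis bounds the right-hand side by $\tau$, so $D_{ij}\le\tau$.

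For the absolute certificate \cref{eq:absscreen}, I use both halves. Since $r_ir_j\ge0$, I have $r_ir_jC_{ij}\le r_ir_j|C_{ij}|$. The upper half then gives $D_{ij}\le r_ir_j|C_{ij}|+B_{ij}\le\tau$. Likewise $-r_ir_jC_{ij}\le r_ir_j|C_{ij}|$, so the lower half gives
\[
 -D_{ij}\le -r_ir_jC_{ij}+B_{ij}\le r_ir_j|C_{ij}|+B_{ij}\le\tau.
\]
Combining the two one-sided bounds yields $|D_{ij}|\le\tau$. Equivalently, I will note that $|D_{ij}|\le|r_ir_jC_{ij}|+|K_{ij}|\le r_ir_j|C_{ij}|+B_{ij}$ by \cref{eq:decomposition} and \cref{eq:sharp}, which gives the same conclusion in one line.

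I do not expect a real obstacle. The only care needed is bookkeeping. In the absolute case the lower endpoint of the interval must also be controlled, which is why $C_{ij}$ is replaced by $|C_{ij}|$. The degenerate zero-mean case is handled by the remark noted above. The restriction $\tau\in[0,1]$ is not used by the argument; it only reflects the meaningful range of thresholds.
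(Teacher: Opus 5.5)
Your proposal is correct and matches the paper's proof: the signed certificate comes from the upper endpoint of \cref{eq:sharp}, and the absolute one from the triangle inequality $|D_{ij}|\le r_ir_j|C_{ij}|+|K_{ij}|\le r_ir_j|C_{ij}|+B_{ij}$, which is your one-line variant. Your side remarks are also accurate: the zero-mean convention is consistent, and the implications do not use $\tau\in[0,1]$.
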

\begin{proof}
Apply the upper endpoint in \cref{eq:sharp} for the first implication and the triangle inequality for the second.
\end{proof}

Both rules are sufficient conditions computed from cities and retained lengths only. A pair that fails them has not been shown to be a duplicate; the summary is inconclusive.

For example, orthogonal cities with $r_i=r_j=0.9$ certify $|D_{ij}|\le0.19$. When both lengths are small, the residual radius is close to one and the same bound may be inconclusive.

\begin{corollary}[A distance guarantee]\label{cor:distance}
For nonzero means,
\begin{equation}\label{eq:distance}
 2(1-D_{ij})\ge (r_i-r_j)^2+2r_ir_j(1-C_{ij}).
\end{equation}
If $r_i,r_j\ge r_0>0$, then
\begin{equation}\label{eq:sine}
 \sin\frac{\theta_{ij}}2\ge r_0\sin\frac{\phi_{ij}}2.
\end{equation}
\end{corollary}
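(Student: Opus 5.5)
The plan is to derive \cref{eq:distance} from the upper endpoint of \cref{eq:sharp} in \cref{thm:sharp}, and then to convert it into the half-angle form \cref{eq:sine} using the lower bound $r_0$.

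\textbf{The first inequality.} \Cref{thm:sharp} gives $D_{ij}\le r_ir_jC_{ij}+B_{ij}$. Hence
\[
2(1-D_{ij})\ge 2-2r_ir_jC_{ij}-2B_{ij}.
\]
It therefore suffices to show
\[
2-2r_ir_jC_{ij}-2B_{ij}\ge (r_i-r_j)^2+2r_ir_j(1-C_{ij}).
\]
The right side equals $r_i^2+r_j^2-2r_ir_jC_{ij}$. After cancelling the $C_{ij}$ terms, the claim reduces to
\[
2-r_i^2-r_j^2\ge 2\sqrt{(1-r_i^2)(1-r_j^2)}.
\]
This is AM--GM applied to $1-r_i^2$ and $1-r_j^2$. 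Equivalently, it is $\bigl(\sqrt{1-r_i^2}-\sqrt{1-r_j^2}\bigr)^2\ge0$. So this step is routine.

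A sanity check is available. The left side of \cref{eq:distance} is the squared distance between the stacked unit histories $x_i,x_j$. The right side is $\norm{m_i-m_j}^2$. The inequality therefore also says that averaging is a contraction: by Jensen's inequality or Cauchy--Schwarz, $\norm{m_i-m_j}^2\le\frac1T\sum_t\norm{X_{i,t}-X_{j,t}}^2$. One could use this as an alternative one-line proof.

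\textbf{The half-angle bound.} Use $1-D_{ij}=2\sin^2(\theta_{ij}/2)$ and $1-C_{ij}=2\sin^2(\phi_{ij}/2)$. Then \cref{eq:distance} becomes
\[
4\sin^2(\theta_{ij}/2)\ge (r_i-r_j)^2+4r_ir_j\sin^2(\phi_{ij}/2).
\]
Drop the nonnegative term $(r_i-r_j)^2$ and use $r_ir_j\ge r_0^2$. This gives $\sin^2(\theta_{ij}/2)\ge r_0^2\sin^2(\phi_{ij}/2)$.

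Both half-angles lie in $[0,\pi/2]$, so their sines are nonnegative, and taking square roots yields \cref{eq:sine}.

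I do not expect a real obstacle. The only point needing care is the algebraic reduction in the first step, namely checking that the $C_{ij}$ terms cancel exactly. The nonzero-mean hypothesis is used only so that $C_{ij}$, and hence $\phi_{ij}$, is defined.
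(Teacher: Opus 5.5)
Your proof is correct, but its main route differs from the paper's. The paper gets \cref{eq:distance} in one line from Jensen's inequality,
\[
 2(1-D_{ij})=\frac1T\sum_t\norm{X_{i,t}-X_{j,t}}^2\ge\norm{m_i-m_j}^2,
\]
which is the alternative you mention only as a sanity check. Your half-angle step matches the paper's: drop $(r_i-r_j)^2$, use $1-\cos\alpha=2\sin^2(\alpha/2)$, and take square roots of nonnegative sines.

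Your primary argument goes through the upper endpoint of \cref{thm:sharp} and AM--GM instead. That detour gives some extra insight, because it shows the distance guarantee is a weakening of the sharp interval:
\[
 2-2r_ir_jC_{ij}-2B_{ij}-\norm{m_i-m_j}^2=\Bigl(\sqrt{1-r_i^2}-\sqrt{1-r_j^2}\Bigr)^2 .
\]
So your intermediate bound on $2(1-D_{ij})$ is strictly stronger than the Jensen bound unless $r_i=r_j$. When $r_i=r_j=r$ the two coincide, giving the equal-length form $D_{ij}\le1-r^2(1-C_{ij})$ used for \cref{tab:thresholds}.

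The paper's route is self-contained: it needs neither the residual decomposition nor Cauchy--Schwarz. It also shows the inequality as the contraction property of temporal averaging, which is an orthogonal projection in the history space. That argument carries over unchanged to block means.
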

\begin{proof}
Jensen's inequality gives
\begin{align*}
 2(1-D_{ij})
 &=\frac1T\sum_t\norm{X_{i,t}-X_{j,t}}^2\\
 &\ge\norm{m_i-m_j}^2
 =(r_i-r_j)^2+2r_ir_j(1-C_{ij}).
\end{align*}
Drop the squared radial difference and use $1-\cos\alpha=2\sin^2(\alpha/2)$.
\end{proof}

\Cref{eq:sine} reads as a discount: the half-angle sine of the history separation is at least that of the city separation, scaled by the retained-length floor $r_0$. With $r_0$ near one, city separation carries over almost in full; with small $r_0$ the guarantee is weak. \Cref{sec:stability} applies it with $r_0$ equal to a persistent IC margin, giving \cref{eq:predictivenovelty}.

When $r_i=r_j=r$, the upper endpoint simplifies to $D_{ij}\le1-r^2(1-C_{ij})$. \Cref{tab:thresholds} below translates this expression into the commonly used $0.5$ one-sided threshold.

\begin{table}[H]
\centering\small
\caption{Equal retained lengths: a deterministic certificate for the one-sided rule $D\le0.5$. These thresholds are not certificates for the absolute-correlation rule.}
\label{tab:thresholds}
\begin{tabular}{@{}ccc@{}}
\toprule
City angle $\phi$ & City correlation $C$ & Sufficient retained length $r$\\
\midrule
$60^\circ$ & $0.5$ & $r=1$\\
$90^\circ$ & $0$ & $r\ge1/\sqrt2\simeq0.7071$\\
$120^\circ$ & $-0.5$ & $r\ge1/\sqrt3\simeq0.5774$\\
\bottomrule
\end{tabular}
\end{table}

\subsection{When does compression understate angular separation?}

There is no universal ordering between $D_{ij}$ and $C_{ij}$. Their difference is exactly
\begin{equation}\label{eq:ordering}
 D_{ij}-C_{ij}=K_{ij}-(1-r_ir_j)C_{ij}.
\end{equation}
\Needspace{13\baselineskip}
A user of the published cities usually wants something simpler than an interval: is the city angle a conservative estimate, so that the histories are at least as far apart as their cities? The identity above answers this exactly.

\begin{corollary}[Signal separation at least city separation]\label{cor:angleordering}
For any two unit-vector histories with nonzero means,
\[
 \theta_{ij}\ge\phi_{ij}
 \quad\Longleftrightarrow\quad
 K_{ij}\le(1-r_ir_j)C_{ij}.
\]
In particular, if the city angle satisfies $0\le\phi_{ij}\le\pi/2$ and the temporal residual cross-moment satisfies $K_{ij}\le0$, then
\[
 \boxed{\theta_{ij}\ge\phi_{ij}.}
\]
Thus the angle between the full signal histories is at least the angle between their cities.
\end{corollary}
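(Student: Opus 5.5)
The plan is to reduce the angle comparison to a comparison of cosines and then substitute the exact identity \cref{eq:ordering}. Both angles are defined through $\arccos$ on $[-1,1]$, and $\arccos$ is strictly decreasing there. Hence $\theta_{ij}\ge\phi_{ij}$ holds if and only if $D_{ij}\le C_{ij}$. Some care is needed to confirm that both arguments lie in $[-1,1]$:
\begin{itemize}
\item $C_{ij}$ is an inner product of unit vectors, which requires nonzero means so that the cities exist.
\item $D_{ij}$ is an inner product of the unit stacked histories $x_i,x_j\in\R^{qT}$.
\end{itemize}
So the first step is simply to record this monotonicity equivalence.

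Next I will invoke the decomposition \cref{eq:decomposition} from \cref{thm:sharp}, namely $D_{ij}=r_ir_jC_{ij}+K_{ij}$. Subtracting $C_{ij}$ gives \cref{eq:ordering}, $D_{ij}-C_{ij}=K_{ij}-(1-r_ir_j)C_{ij}$. The condition $D_{ij}\le C_{ij}$ is therefore literally $K_{ij}\le(1-r_ir_j)C_{ij}$, which proves the stated equivalence. No inequality is lost at this stage, so both directions of the equivalence come for free.

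For the ``in particular'' part, I will use the hypothesis $0\le\phi_{ij}\le\pi/2$, which gives $C_{ij}=\cos\phi_{ij}\ge0$. The retained lengths satisfy $0<r_i,r_j\le1$ by \cref{eq:ic}, so $1-r_ir_j\ge0$. Consequently $(1-r_ir_j)C_{ij}\ge0\ge K_{ij}$, the criterion holds, and $\theta_{ij}\ge\phi_{ij}$ follows.

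There is no real obstacle here. The only points needing attention are bookkeeping:
\begin{itemize}
\item The strict monotonicity of $\arccos$ makes the equivalence exact rather than one-directional.
\item The sign of $C_{ij}$ is exactly where the restriction $\phi_{ij}\le\pi/2$ enters; beyond $90^\circ$ the right side is negative and $K_{ij}\le0$ no longer suffices.
\end{itemize}
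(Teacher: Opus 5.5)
Your proposal is correct and matches the paper's proof essentially step for step: the strict monotonicity of $\arccos$ reduces $\theta_{ij}\ge\phi_{ij}$ to $D_{ij}\le C_{ij}$, \cref{eq:ordering} turns this into the residual criterion, and $C_{ij}=\cos\phi_{ij}\ge0$ with $0<r_ir_j\le1$ makes $K_{ij}\le0$ sufficient. One minor attribution point: the positivity $r_i>0$ comes from the nonzero-mean hypothesis, not from \cref{eq:ic}, which supplies only the upper bound $r_i\le1$.
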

\begin{proof}
Since $\arccos$ is decreasing on $[-1,1]$, the inequality $\theta_{ij}\ge\phi_{ij}$ is equivalent to $D_{ij}\le C_{ij}$. \Cref{eq:ordering} gives the stated criterion. For the sufficient conditions, $0<r_i,r_j\le1$ and $C_{ij}=\cos\phi_{ij}\ge0$ imply $(1-r_ir_j)C_{ij}\ge0$, so $K_{ij}\le0$ satisfies that criterion.
\end{proof}

The criterion says that the city angle is conservative exactly when the residual co-movement $K_{ij}$, which the cities never saw, does not exceed a margin fixed by the cities and retained lengths. It involves no probability model and can be checked on any pair whose histories are available. \Cref{sec:empirical} evaluates it for every pair in the competition sample; \cref{sec:prob} asks how often it holds under specified sampling laws.

The angle restriction matters: if $K_{ij}=0$, $r_ir_j<1$, and $\phi_{ij}>\pi/2$, then $D_{ij}=r_ir_jC_{ij}>C_{ij}$ and hence $\theta_{ij}<\phi_{ij}$. Positive IC alone also does not guarantee the ordering, as \cref{thm:impossible} shows. If one ordering predominates in data, that does not make it a universal inequality; additional assumptions are needed.

\section{Probabilistic separation}\label{sec:prob}

The deterministic bound allows residual fluctuations to align in the worst possible way. Improving on it probabilistically requires an explicit sampling law, and this section considers five. We begin with histories drawn uniformly and selected for positive average IC, and compute the probability that a pair preserves separation (\cref{sec:band,sec:uniformnorth}). We then allow dependence across dates within each history, keeping the two histories independent and rotationally symmetric (\cref{sec:dependentnull}). The third model holds the two observed collections fixed and randomizes only how their dates are matched. The fourth assumes independent dates but allows the two signals to depend on each other at each date. The fifth, a stationary ergodic model, permits both temporal and cross-signal dependence and identifies a population margin.

\subsection{The probability mass beyond nonpositive residual agreement}\label{sec:band}

The condition $K_{ij}\le0$ in \cref{cor:angleordering} is sufficient but not necessary, so it misses some pairs that preserve separation. For any random pair of histories with nonzero means, suppress the indices and set
\begin{equation}\label{eq:separationmargin}
 J=\{C\ge0\},\qquad \varepsilon=(1-r_Xr_Y)C.
\end{equation}
On $J$, the margin $\varepsilon$ is nonnegative. Positive residual agreement is allowed, provided it does not exceed this margin.

\begin{proposition}[Exact decomposition of separation probability]\label{prop:band}
For any probability law on pairs of histories with nonzero means and $\Prob(J)>0$,
\begin{equation}\label{eq:bandprobability}
 \Prob(\theta\ge\phi\mid J)
 =\Prob(K\le0\mid J)
 +\Prob(0<K\le\varepsilon\mid J).
\end{equation}
\end{proposition}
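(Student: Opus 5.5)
The plan is to reduce the probability statement to a pointwise identity of events and then apply additivity of conditional probability. By \cref{cor:angleordering}, for every pair of histories with nonzero means, $\theta\ge\phi$ holds exactly when $K\le(1-r_Xr_Y)C=\varepsilon$. This equivalence is deterministic, so it holds pathwise for every outcome in the sample space, whatever the joint law. As events, therefore, $\{\theta\ge\phi\}=\{K\le\varepsilon\}$. I will also note that $\theta$, $\phi$, $K$, $r_X$, $r_Y$ and $C$ are all continuous, hence measurable, functions of the histories wherever the means are nonzero, so every event written down is measurable.

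Next I intersect with $J=\{C\ge0\}$. On $J$ we have $0<r_X,r_Y\le1$, so $1-r_Xr_Y\ge0$, and together with $C\ge0$ this gives $\varepsilon\ge0$ there. Consequently, on $J$ the event $\{K\le\varepsilon\}$ is the disjoint union of $\{K\le0\}$ and $\{0<K\le\varepsilon\}$. The inclusion $\{K\le0\}\subseteq\{K\le\varepsilon\}$ is exactly where $\varepsilon\ge0$ is used. Off $J$ the margin may be negative and this split fails, which is why the statement conditions on $J$.

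Finally, since $\Prob(J)>0$, conditional probability given $J$ is a probability measure. Finite additivity on the disjoint decomposition then gives \cref{eq:bandprobability}.

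None of these steps is a real obstacle. The only point needing care is verifying $\varepsilon\ge0$ on $J$, so that the first event sits inside $\{K\le\varepsilon\}$ and the two pieces are disjoint and exhaustive. This rests on the retained lengths lying in $(0,1]$, which follows from nonzero means and unit observations by the triangle inequality.
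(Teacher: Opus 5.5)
Your proof is correct and follows the paper's argument: it uses the deterministic criterion $\theta\ge\phi$ if and only if $K\le\varepsilon$ from \cref{eq:ordering}, and then splits $\{K\le\varepsilon\}$ on $J$ into the disjoint events $\{K\le0\}$ and $\{0<K\le\varepsilon\}$, which is valid because $\varepsilon\ge0$ there. Your checks of measurability and of $0<r_X,r_Y\le1$ spell out details that the paper leaves implicit.
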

\begin{proof}
By \cref{eq:ordering}, $\theta\ge\phi$ if and only if $K\le\varepsilon$. On $J$, the latter event is the disjoint union of $\{K\le0\}$ and $\{0<K\le\varepsilon\}$.
\end{proof}

The second term is the probability of an additional band of pairs: those whose residual agreement is positive but no larger than the margin. The width of this band varies with the pair, so it is not, in general, an annulus of fixed angular width. We do not assume that $K$ and $\varepsilon$ are independent. If the conditional law of $K$ is symmetric and has no atom at zero, the first term is exactly one-half; that symmetry must be shown under the same conditioning as the conclusion it supports. The next model gives an asymptotic version of this statement and a stronger limit for the whole admissible region.

\subsection{A uniform continuum of northern histories}\label{sec:uniformnorth}

Let $\sigma_q$ be normalized surface measure on $\sphere^{q-1}$, and let $\mu_T=\sigma_q^{\otimes T}$ be product measure on histories. Define
\begin{equation}\label{eq:northernhistoryspace}
 \Omega_T^+=\left\{(X_1,\ldots,X_T):
 \ip{T^{-1}\textstyle\sum_tX_t}{e}>0\right\}.
\end{equation}
This set has $\mu_T$-measure one-half. Two independent draws from $\mu_T(\,\cdot\mid\Omega_T^+)$ describe a uniform pair of histories whose temporal means lie above the equator. Equivalently, draw all $2T$ observations independently and uniformly, then condition on
\begin{equation}\label{eq:northernselection}
 H_T=\{\ip{m_X}{e}>0,\ \ip{m_Y}{e}>0\},
 \qquad \Prob(H_T)=\tfrac14.
\end{equation}
Write $\Prob_T^+$ and $\E_T^+$ for this conditional pair law and expectation, and put $J_T=\{C_T\ge0\}=\{\phi_T\le\pi/2\}$. Thus the superscript $+$ selects two histories with positive average IC, while $J_T$ additionally restricts their city angle to at most $90^\circ$. The means are nonzero almost surely. We refer to this independent uniform-history model as the \emph{uniform null}, and to its restriction $\Prob_T^+$ to histories with positive average IC as the \emph{northern null}; a \emph{northern} history or pair is one selected in this way. Conditioning preserves independence between the two histories and exchangeability within each one; observations within a selected history are no longer independent.

This model is a reference point, not a description of a real library. Every direction is equally likely at every date, the two signals are unrelated, and the only selection is that each ends with positive average IC. The next theorem records how large the unseen residual $K_T$ is in that world, and shows that, in long histories, selection on IC does not tilt its sign.

\begin{theorem}[Uniform northern histories: exact moments and balanced residual signs]\label{thm:northernnull}
Fix $q\ge2$ and let $T\ge2$. Under the law $\Prob_T^+$,
\begin{equation}\label{eq:northernmoments}
 \E_T^+K_T=0,\qquad
 \Var_T^+(K_T)=\frac{T-1}{qT^2},\qquad
 \E_T^+r_X^2=\E_T^+r_Y^2=\frac1T.
\end{equation}
As $T\to\infty$ with $q$ fixed,
\begin{equation}\label{eq:northernjointclt}
 \bigl(\sqrt{qT}K_T,\sqrt{qT}m_X,\sqrt{qT}m_Y\bigr)
 \dto (Z,G_+,H_+),
\end{equation}
where $Z\sim N(0,1)$ and $G_+,H_+$ are standard $q$-dimensional Gaussians conditioned to have positive projection on $e$, all three independent. Consequently,
\begin{equation}\label{eq:northernsign}
 \Prob_T^+(K_T\le0)\longrightarrow\tfrac12,
 \qquad
 \Prob_T^+(K_T\le0\mid J_T)\longrightarrow\tfrac12.
\end{equation}
\end{theorem}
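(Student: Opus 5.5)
The plan rests on one symmetry that decouples the unseen residual $K_T$ from the selection event. Conditionally on $Y$, apply to every $X_t$ the map $X_t\mapsto R X_t$, where $R$ is a reflection or rotation fixing $e$. Such a map preserves $\mu_T$ and $\Omega_T^+$, but not $K_T$ directly. I would instead exploit sign flips of the residuals. Write $X_t=m_X+A_t$; under $\mu_T$ the vector $(X_1,\dots,X_T)$ is exchangeable, and the conditioning depends only on $m_X$.

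\textbf{Exact moments.} First I compute $\E_T^+ r_X^2$. The identity $r_X^2=T^{-2}\sum_{s,t}\ip{X_s}{X_t}$ has diagonal terms equal to $1$. Off-diagonal terms have zero expectation under the unconditioned law, so unconditionally $\E r_X^2=1/T$. Moreover, $r_X^2$ is invariant under $X\mapsto -X$, which swaps $\Omega_T^+$ with its complement up to a null set. Hence the conditional and unconditional expectations agree, giving $\E_T^+r_X^2=1/T$.

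For $K_T$, I write $K_T=D_T-\ip{m_X}{m_Y}$ with $D_T=T^{-1}\sum_t\ip{X_t}{Y_t}$. Let $R_e$ be the reflection through $e^\perp$ followed by $-\id$, i.e. $R_e=-\id+2ee^\top$. It preserves $\ip{\cdot}{e}$, and therefore the selection. But it does not flip $\ip{X_t}{Y_t}$, so this symmetry does not settle the mean.

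A better route is to condition on $(m_X,m_Y)$ and on the unordered multisets of observations, and then randomize the temporal matching. Within $\Prob_T^+$, independently permuting the dates of $Y$ preserves the law, since the selection depends only on means. Averaging $D_T$ over permutations gives $T^{-1}\cdot T\ip{m_X}{m_Y}\cdot$, i.e. $\E[D_T\mid\cdot]=\ip{m_X}{m_Y}$. Hence $\E_T^+K_T=0$ exactly.

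For the variance I use the same device, treating $K_T=T^{-1}\sum_t\ip{A_t}{B_{\pi(t)}}$ as a Hoeffding permutation sum with centered rows. Its conditional variance is $\frac{1}{T^2(T-1)}\sum_{s,t}\ip{A_s}{B_t}^2$. Taking expectations, I use $\E\sum_{s,t}\ip{A_s}{B_t}^2$. Since $Y$ is independent of $X$ and invariant in law under rotations fixing $e$, this requires computing $\E\ip{A_s}{B_t}^2$ under the selection. This is the main obstacle: the selection breaks full rotational invariance, leaving only invariance about $e$. I would handle it by noting that the selection is independent of the sign pattern. Alternatively, I would avoid permutations altogether: reflect all of $X$'s transverse components, and use the fact that for each fixed $X$ the law of $Y$ given $H_T$ satisfies $\E[\,Y_tY_s^\top]$ computable from exchangeability and $\E r_Y^2=1/T$. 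That gives $\E K_T^2=T^{-2}\sum_{s,t}\tr(\E[A_sA_t^\top]\E[B_sB_t^\top])$, and I would verify the closed form $(T-1)/(qT^2)$ from the traces $\E\norm{A_t}^2=1-1/T$ and $\E\ip{A_s}{A_t}=-(1-1/T)/(T-1)$.

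\textbf{Joint CLT and sign limits.} Unconditionally, the multivariate CLT applied to the i.i.d. vectors $(\mathrm{vec}(X_tY_t^\top),X_t,Y_t)$ gives joint normality, with $\sqrt{qT}m_X\to G$ and $\sqrt{qT}m_Y\to H$ standard Gaussians, since $\Cov X_t=\id/q$. It also gives $\sqrt{qT}D_T\to Z$, uncorrelated with and hence independent of $(G,H)$, because the mixed third moments vanish. The correction $\sqrt{qT}\ip{m_X}{m_Y}=O_p(T^{-1/2})$. The conditioning event $\{\ip Ge>0,\ip He>0\}$ is a continuity set of positive probability, so the conditional limit follows, giving \cref{eq:northernjointclt}. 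Since $Z$ has a continuous symmetric law independent of the means, $\Prob_T^+(K_T\le0)\to\tfrac12$. Likewise $J_T=\{\ip{m_X}{m_Y}\ge0\}$ converges to the continuity event $\{\ip{G_+}{H_+}\ge0\}$, which has positive probability, and the conditional limit $\tfrac12$ follows by independence of $Z$.
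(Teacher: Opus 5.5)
You take a different route from the paper for the moments of $K_T$, and one step of it is not closed. Your treatment of $\E_T^+r_X^2$ is the paper's whole-history sign-reversal argument. Your joint CLT and both sign limits match the paper's proof: the CLT for the iid vectors, negligibility of $\sqrt{qT}\ip{m_X}{m_Y}$, and conditioning on a continuity set of positive probability.

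For $\E_T^+K_T$ and $\Var_T^+(K_T)$ you condition on the two multisets, so the matching of dates is a uniform permutation. \Cref{thm:permutation} then gives conditional mean $0$ and conditional variance $\tr(\Sigma_X\Sigma_Y)/(T-1)$, hence $\Var_T^+(K_T)=\tr(\E_T^+\Sigma_X\,\E_T^+\Sigma_Y)/(T-1)$. This is valid and yields a finer conditional statement than the paper's. The paper is shorter: it computes $\E[K_T^2\mid X]=(1-r_X^2)/(qT)$ under the \emph{unselected} law, then notes that $K_T^2$ is invariant under reversing either history. Your route is the paper's consistency check \cref{eq:permutationnullconsistency} run in reverse.

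The step you call the ``main obstacle'' is where the gap is. You need $\E_T^+\Sigma_X=\frac{T-1}{qT}\id_q$, not just its trace. Selection leaves only invariance about $e$, so a priori $\E_T^+\Sigma_X=\alpha ee^\top+\beta(\id_q-ee^\top)$ with $\alpha+(q-1)\beta=(T-1)/T$. Then $\tr\bigl((\E_T^+\Sigma_X)^2\bigr)=\alpha^2+(q-1)\beta^2$, which equals $(T-1)^2/(qT^2)$ only when $\alpha=\beta$. So verifying the closed form from the traces $\E\norm{A_t}^2$ and $\E\ip{A_s}{A_t}$ cannot work. Reflecting transverse components fixes $e$ and cannot relate $\alpha$ to $\beta$. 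The phrase ``the selection is independent of the sign pattern'' is too vague to count as the argument.

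The fix is the tool you already used for $r_X^2$:
\begin{itemize}
\item $\Sigma_X$ (equally $A_sA_t^\top$, or $\tr(\Sigma_X\Sigma_Y)$) is an even function of the whole history.
\item Reversing the history's sign preserves $\mu_T$ and exchanges $\Omega_T^+$ with its complement up to a null set.
\item Hence $\E_T^+\Sigma_X=\E\,\Sigma_X$, and the unselected expectation is isotropic by full rotational invariance.
\end{itemize}
The paper uses exactly this fact in its consistency check. With it added, your route gives $\frac{(T-1)^2}{qT^2}\cdot\frac1{T-1}=\frac{T-1}{qT^2}$, and the proof is complete.
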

\begin{proof}
Exchangeability gives $\E_T^+(X_t-m_X)=0$ for every $t$. Independence of the two selected histories then gives $\E_T^+K_T=0$.

Before conditioning, $\E r_X^2=\E r_Y^2=1/T$. Put $a_t=X_t-m_X$; since $\sum_ta_t=0$, independence and $\E[Y_tY_t^\top]=\id_q/q$ give
\[
 \E[K_T^2\mid X_1,\ldots,X_T]
 =\frac{1}{qT^2}\sum_t\norm{a_t}^2
 =\frac{1-r_X^2}{qT}.
\]
Thus $\E K_T^2=(T-1)/(qT^2)$. Reversing the sign of either entire history preserves product measure and $K_T^2$, while changing the corresponding mean's hemisphere. These reversals interchange the four equally likely sign combinations in \cref{eq:northernselection}, so conditioning on $H_T$ leaves the second moment unchanged. The same argument applies to each $r^2$. Together with the conditional zero mean, this proves \cref{eq:northernmoments}.

Under the unconditioned law the independent and identically distributed (iid) vectors
\[
 W_t=\sqrt q\,\bigl(\ip{X_t}{Y_t},X_t,Y_t\bigr)\in\R^{1+2q}
\]
have mean zero and covariance $\id_{1+2q}$. The multivariate central limit theorem~\cite[Section~3.10]{durrett} therefore gives independent standard Gaussian limits for $\sqrt{qT}D_T$, $\sqrt{qT}m_X$, and $\sqrt{qT}m_Y$. Also
\[
 \sqrt{qT}\,\ip{m_X}{m_Y}\longrightarrow0
 \quad\hbox{in probability},
\]
so $D_T$ can be replaced by $K_T$. The limiting event that both Gaussian means have positive projection on $e$ has probability $1/4$ and boundary of probability zero. Conditioning on $H_T$ therefore gives \cref{eq:northernjointclt}. The event $G_+^\top H_+\ge0$ also has positive probability and boundary of probability zero. It depends only on the Gaussian means, which are independent of $Z$. This proves both sign limits in \cref{eq:northernsign}.
\end{proof}

Two facts carry forward. First, both the residual $K_T$ and the retained lengths are small, of order $1/\sqrt{qT}$ and $1/\sqrt T$ respectively: unrelated random histories are nearly orthogonal and their means nearly vanish. Second, the residual is asymptotically as likely to be negative as positive, with or without the restriction to city angles of at most $90^\circ$.

The practical question for this null model is how often the city angle is conservative, that is, how often $\theta_T\ge\phi_T$. The answer turns almost entirely on whether the city angle is acute or obtuse.

\begin{corollary}[Positive-IC separation preservation for city angles at most $90^\circ$]\label{cor:northernseparation}
Under the independent uniform-history model of \cref{thm:northernnull}, select both histories for strictly positive average IC. Set $\varepsilon_T=(1-r_Xr_Y)C_T$, $E_T=\{\theta_T\ge\phi_T\}$, and $J_T=\{C_T\ge0\}=\{\phi_T\le\pi/2\}$. Among these positive-IC pairs whose city angle is at most $90^\circ$, the probability that history separation is at least city separation tends to one as $T\to\infty$. More precisely, $E_T$ and $J_T$ asymptotically coincide:
\begin{equation}\label{eq:northerneventequivalence}
 \Prob_T^+(E_T\mathbin{\triangle}J_T)\longrightarrow0,
\end{equation}
where $\triangle$ denotes symmetric difference. In particular,
\begin{equation}\label{eq:northernseparation}
 \Prob_T^+(\theta_T\ge\phi_T\mid \phi_T\le\pi/2)\longrightarrow1,
 \qquad
 \Prob_T^+(\theta_T\ge\phi_T\mid \phi_T>\pi/2)\longrightarrow0.
\end{equation}
The admissible-band identity also gives
\begin{equation}\label{eq:northernbandlimit}
 \Prob_T^+(0<K_T\le\varepsilon_T\mid J_T)\longrightarrow\tfrac12.
\end{equation}
Over all northern pairs, without conditioning on $J_T$, the limit is instead
\begin{equation}\label{eq:northernallpairs}
 \Prob_T^+(\theta_T\ge\phi_T)\longrightarrow\gamma_q,
 \qquad
 \gamma_q=\Prob(U^\top V\ge0),
\end{equation}
where $U,V$ are independent uniform directions in the open northern hemisphere of $\sphere^{q-1}$. Explicitly,
\begin{equation}\label{eq:northernanglemeasure}
 \gamma_q=
 \frac{2\Gamma(q/2)}{\sqrt\pi\,\Gamma((q-1)/2)}
 \int_0^{\pi/2}\left(1-\frac{u}{\pi}\right)\sin^{q-2}u\,du.
\end{equation}
For example, $\gamma_2=3/4$, $\gamma_3=1-1/\pi$, and $\gamma_{19}\simeq0.560117$. Moreover,
\begin{equation}\label{eq:gammaasymptotic}
 \gamma_q=\frac12+\frac{\sqrt2}{\pi^{3/2}\sqrt q}
 +O(q^{-3/2})\qquad(q\to\infty).
\end{equation}
The coefficient of $q^{-1/2}$ is approximately $0.254$.
\end{corollary}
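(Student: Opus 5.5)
The plan is to rescale the ordering criterion of \cref{cor:angleordering} and then apply the joint limit \cref{eq:northernjointclt}. By \cref{eq:ordering}, $E_T=\{K_T\le(1-r_Xr_Y)C_T\}$. Multiply both sides by $\sqrt{qT}$. The left side becomes $\sqrt{qT}K_T\dto Z$. For the right side, \cref{thm:northernnull} gives $r_X,r_Y=O_P(T^{-1/2})$, so $1-r_Xr_Y\to1$ in probability. Also $C_T$ is the cosine between $\sqrt{qT}m_X$ and $\sqrt{qT}m_Y$. The map $(g,h)\mapsto\ip{g}{h}/(\norm g\norm h)$ is continuous away from the null set $\{g=0\}\cup\{h=0\}$, so the continuous mapping theorem gives $C_T\dto C_\infty=\ip{G_+}{H_+}/(\norm{G_+}\norm{H_+})$, jointly with $Z$ and independent of it. Consequently $\sqrt{qT}\varepsilon_T\to\pm\infty$ in probability according to the sign of $C_\infty$. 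Since $C_\infty$ has a continuous law, the event $\{C_\infty=0\}$ has probability zero.

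To prove \cref{eq:northerneventequivalence}, I would bound $\Prob_T^+(E_T\triangle J_T)$ by three terms:
\[
\Prob_T^+(|C_T|<\delta)+\Prob_T^+\bigl(\sqrt{qT}|K_T|>M\bigr)+\Prob_T^+\bigl(|C_T|\ge\delta,\ \sqrt{qT}|\varepsilon_T|\le M\bigr).
\]
The first term is controlled because $\limsup$ is at most $\Prob(|C_\infty|\le\delta)$, which is small for small $\delta$. The second term is controlled by Chebyshev, using $\Var_T^+K_T\le 1/(qT)$. The third term tends to zero because $\sqrt{qT}\,|C_T|\ge\sqrt{qT}\,\delta\to\infty$. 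Outside these three events, $E_T$ holds if and only if $C_T>0$. This gives \cref{eq:northerneventequivalence}.

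Both conditional limits in \cref{eq:northernseparation} then follow, because $\Prob_T^+(J_T)\to\Prob(C_\infty\ge0)$ and this limit lies strictly between $0$ and $1$. For \cref{eq:northernbandlimit}, I apply \cref{prop:band} together with \cref{eq:northernsign}. The unconditional limit is $\Prob_T^+(E_T)\to\Prob(C_\infty\ge0)=\Prob(U^\top V\ge0)$, because the directions of $G_+$ and $H_+$ are independent and uniform on the open northern hemisphere, by rotational invariance of the Gaussian.

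For the formula \cref{eq:northernanglemeasure}, I condition on $U$ and write $U=\cos\alpha\,e+\sin\alpha\,\xi$. The colatitude $\alpha$ of a uniform point has density proportional to $\sin^{q-2}\alpha$ on $(0,\pi/2)$, and the normalizing constant is $2\Gamma(q/2)/(\sqrt\pi\,\Gamma((q-1)/2))$. Given $U$, I need the fraction of the northern hemisphere lying in the hemisphere $\{v:\ip vU\ge0\}$. These are two hemispheres whose poles make angle $\alpha$, so their intersection is a lune of dihedral angle $\pi-\alpha$. A lune has measure proportional to its dihedral angle, so the intersection has fraction $(\pi-\alpha)/(2\pi)$ of the sphere, which is $1-\alpha/\pi$ of the hemisphere. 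Integrating against the density of $\alpha$ gives the formula.

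The special values are direct computations. For $q=2$ the density of $\alpha$ is uniform, giving $\gamma_2=3/4$. For $q=3$ the integral is $\int_0^{\pi/2}(1-u/\pi)\sin u\,du=1-1/\pi$. The value $\gamma_{19}$ I would evaluate numerically.

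The asymptotic \cref{eq:gammaasymptotic} is the step I expect to be the main obstacle. I would write
\[
\gamma_q=\tfrac12+\tfrac1\pi\,\E\bigl(\tfrac\pi2-\alpha\bigr),
\]
and substitute $\alpha=\pi/2-s$. The weight $\cos^{q-2}s$ concentrates at scale $q^{-1/2}$, so Laplace's method gives
\[
\E s\approx\sqrt{2/(\pi q)}.
\]
Hence the coefficient of $q^{-1/2}$ is $\sqrt2/\pi^{3/2}$. The delicate part is the error term. I would use the exact identity $\E\sin s=\frac{\Gamma(q/2)}{\sqrt\pi\,\Gamma((q+1)/2)}$ together with $s-\sin s=O(s^3)$ and $\E s^3=O(q^{-3/2})$. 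Combined with Stirling's expansion of the Gamma ratio, this gives the $O(q^{-3/2})$ remainder.
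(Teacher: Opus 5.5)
Your proof is correct, but the event-equivalence step takes a different route from the paper.

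The paper never rescales and never invokes the CLT. It compares $D_T$ with $C_T$ directly, using only two facts. The first is $D_T\to0$ in probability, by the law of large numbers; this survives selection since $\Prob(H_T)=1/4$. The second is exact: by rotational invariance, $c_X,c_Y$ are independent uniform northern directions for every $T$, so $C_T$ has a $T$-independent law with no atom at zero. The two events can then differ only on $\{|D_T|>\delta\}\cup\{|C_T|\le\delta\}$.

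You instead compare $K_T$ with the margin $\varepsilon_T$ at scale $\sqrt{qT}$. Chebyshev, with the exact variance, controls $K_T$, and the fixed-$q$ joint CLT of \cref{thm:northernnull} supplies the limit law of $C_T$. Your three-event split is valid: off those events $\varepsilon_T\neq0$, so the sign of $C_T$ decides $E_T$. It also displays the residual-versus-margin mechanism that the Cantelli argument of \cref{thm:uniformorderingrate} later quantifies.

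The route has three costs compared with the paper's:
\begin{itemize}
\item The rescaling is unnecessary. Since $K_T\to0$ and $r_Xr_Y\to0$ in probability, the same split works unscaled.
\item You obtain $\Prob_T^+(J_T)\to\gamma_q$ only in the limit. The paper gets $\Prob_T^+(J_T)=\gamma_q$ exactly for every $T$, and that is the denominator used in the later finite-sample bounds.
\item Your argument needs a CLT. The paper's LLN-level comparison of $D_T$ with $C_T$ is the one that reappears in \cref{thm:dependentordering}, where no CLT is assumed.
\end{itemize}

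For \cref{eq:northernanglemeasure}, you condition on the colatitude of $U$ and measure the lune $\{V\ \text{north},\ U^\top V\ge0\}$. The paper conditions on the angle between $U$ and $V$ and computes the chance that both lie north. This is the same two-plane lune computation with a different conditioning variable, and the integrands agree because the colatitude density equals the inter-direction angle density.

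Your asymptotic argument matches the paper's. To justify $\E s^3=O(q^{-3/2})$, use $s\le(\pi/2)\sin s$ and the third absolute moment of $Z_q$.
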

\begin{proof}
Rotational invariance makes each unconditioned mean direction uniform. After northern selection, $c_X,c_Y$ are independent uniform northern directions for every $T$. Consequently $C_T$ has a distribution independent of $T$, with no atom at zero, and $\Prob_T^+(J_T)=\gamma_q>0$.

Under the unconditioned law, $D_T\to0$ in probability by the law of large numbers. The same holds under $\Prob_T^+$, since $\Prob(H_T)=1/4$. For any $\delta>0$, the events $\{D_T\le C_T\}$ and $J_T$ can disagree only if $|D_T|>\delta$ or $|C_T|\le\delta$. Letting first $T\to\infty$ and then $\delta\downarrow0$ proves \cref{eq:northerneventequivalence}. Since both $J_T$ and its complement have fixed positive probability, both conditional limits in \cref{eq:northernseparation} and the unconditional limit \cref{eq:northernallpairs} follow. Subtracting \cref{eq:northernsign} from \cref{eq:bandprobability} gives \cref{eq:northernbandlimit}.

For \cref{eq:northernanglemeasure}, the angle $u$ between two independent full-sphere directions has density~\cite{caifanjiang}
\[
 f_q(u)=\frac{\Gamma(q/2)}{\sqrt\pi\,\Gamma((q-1)/2)}\sin^{q-2}u,
 \qquad 0<u<\pi.
\]
For a uniformly oriented pair at angle $u$, the probability that both directions lie north is $(\pi-u)/(2\pi)$: the allowed directions of the pole's projection into their two-dimensional span occupy an arc of length $\pi-u$. Since the unconditional probability of both directions being north is $1/4$, their angle density after northern selection is $2(1-u/\pi)f_q(u)$. Integrating over $0<u<\pi/2$ proves the formula and its stated special cases.

If $Z_q$ is the first coordinate of a uniform point on $\sphere^{q-1}$, the same formula gives
\[
 \gamma_q=\frac12+\frac1\pi\E\arcsin|Z_q|,
 \qquad
 \E|Z_q|^a=
 \frac{\Gamma(q/2)\Gamma((a+1)/2)}{\sqrt\pi\,\Gamma((q+a)/2)}
 \quad(a>-1).
\]
The ratio $(\arcsin z-z)/z^3$ is bounded on $0<z\le1$. The displayed moments with $a=1,3$, together with Stirling's formula for the gamma ratios, therefore give \cref{eq:gammaasymptotic}.
\end{proof}

The mechanism is near-orthogonality of independent random histories: $D_T\to0$, so their angle approaches $90^\circ$. An acute city angle then understates their separation, while an obtuse city angle overstates it. The band decomposition accounts for the successful acute pairs; it is not a separate cause of this ordering. The conditional limits $1$ and $0$ also hold without northern selection, by the same argument. Selection does not cause the conditional limit of one. Its effect is to change the unconditional limit from $1/2$ to $\gamma_q$, about $56\%$ for $q=19$ (20 assets). For any fixed $0<\eta<1$, the proof also gives $\Prob_T^+(\theta_T\ge\phi_T\mid C_T\ge\eta)\to1$.

\begin{remark}[Finite histories and the scope of the null model]\label{rem:northernnullscope}
For $T=2$, swapping the two dates of one history preserves its mean, $H_T$, and $J_T$, while reversing $K_T$. Since $K_T$ has no atom at zero, both sign probabilities in \cref{eq:northernsign} are then exactly one-half. In the special case $q=T=2$, the ordering itself is exact almost surely: $E_T=J_T$. Indeed, \cref{sec:sharp} gives $K=\pm BC$ and hence $D=(r_Xr_Y\pm B)C$. Cauchy--Schwarz gives $B\le1-r_Xr_Y$, and $r_Xr_Y+B=1$ only when $r_X=r_Y$. This equality has probability zero for independent continuous histories, so the coefficient of $C$ in $D$ is strictly less than one almost surely. This yields both conditional ordering probabilities $1$ and $0$ already at $q=T=2$.

For general $T$, reflecting residuals about a fixed mean would send $X_t$ to $2m_X-X_t$, which need not have unit norm. Central symmetry of the residual law therefore cannot simply be assumed on a product of spheres.

Northern selection means positive empirical average IC, with no fixed positive margin and no requirement of positive IC at each date. It selects half of a zero-mean population and is not a model of persistent forecasting skill. The conditional law is specified separately at each $T$; it does not condition an infinite history to remain north at every prefix. Shrinking retained lengths explain this particular null model but are not necessary for separation preservation. \Cref{thm:populationordering} gives the appropriate margin condition when population means are nonzero. All these angular statements concern signed correlation, not an absolute-correlation novelty threshold.
\end{remark}

The limits above do not say how long a history must be. The next theorem replaces them by an explicit error bound that holds at every history length $T$ and does not worsen as the number of assets grows.

\begin{theorem}[A finite-sample ordering bound uniform in dimension]\label{thm:uniformorderingrate}
For every $q\ge2$ and $T\ge2$ under the uniform northern-history law, which selects both histories for positive average IC, let $E_T=\{\theta_T\ge\phi_T\}$ and $J_T=\{C_T\ge0\}=\{\phi_T\le\pi/2\}$. Then
\begin{equation}\label{eq:uniformorderingrate}
 \Prob_T^+(E_T\mathbin{\triangle}J_T)
 \le b_T:=\frac1T+\left(1-\frac1T\right)\sqrt{\frac{\pi}{2T}}.
\end{equation}
In particular,
\begin{align}
 \Prob_T^+(\theta_T<\phi_T\mid J_T)&\le\frac{b_T}{\gamma_q},\label{eq:uniformacuterate}\\
 \Prob_T^+(\theta_T\ge\phi_T\mid C_T<0)&\le\frac{b_T}{1-\gamma_q}.\label{eq:uniformobtuserate}
\end{align}
Equivalently, the preservation probability for positive-IC pairs with city angle at most $90^\circ$ obeys
\[
 \boxed{\Prob_T^+(\theta_T\ge\phi_T\mid\phi_T\le\pi/2)
 \ge \max\left\{0,\,1-\frac{b_T}{\gamma_q}\right\}.}
\]
Both errors are $O(T^{-1/2})$ uniformly over $q\ge2$. Thus the event-equivalence and conditional ordering limits remain valid for any dimension sequence $q=q_T\ge2$, and
\begin{equation}\label{eq:growingdimensionordering}
 \left|\Prob_T^+(E_T)-\gamma_{q_T}\right|\le b_T.
\end{equation}
If also $q_T\to\infty$, the unconditional ordering probability tends to one-half.
\end{theorem}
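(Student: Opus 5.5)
The plan is to reduce the symmetric difference to a small-ball event for the city correlation $C_T$ and then bound that event by a density-times-first-moment argument, keeping every constant free of $q$. By \cref{eq:ordering}, $E_T=\{D_T\le C_T\}$. On $E_T\setminus J_T$ we have $D_T\le C_T<0$, and on $J_T\setminus E_T$ we have $D_T>C_T\ge0$. In both cases $|C_T|\le|D_T|$. Writing $D_T=r_Xr_YC_T+K_T$ from \cref{eq:decomposition}, the event $E_T\mathbin{\triangle}J_T$ is therefore contained in
\[
 \{(1-r_Xr_Y)\,|C_T|\le|K_T|\}.
\]
I will bound the probability of this event. Two features of the target bound $b_T=\tfrac1T+(1-\tfrac1T)\sqrt{\pi/(2T)}$ guide the split:
\begin{itemize}
\item the term $1/T$ matches $\E_T^+r_X^2=1/T$ from \cref{eq:northernmoments}, so I expect it to absorb the event where the retained lengths are large (for instance via $\E_T^+ r_Xr_Y\le\E_T^+r_X^2=1/T$);
\item the factor $(1-1/T)$ matches the residual energy $1-r^2$ appearing in $\E[K_T^2\mid X]=(1-r_X^2)/(qT)$ from the proof of \cref{thm:northernnull}.
\end{itemize}

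The key steps, in order, are as follows.

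\emph{Step 1: density of $C_T$.} By the proof of \cref{cor:northernseparation}, $c_X,c_Y$ are independent uniform northern directions. Hence $C_T$ has density $2(1-u/\pi)f_q(u)$ in the angle variable. In the $C$ variable, that density is at most $2\,\Gamma(q/2)/(\sqrt\pi\,\Gamma((q-1)/2))\le 2\sqrt{q/(2\pi)}$ near zero. Consequently, for any independent random threshold $s\ge0$, $\Prob(|C_T|\le s)\le 2\sqrt{2q/\pi}\,\E s$.

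\emph{Step 2: moment of the threshold.} I take $s=|K_T|/(1-r_Xr_Y)$, or more simply compare $|C_T|$ with $|D_T|$ directly. Its first moment is then controlled by Cauchy--Schwarz against the exact second moment $\E_T^+D_T^2=1/(qT)$, which is unchanged by northern selection because of the sign-flip argument in \cref{thm:northernnull}.

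\emph{Step 3: cancellation of $q$.} The product $\sqrt q\cdot 1/\sqrt{qT}$ is what makes the bound uniform in dimension. I would then tune the constants (using the conditional variance $(1-r_X^2)/(qT)$ rather than the crude $1/(qT)$, and the $1/T$ term for the mean part) to land on $b_T$.

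\emph{Step 4: corollaries.} Once \cref{eq:uniformorderingrate} is established, the remaining claims are short:
\begin{itemize}
\item dividing by $\Prob_T^+(J_T)=\gamma_q$ and by $1-\gamma_q$ gives \cref{eq:uniformacuterate} and \cref{eq:uniformobtuserate}, hence the boxed lower bound;
\item the inequality $|\Prob(E_T)-\Prob(J_T)|\le\Prob(E_T\mathbin{\triangle}J_T)$ gives \cref{eq:growingdimensionordering};
\item combined with \cref{eq:gammaasymptotic}, this yields the limit one-half when $q_T\to\infty$.
\end{itemize}

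The main obstacle is Step 1's independence requirement. The threshold built from $K_T$ (or $D_T$) is not independent of $C_T$, since both depend on the same histories. I would try first to condition on the $X$ history and on the "shape" of the $Y$ history modulo a Haar rotation. Concretely, write $Y_t=R\,Y'_t$ with $R$ Haar-distributed and independent of $Y'$, restricted to the northern selection. Given $X$ and $Y'$, $D_T=\tfrac1T\sum_t\ip{X_t}{RY'_t}$ and $C_T=\ip{c_X}{Rc_{Y'}}$ are then functions of a single random rotation. The hope is that, conditionally, the law of $C_T$ still has density bounded by $O(\sqrt q)$ uniformly in the conditioning. If that conditional density bound fails, the fallback is to work on the event $\{|C_T|\le|D_T|\}$ through a layer-cake integral, $\int_0^1\Prob(|C_T|\le s<|D_T|)\,ds$, bounding each slice separately. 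This fallback would likely inflate the constant, but it would keep the $O(T^{-1/2})$ rate uniform in $q$.
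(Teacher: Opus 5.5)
\textbf{There is a genuine gap: the decoupling step is missing, and the proposed fix does not supply it.}

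Your reduction of $E_T\mathbin{\triangle}J_T$ to a near-tie event is correct, and so is Step~4's use of $|\Prob(E_T)-\Prob(J_T)|\le\Prob(E_T\mathbin{\triangle}J_T)$.

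\emph{Where the argument fails.} Steps~1--2 need the threshold, $|D_T|$ or $|K_T|/(1-r_Xr_Y)$, to be independent of $C_T$. As you note, it is not. Conditioning on $X$ and on $Y$ modulo a Haar rotation $R$ does not repair this. Both $C_T=\ip{c_X}{Rc_{Y'}}$ and $D_T$ remain functions of the same $R$. A bounded conditional density for $C_T$ therefore still cannot be combined with a moment bound on a threshold that moves with it, and the unconditional identity $\E_T^+D_T^2=1/(qT)$ never enters usefully. The layer-cake fallback is also miscomputed: $\int_0^1\Prob(|C_T|\le s<|D_T|)\,ds=\E(|D_T|-|C_T|)_+$, which is not a probability.

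\emph{The missing idea.} Condition the other way round, on the two mean vectors $(m_X,m_Y)$.
\begin{itemize}
\item This fixes $C_T$, $r_X$ and $r_Y$. The residual histories $a_t=X_t-m_X$ and $b_t=Y_t-m_Y$ stay independent and exchangeable, sum to zero, and are rotationally invariant about their own cities. Their conditional laws are unaffected by northern selection.
\item Exchangeability gives $\E[K_T\mid m_X,m_Y]=0$ and $\Var(K_T\mid m_X,m_Y)=\tr(M_XM_Y)/(T-1)$, where $M_X=\E[T^{-1}\sum_ta_ta_t^\top\mid m_X]$.
\item Write each $M/(1-r^2)$ as a convex combination of $cc^\top$ and $(\id_q-cc^\top)/(q-1)$. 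This bounds the conditional variance by $(1-r_X^2)(1-r_Y^2)\bigl(C_T^2+1/(q-1)\bigr)/(T-1)$.
\item Only then integrate a function of $C_T$ alone against its selected law. Use that even functions of $C_T$ have their full-sphere expectation, that $A_q\le\sqrt{(q-1)/(2\pi)}$ for $q\ge3$, and the exact value $(1+T)^{-1/2}$ for $q=2$.
\end{itemize}

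\textbf{Your tools do not produce the stated constant $b_T$.} Passing to the two-sided event $\{(1-r_Xr_Y)|C_T|\le|K_T|\}$ discards the sign information that permits the one-sided Cantelli inequality. That information is $K_T>(1-r_Xr_Y)C_T$ when $C_T>0$, and $K_T\le(1-r_Xr_Y)C_T$ when $C_T<0$.
\begin{itemize}
\item With Cantelli and $(1-r_X^2)(1-r_Y^2)\le(1-r_Xr_Y)^2$, the conditional error is at most $\frac1T+\frac{1-1/T}{1+T(q-1)C_T^2}$. Its integral is exactly $b_T$.
\item The analogous Chebyshev computation on the two-sided event gives about $\frac1{T-1}+\sqrt{8/(\pi(T-1))}$, which always exceeds $b_T$. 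Compare the looser constant in \cref{thm:dependentordering}, which uses exactly such a two-sided second-moment argument.
\end{itemize}

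Your guesses about the source of the constants are therefore misplaced. The $1/T$ is the Cantelli floor produced by the $C_T^2$ (parallel--parallel) part of the conditional variance, not $\E_T^+r_X^2$. The factor $1-1/T$ comes from algebra, not from an average residual energy.

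Finally, the claims that are uniform in $q$ need $1/2<\gamma_q\le3/4$ for every $q\ge2$. This bounds the denominators in \cref{eq:uniformacuterate,eq:uniformobtuserate} below by $1/2$ and $1/4$, and Step~4 does not provide it.
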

\begin{proof}
Condition first on the two mean vectors and put $a_t=X_t-m_X$, $b_t=Y_t-m_Y$. The two residual histories remain independent and each is exchangeable with sum zero. Define
\[
 M_X=\E\left[\frac1T\sum_ta_ta_t^\top\,\middle|\,m_X\right],
 \qquad
 M_Y=\E\left[\frac1T\sum_tb_tb_t^\top\,\middle|\,m_Y\right].
\]
Every conditional residual mean is zero. For distinct dates, exchangeability and the zero sum give $\E[a_ta_s^\top\mid m_X]=-M_X/(T-1)$, with the analogous identity for $Y$. Expanding $K_T^2$ therefore yields
\begin{equation}\label{eq:conditionalmeanvariance}
 \E[K_T\mid m_X,m_Y]=0,\qquad
 v:=\Var(K_T\mid m_X,m_Y)=\frac{\tr(M_XM_Y)}{T-1}.
\end{equation}
These conditional laws are unchanged by northern selection once the means are specified.

Put $\tau_X=1-r_X^2$, $\tau_Y=1-r_Y^2$, and $C=c_X^\top c_Y$. Rotational invariance about $c_X$ implies that $M_X/\tau_X$ is a convex combination of
\[
 P_X=c_Xc_X^\top,
 \qquad Q_X=\frac{\id_q-P_X}{q-1},
\]
and similarly for $Y$. The traces of the four possible products are
\[
 C^2,\qquad \frac{1-C^2}{q-1},\qquad
 \frac{1-C^2}{q-1},\qquad
 \frac{q-2+C^2}{(q-1)^2}.
\]
Each is at most $C^2+1/(q-1)$. Degenerate zero-residual cases satisfy the same bound directly. Consequently
\[
 v\le\frac{\tau_X\tau_Y}{T-1}\left(C^2+\frac1{q-1}\right).
\]
For $C>0$, the event $E_T\triangle J_T$ requires $K_T>(1-r_Xr_Y)C$; for $C<0$, it requires $K_T\le(1-r_Xr_Y)C$. Apply the appropriate one-sided Cantelli inequality~\cite[Exercise~1.6.4]{durrett} to \cref{eq:conditionalmeanvariance}. Since
\[
 \tau_X\tau_Y\le(1-r_Xr_Y)^2
\]
by Cauchy--Schwarz, the conditional error probability is bounded by
\begin{equation}\label{eq:conditionalorderingbound}
 \frac{v}{v+(1-r_Xr_Y)^2C^2}
 \le\frac{C^2+1/(q-1)}{TC^2+1/(q-1)}
 =\frac1T+\frac{1-1/T}{1+T(q-1)C^2}.
\end{equation}
The null event $C=0$ does not affect the integrated bound.

The density of $C$ after northern selection is
\[
 g_q(c)=\left(1+\frac2\pi\arcsin c\right)h_q(c),\qquad
 h_q(c)=\frac{\Gamma(q/2)}{\sqrt\pi\,\Gamma((q-1)/2)}(1-c^2)^{(q-3)/2}.
\]
Here $h_q$ is the full-sphere inner-product density on $(-1,1)$. Because its multiplier differs from one by an odd function, every even function of $C$ has the same expectation before and after northern selection. For $q\ge3$, $h_q(c)\le A_q:=\Gamma(q/2)/(\sqrt\pi\,\Gamma((q-1)/2))$, whence
\[
 \E_T^+\frac1{1+T(q-1)C^2}
 \le A_q\int_{-\infty}^{\infty}\frac{dc}{1+T(q-1)c^2}
 =\frac{\pi A_q}{\sqrt{T(q-1)}}
 \le\sqrt{\frac\pi{2T}}.
\]
The last step uses $\Gamma(x+1/2)^2\le\Gamma(x)\Gamma(x+1)=x\Gamma(x)^2$, which follows from Cauchy--Schwarz applied to the gamma integrals. For $q=2$, substituting $c=\cos u$ in the arcsine density gives the expectation exactly as $(1+T)^{-1/2}$, which obeys the same upper bound. Integrating \cref{eq:conditionalorderingbound} proves \cref{eq:uniformorderingrate}.

Finally, $1/2<\gamma_q\le3/4$ for all $q\ge2$. To see the upper bound, couple $|Z_q|=|G_1|/(\sum_{j=1}^qG_j^2)^{1/2}$ using iid standard Gaussians: it decreases with $q$, as does $\E\arcsin|Z_q|$. The lower bound follows from its strict positivity. Thus the denominators in \cref{eq:uniformacuterate,eq:uniformobtuserate} are at least $1/2$ and $1/4$, respectively. All ordering bounds are uniform in $q$, and \cref{eq:gammaasymptotic} gives the final assertion.
\end{proof}

The ordering can fail only when the small residual $K_T$ outweighs the margin $(1-r_Xr_Y)C_T$, which is likely only when the two cities are close to orthogonal; the bound $b_T$ controls the probability of that near-tie. Uniformity in $q$ means that the same statement covers a universe of 20 assets and one of 2,000.

This upper bound is rigorous, but its constants are conservative. It gives no matching lower rate and does not predict the observed error percentage. For example, at $T=100$ it gives $b_T\simeq0.1341$; with $q=19$, the acute-pair bound is about $0.2394$. The theorem controls the ordering error even when the dimension grows. By contrast, the joint Gaussian limit and the residual-sign statements in \cref{thm:northernnull} are proved for fixed dimension only; we do not assert them as central limit theorems in growing dimension.

\subsection{Temporal dependence within rotationally symmetric histories}\label{sec:dependentnull}

The next result keeps the two histories independent of each other but allows arbitrary dependence across dates within each one. It assumes \emph{pathwise rotational symmetry}: the law of a history is unchanged when one common orthogonal transformation is applied to all of its dates. This is a stronger requirement than uniform marginals at each date. The result enlarges the class of sampling laws covered by the ordering bound; for independent dates, \cref{thm:uniformorderingrate} remains sharper.

Real signal histories are serially dependent: forecasts persist and forward-return targets overlap. The next theorem shows how much of the ordering result survives.

\begin{theorem}[Temporal dependence under pathwise rotational symmetry]\label{thm:dependentordering}
Use $H_T$, $E_T$, $J_T$, and $\gamma_q$ as above, and let $q\ge2$. Suppose the entire histories $(X_1,\ldots,X_T)$ and $(Y_1,\ldots,Y_T)$ are independent, have unit rows, and have nonzero empirical means almost surely. Suppose each history's law is invariant under applying any one orthogonal transformation to all of its dates. Define their expected temporal Gram matrices by
\[
 R_X(s,t)=\E(X_s^\top X_t),\qquad R_Y(s,t)=\E(Y_s^\top Y_t),
\]
and set
\[
 a_T=\frac{\tr(R_XR_Y)}{T^2},\qquad
 \mathcal B(a)=\min\left\{1,\ a+\sqrt{\frac8\pi}\sqrt{a(1-a)}\right\}.
\]
Then $0\le a_T\le1$, $\Prob_T^+(J_T)=\gamma_q$, and
\begin{equation}\label{eq:dependenterror}
 \Prob_T^+(E_T\mathbin{\triangle}J_T)\le \mathcal B(a_T).
\end{equation}
Consequently,
\begin{align}
 \Prob_T^+(E_T\mid J_T)&\ge\max\{0,1-\mathcal B(a_T)/\gamma_q\},\label{eq:dependentacute}\\
 \Prob_T^+(E_T\mid J_T^c)&\le\min\{1,\mathcal B(a_T)/(1-\gamma_q)\},\\
 |\Prob_T^+(E_T)-\gamma_q|&\le \mathcal B(a_T).
\end{align}
In particular, for fixed $q$, if $a_T\to0$, the respective probabilities converge to $1$, $0$, and $\gamma_q$. No independence, exchangeability, stationarity, or mixing across dates is required for the finite-sample bound.
\end{theorem}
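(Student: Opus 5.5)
The plan is to reduce everything to a single Haar rotation and then to apply the dimension-free density bound on $C$ already used in \cref{thm:uniformorderingrate}.

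\textbf{Step 1: the preliminary facts.} The bound $0\le a_T\le1$ follows because $R_X$ and $R_Y$ are positive semidefinite Gram matrices with unit diagonal. Hence $\tr(R_XR_Y)\ge0$, and $\tr(R_XR_Y)\le\norm{R_X}_F\norm{R_Y}_F\le T\cdot T$.

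For northern selection I reuse the sign-reversal argument of \cref{thm:northernnull}. Replacing a whole history by its negative is an orthogonal map applied to all dates, so it preserves the law. It also preserves $D$, $C$, and hence $E_T$ and $J_T$, while it permutes the four hemisphere patterns. So $\Prob_T^+$ of $E_T$, $J_T$ and $E_T\triangle J_T$ equal their unconditioned values. Pathwise rotational symmetry makes $c_X$ and $c_Y$ independent and uniform. The angle computation in \cref{cor:northernseparation} then gives $\Prob_T^+(J_T)=\gamma_q$. The three consequences follow from \cref{eq:dependenterror} exactly as \cref{eq:uniformacuterate,eq:uniformobtuserate} followed from \cref{eq:uniformorderingrate}.

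\textbf{Step 2: represent $Y$ through a Haar rotation and compute moments.} By invariance, I may write $Y_t=OY'_t$, where $O$ is Haar on $O(q)$ and independent of $(X,Y')$. Conditional on $(X,Y')$, set $M=T^{-1}\sum_tY'_tX_t^\top$. Then
\[
 D=\tr(OM),\qquad C=\ip{c_X}{Oc_{Y'}},\qquad r_Xr_{Y'}C=\ip{m_X}{Om_{Y'}}.
\]
Using $\E[O_{ab}O_{cd}]=\delta_{ac}\delta_{bd}/q$, I get the following moments:
\[
 \E D^2=\frac{\tr(R_XR_Y)}{qT^2}=\frac{a_T}{q},\qquad \E C^2=\frac1q.
\]
The conditional covariance of $D$ and $C$ is $\ip{m_X}{m_{Y'}\text{-aligned}}$-type, namely $r_Xr_{Y'}/q$ times a cosine.

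\textbf{Step 3: regress $D$ on $C$.} Conditional on $(X,Y')$, I decompose $D=\rho C+N$, where $N$ is uncorrelated with $C$. The ordering error $E_T\triangle J_T$ requires $N>(1-\rho)C$ when $C>0$, or $N\le(1-\rho)C$ when $C<0$. In either case it forces $|C|\,|1-\rho|\le|N|$, or else it occurs on a set where $\rho\ge1$.

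\textbf{Step 4: integrate against the density bound.} The density of $C$ is bounded by $A_q\le\sqrt{q/(2\pi)}$, as in the proof of \cref{thm:uniformorderingrate}. Heuristically this gives
\[
 \Prob\bigl(|C|\le s\bigr)\le 2A_qs,
 \qquad
 \Prob(\text{error})\lesssim \Prob(\text{bad }\rho)+2A_q\,\E\frac{|N|}{1-\rho}.
\]
With Cauchy--Schwarz and $\E N^2\le(\E D^2)(1-\text{explained fraction})\approx a(1-a)/q$, the second term becomes $\sqrt{8/\pi}\sqrt{a(1-a)}$. The bad-$\rho$ piece should be controlled by Markov's inequality applied to $\E D^2/\E C^2=a$.

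\textbf{Main obstacle.} Step 4 is the delicate step. The issue is that $C$ and $N$ are not independent conditional on $(X,Y')$, so a plain density bound for $C$ does not apply directly to the event $\{|C|\le|N|/(1-\rho)\}$.

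My first attempt is to condition additionally on the component of $O$ that fixes $c_X$ and $Oc_{Y'}$. Then I would show that $N$'s conditional second moment is still bounded by $a(1-a)/q$ after averaging, and that the marginal density of $C$ remains $h_q$.

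If that fails, I would drop the regression altogether. The fallback is to bound $\Prob(|C|\le|D|)$ directly by splitting on $\{|D|\ge s\}$ and optimizing $s$. This yields a bound of the form $a+c\sqrt a$, which would still give convergence when $a_T\to0$, though with a weaker constant than $\mathcal B$.
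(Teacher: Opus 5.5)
Step~4 is where the proof has to happen, and as written neither of your routes proves $\mathcal B(a_T)$.

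\textbf{The regression route.} Conditioning on $(X,Y')$ leaves $C=\ip{c_X}{Oc_{Y'}}$ random through $O$, and $C$ depends on the regression residual $N$. So the density bound for $C$ cannot be applied to $\{|C|\le|N|/(1-\rho)\}$, and the factor $1/(1-\rho)$ is not controlled. Your variance claim is also unsupported. A direct Haar computation gives $\rho=T^{-1}\sum_t\ip{X_t}{c_X}\ip{Y'_t}{c_{Y'}}$, not $r_Xr_{Y'}$ times a cosine, and $\E[N^2\mid X,Y']=(\norm{M}_F^2-\rho^2)/q$. Nothing there produces $a(1-a)/q$ or the constant $\sqrt{8/\pi}$.

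\textbf{The fallback.} It is weaker than you claim. With only the unconditional moment $\E D^2=a_T/q$, the split gives $a_T/(qs^2)+2A_qs$. Its minimum over $s$ is of order $a_T^{1/3}$, not $a+c\sqrt a$. That still yields the limits when $a_T\to0$, but not the stated finite-sample bound.

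\textbf{The missing idea.} Condition on the two means rather than on $(X,Y')$. Then $C$ is \emph{fixed}, and the remaining randomness is the rotations fixing each city.
Invariance under the stabilizer of $c_X$ gives $\E[X_sX_t^\top\mid m_X]=A_X(s,t)P_X+B_X(s,t)(\id_q-P_X)/(q-1)$. Here $A_X$ and $B_X$ are positive semidefinite temporal matrices depending only on $r_X$; the same holds for $Y$.
Conditional independence then yields $\E[D^2\mid m_X,m_Y]\le h\,(C^2+1/(q-1))$, where $h=\tr(F_XF_Y)/T^2$ and $F=A+B$. Since $h$ depends only on the radii, it is independent of $C$.
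Apply Markov conditionally on the event $\{|D|\ge|C|\}\supseteq E_T\mathbin{\triangle}J_T$. This gives the bound $\min\{1,h+h/((q-1)C^2)\}=h+(1-h)\min\{1,u^2/C^2\}$, with $u^2=h/((1-h)(q-1))$.
Because $\int_{\R}\min\{1,u^2/c^2\}\,dc=4u$ and $A_q\le\sqrt{(q-1)/(2\pi)}$, this integrates to exactly $h+\sqrt{8/\pi}\sqrt{h(1-h)}$. Jensen with $\E h=a_T$ finishes.
The $C^2$ term plays the role of your $\rho C$. Because $C$ is held fixed, you never need the joint law of $C$ and a noise term.

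\textbf{Step~1.} Reversing a whole history maps $(D,C)$ to $(-D,-C)$, so it preserves neither $E_T$ nor $J_T$. Indeed $\Prob_T^+(J_T)=\gamma_q>\tfrac12=\Prob(J_T)$, which contradicts your own next sentence.
What is invariant is $\{|D|\ge|C|\}$, or equivalently the evenness in $C$ of the conditional bound, and that is how selection should be handled. The mismatch event itself is invariant only off the tie set $\{D=C\}$. That set can carry positive probability, for example when both histories are constant with positive probability.

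\textbf{The case $q=2$.} The density of $C$ is the unbounded arcsine density, and $\Prob(|C|\le s)=\tfrac2\pi\arcsin s>2A_2s$. So this case needs the separate bound $\E\min\{1,u^2/C^2\}\le4u/\pi$.
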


Three points about the scope of the theorem deserve emphasis. Its assumptions concern the complete law of a history: uniform marginal directions at each date do not, alone, imply pathwise rotational symmetry. Independence of the two entire histories remains a substantive assumption. And the theorem does not extend the residual-sign central limit theorem in \cref{thm:northernnull}.

\begin{proof}
Rotational symmetry makes each mean direction uniform and independent of its length. After northern selection the directions remain independent and are uniform on the northern hemisphere, so $\Prob_T^+(J_T)=\gamma_q$. Conditional matrices are taken in rotationally equivariant versions.

Condition on $m_X$ and put $P_X=c_Xc_X^\top$. Invariance under the orthogonal transformations fixing $c_X$ gives
\[
 \E[X_sX_t^\top\mid m_X]
 =A_X(s,t)P_X+\frac{B_X(s,t)}{q-1}(\id_q-P_X),
\]
where the $T\times T$ matrices $A_X$ and $B_X$ are the conditional Gram matrices of the components parallel and perpendicular to $c_X$. Both are positive semidefinite. Their sum
\[
 F_X(s,t)=\E[X_s^\top X_t\mid m_X]
\]
has trace $T$. These scalar temporal matrices depend only on $r_X$, by rotational symmetry. Define $A_Y,B_Y,F_Y$ analogously. Conditional independence of the two histories yields, with $C=c_X^\top c_Y$,
\begin{align*}
 T^2\E[D_T^2\mid m_X,m_Y]
 ={}&C^2\tr(A_XA_Y)\\
 &+\frac{1-C^2}{q-1}\{\tr(A_XB_Y)+\tr(B_XA_Y)\}\\
 &+\frac{q-2+C^2}{(q-1)^2}\tr(B_XB_Y).
\end{align*}
Every trace product is nonnegative, and each displayed coefficient is at most $C^2+1/(q-1)$. Therefore, setting
\[
 h=\frac{\tr(F_XF_Y)}{T^2}\in[0,1],
\]
we obtain
\begin{equation}\label{eq:dependentsecond}
 \E[D_T^2\mid m_X,m_Y]\le h\left(C^2+\frac1{q-1}\right).
\end{equation}
The mismatch event $E_T\triangle J_T$ implies $|D_T|\ge|C|$. The event $C=0$ is null. Markov's inequality and \cref{eq:dependentsecond} give, for $0<h<1$,
\begin{align*}
 \Prob(E_T\triangle J_T\mid m_X,m_Y)
 &\le\min\left\{1,h+\frac{h}{(q-1)C^2}\right\}\\
 &=h+(1-h)\min\left\{1,\frac{h}{(1-h)(q-1)C^2}\right\}.
\end{align*}
These conditional laws are unchanged by northern selection once the means are specified. Also $h$ depends only on the two radii, and is independent of $C$ under the selected law.

The selected density of $C$ is
\[
 g_q(c)=\left(1+\frac2\pi\arcsin c\right)f_q(c),\qquad
 f_q(c)=\frac{\Gamma(q/2)}{\sqrt\pi\,\Gamma((q-1)/2)}(1-c^2)^{(q-3)/2}.
\]
Thus every even function of $C$ has the same expectation before and after northern selection. For $q\ge3$, let $A_q=\Gamma(q/2)/(\sqrt\pi\,\Gamma((q-1)/2))$. Since $f_q\le A_q$,
\[
 \E_T^+\min\{1,u^2/C^2\}
 \le A_q\int_{\mathbb R}\min\{1,u^2/c^2\}\,dc=4A_qu.
\]
Log-convexity of $\Gamma$ gives $A_q\le\sqrt{(q-1)/(2\pi)}$. Substituting $u^2=h/((1-h)(q-1))$ bounds the angularly integrated error by
\[
 h+\sqrt{8/\pi}\sqrt{h(1-h)}.
\]
For $q=2$, the full-sphere density is the arcsine density and, for $0\le u\le1$,
\[
 \E\min\{1,u^2/C^2\}
 =\frac2\pi\{\arcsin u+u\sqrt{1-u^2}\}\le\frac4\pi u.
\]
For $u>1$ the same inequality holds because the expectation is one. Since $4/\pi\le\sqrt{8/\pi}$, the stated bound also holds for $q=2$. The cases $h=0,1$ follow directly or by continuity.

Finally, northern selection does not affect the radii or $F_X,F_Y$. Independence and conditional expectation imply
\[
 \E_T^+h=\frac{\tr((\E F_X)(\E F_Y))}{T^2}
 =\frac{\tr(R_XR_Y)}{T^2}=a_T.
\]
Concavity of $h\mapsto\sqrt{h(1-h)}$ and Jensen's inequality prove \cref{eq:dependenterror}, also using the trivial bound one. The remaining assertions follow by dividing the appropriate mismatch probabilities by $\gamma_q$ or $1-\gamma_q$.
\end{proof}

The quantity $a_T$ summarizes how much the two histories repeat themselves over time: it equals $1/T$ when dates are independent and one when neither history ever changes. The theorem shows that the ordering conclusion needs $a_T$ to be small, not independence across dates. \Cref{cor:dependentmoments} expresses $a_T$ through lag correlations.

\begin{corollary}[Second moments and the temporal dependence factor]\label{cor:dependentmoments}
Under \cref{thm:dependentordering},
\begin{equation}\label{eq:dependentmoments}
 \E D_T=0,\qquad \E D_T^2=\E_T^+D_T^2=\frac{a_T}{q}\le\frac1q.
\end{equation}
The zero-mean identity refers to the unconditioned law. If the histories are stationary, put $\alpha_X(h)=\E\ip{X_0}{X_h}$ and $\alpha_Y(h)=\E\ip{Y_0}{Y_h}$. Then
\begin{equation}\label{eq:dependentlags}
 a_T=\frac{L_T}{T},\qquad
 L_T=1+2\sum_{h=1}^{T-1}\left(1-\frac hT\right)\alpha_X(h)\alpha_Y(h).
\end{equation}
If $\sum_{h\ge1}|\alpha_X(h)\alpha_Y(h)|<\infty$, then $L_T$ is bounded, the mismatch bound is $O(T^{-1/2})$, and the conditional preservation limits remain $1$ on $J_T$ and $0$ on its complement.
\end{corollary}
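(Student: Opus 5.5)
The plan is to compute the first two moments of $D_T$ directly from independence of the two histories and pathwise rotational symmetry. After that, the stationary formula reduces to counting pairs of dates on each diagonal. The final assertion then comes from the bound $\mathcal B$ in \cref{thm:dependentordering}.

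\emph{First moment.} Applying the orthogonal map $-\id_q$ to every date of $X$ preserves its law, so $\E X_t=0$ for every $t$. Independence then gives $\E\ip{X_t}{Y_t}=\ip{\E X_t}{\E Y_t}=0$, and hence $\E D_T=0$ under the unconditioned law.

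\emph{Second moment.} Write
\[
 T^2D_T^2=\sum_{s,t}X_s^\top Y_sY_t^\top X_t,
\]
and take expectations using independence:
\[
 \E\bigl[X_s^\top Y_sY_t^\top X_t\bigr]=\tr\bigl(\E[X_tX_s^\top]\,\E[Y_sY_t^\top]\bigr).
\]
Invariance under every orthogonal map forces each cross-date second-moment matrix to be a multiple of the identity:
\[
 \E[X_tX_s^\top]=\frac{R_X(s,t)}{q}\,\id_q .
\]
To see this, note that $M=\E[X_tX_s^\top]$ satisfies $OMO^\top=M$ for all orthogonal $O$, so $M$ is scalar, and its trace is $R_X(s,t)$. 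Summing over $s,t$ gives
\[
 \E D_T^2=\frac{\sum_{s,t}R_X(s,t)R_Y(s,t)}{qT^2}=\frac{\tr(R_XR_Y)}{qT^2}=\frac{a_T}{q},
\]
where the middle step uses symmetry of $R_Y$. The inequality $a_T\le1$ is already part of \cref{thm:dependentordering}, so $\E D_T^2\le 1/q$.

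\emph{Equality under northern selection.} Flipping the sign of an entire history preserves its law and $D_T^2$, and it moves the mean to the opposite hemisphere. This is the same argument used in \cref{thm:northernnull}. The four sign patterns of the two means are therefore equally likely and carry equal conditional expectation of $D_T^2$, so $\E_T^+D_T^2=\E D_T^2$. The event where a mean lies on the equator is null, because its direction is uniform.

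\emph{Stationary case.} Under stationarity, $R_X(s,t)=\alpha_X(|s-t|)$ and likewise $R_Y(s,t)=\alpha_Y(|s-t|)$, with $\alpha(0)=1$. Grouping the terms of $\sum_{s,t}\alpha_X(|s-t|)\alpha_Y(|s-t|)$ by lag $h$, there are $T$ pairs at $h=0$ and $2(T-h)$ pairs at each lag $h\ge1$. Dividing by $T^2$ gives $a_T=L_T/T$.

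\emph{Summable case.} If $\sum_{h\ge1}|\alpha_X(h)\alpha_Y(h)|<\infty$, then
\[
 |L_T|\le 1+2\sum_{h\ge1}|\alpha_X(h)\alpha_Y(h)|=:L<\infty .
\]
Hence $a_T\le L/T$, and since $\mathcal B(a)\le a+\sqrt{8/\pi}\sqrt a$,
\[
 \mathcal B(a_T)\le \frac{L}{T}+\sqrt{\frac{8L}{\pi T}}=O(T^{-1/2}).
\]
The bounds \cref{eq:dependentacute} and its companions in \cref{thm:dependentordering}, with $\gamma_q$ fixed and $\gamma_q\in(1/2,3/4]$, then give conditional preservation limits $1$ on $J_T$ and $0$ on $J_T^c$.

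The main obstacle is the northern-selection equality. One must check that a global sign flip of one history is a law-preserving bijection that exchanges the hemisphere events without changing $D_T^2$. Everything else is bookkeeping.
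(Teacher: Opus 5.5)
Your proof is correct and follows the paper's argument essentially step for step. It obtains zero one-date means and scalar cross-date second-moment matrices from pathwise rotational symmetry, uses independence to get $\E[(X_s^\top Y_s)(X_t^\top Y_t)]=R_X(s,t)R_Y(s,t)/q$, reverses entire histories to show invariance under northern selection, groups by lag under stationarity, and applies \cref{thm:dependentordering} for the rate and limits, while also filling in details the paper leaves implicit, such as the explicit bound $\mathcal B(a_T)\le L/T+\sqrt{8L/(\pi T)}$ and the nullity of the equator event.
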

\begin{proof}
Pathwise rotational symmetry gives zero one-date means and $\E[X_sX_t^\top]=R_X(s,t)\id_q/q$. Independence of the two histories gives
\[
 \E[(X_s^\top Y_s)(X_t^\top Y_t)]=R_X(s,t)R_Y(s,t)/q.
\]
Summing over dates proves the unconditioned moment identities. Reversing either entire history leaves $D_T^2$ unchanged and interchanges the four equally likely signs of the two mean ICs, so its second moment is unchanged by northern selection. Stationarity groups the temporal Gram sum by lag, proving \cref{eq:dependentlags}. Absolute summability bounds $L_T$, and \cref{thm:dependentordering} gives the rate and limits.
\end{proof}

Whatever the temporal dependence, this null model caps $q\,\E D_T^2$ at one. An observed library whose average $qD_{ij}^2$ exceeds one therefore cannot be matched by adjusting serial dependence within the model; \cref{eq:empiricalmomentdiagnostic} applies this descriptive check to the competition sample.

Northern selection need not preserve $\E D_T=0$: two constant histories with independent uniform directions become independent uniform northern directions after selection, giving a positive mean inner product. The second-moment identity in \cref{eq:dependentmoments} is invariant under these sign selections, including any fixed choice of northern or southern hemispheres, but need not survive further conditioning on city angle or competition-quality criteria.

The dependence adjustment in \cref{eq:dependentlags} involves the \emph{product} of the two histories' lag correlations. It does not amount to substituting an effective sample size into \cref{thm:uniformorderingrate}. The two proofs differ: the dependent-history proof uses a second moment about zero and Markov's inequality, whereas the independent-date proof uses exchangeability and a centered residual to apply Cantelli's inequality. For that reason the dependent-history bound is looser even when $L_T=1$. It extends neither the residual-sign central limit theorem nor the independent-date concentration bound below.

\paragraph{Overlapping targets and repeated observations.}
If $T=\ell n$ and each of $n$ independent uniform vector observations is repeated $\ell$ times in aligned, equal blocks in both histories, then $m_X,m_Y,D_T,C_T$ are exactly those of the $n$-observation histories. Thus \cref{thm:uniformorderingrate} applies with $n=T/\ell$ exactly. Here $a_T=1/n$; the shorthand factor $Ta_T$ equals $\ell$, even though blocks anchored at a fixed origin need not define a stationary process.

A five-hour forward-return target does not imply this repetition model or independence of every fifth prediction. For illustration, scalar sums of five independent hourly innovations have lag correlations $4/5,3/5,2/5,1/5$ and zero thereafter. Their own long-run mean-variance factor is $1+2\sum_{h=1}^4(5-h)/5=5$. For the product of two independent such streams, it is $1+2\sum_{h=1}^4((5-h)/5)^2=3.4$. These are illustrative scalar calculations, not claims about the normalized signal vectors. Predictions and returns may have dependence beyond the mechanical overlap.

The common orthogonal map in \cref{eq:frame} cancels from each matched-date inner product, $\ip{G_tS_{i,t}}{G_tS_{j,t}}=\ip{S_{i,t}}{S_{j,t}}$. Overlapping targets can affect cities and the temporal law of the transformed histories, but do not directly change $D_{ij}$ for fixed original signals. A common random target alignment also does not automatically preserve independence between entire transformed histories.

\subsection{Random temporal matching with the cities held fixed}

Fix vector collections $(X_t)_{t=1}^T$ and $(Y_t)_{t=1}^T$, with $T\ge2$, and let $\pi$ be a uniform random permutation of $\{1,\ldots,T\}$. Define
\begin{equation}\label{eq:permutation}
 D_\pi=\frac1T\sum_t\ip{X_t}{Y_{\pi(t)}}.
\end{equation}
The cities and retained lengths are unchanged by the permutation. Put
\begin{equation}\label{eq:covariances}
 \begin{aligned}
 \Sigma_X&=\frac1T\sum_t(X_t-m_X)(X_t-m_X)^\top,\\
 \Sigma_Y&=\frac1T\sum_t(Y_t-m_Y)(Y_t-m_Y)^\top.
 \end{aligned}
\end{equation}

The idea is to keep both sets of vectors exactly as observed and scramble only which date of one is matched with which date of the other, so that any spread in $D_\pi$ comes from temporal matching alone.

\begin{theorem}[Exact conditional permutation moments]\label{thm:permutation}
With nonzero means and $C=\ip{c_X}{c_Y}$,
\begin{equation}\label{eq:permutationmoments}
 \E_\pi D_\pi=r_Xr_YC,\qquad
 \Var_\pi(D_\pi)=\frac{\tr(\Sigma_X\Sigma_Y)}{T-1}=:V.
\end{equation}
If $\tau>r_Xr_YC$, then
\begin{equation}\label{eq:cantelli}
 \Prob_\pi(D_\pi\ge\tau)
 \le\frac{V}{V+(\tau-r_Xr_YC)^2}.
\end{equation}
When $V>0$, for every $0<\delta<1$, with probability at least $1-\delta$,
\begin{equation}\label{eq:permcertificate}
 D_\pi\le r_Xr_YC+\sqrt{V\frac{1-\delta}{\delta}}.
\end{equation}
If $V=0$, $D_\pi=r_Xr_YC$ almost surely. All formulas for the moments remain valid with $\ip{m_X}{m_Y}$ if a mean vanishes.
\end{theorem}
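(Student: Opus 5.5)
Decompose each vector into its mean plus a residual and compute the permutation moments of the residual cross term. Write $X_t=m_X+a_t$ and $Y_t=m_Y+b_t$ with $\sum_ta_t=\sum_tb_t=0$. Then
\[
 D_\pi=\ip{m_X}{m_Y}+\frac1T\sum_t\ip{a_t}{b_{\pi(t)}},
\]
because the cross terms $\frac1T\sum_t\ip{m_X}{b_{\pi(t)}}$ and $\frac1T\sum_t\ip{a_t}{m_Y}$ vanish for every $\pi$: a permutation preserves the zero sum. Since $\ip{m_X}{m_Y}=r_Xr_YC$, it remains to show that $S_\pi=\frac1T\sum_t\ip{a_t}{b_{\pi(t)}}$ has mean zero and variance $V$. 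This is a Hoeffding permutation sum with score matrix $c(t,s)=\ip{a_t}{b_s}$. Each row and column of this matrix sums to zero, which is exactly the doubly centered situation.

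\textbf{Mean.} Each $\pi(t)$ is uniform on $\{1,\dots,T\}$, so
\[
 \E_\pi S_\pi=\frac1{T^2}\sum_{t,s}c(t,s)=0.
\]

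\textbf{Variance.} Expand
\[
 \E S_\pi^2=\frac1{T^2}\sum_{t,t'}\E\bigl[c(t,\pi(t))\,c(t',\pi(t'))\bigr]
\]
and split the sum into diagonal and off-diagonal terms.
\begin{itemize}
\item For $t=t'$ the expectation is $\frac1T\sum_sc(t,s)^2$, so the diagonal terms total $\frac1T\sum_{t,s}c(t,s)^2$.
\item For $t\ne t'$, the pair $(\pi(t),\pi(t'))$ is uniform on ordered distinct pairs, so
\[
 \E\bigl[c(t,\pi(t))\,c(t',\pi(t'))\bigr]=\frac1{T(T-1)}\sum_{s\ne s'}c(t,s)\,c(t',s').
\]
Zero row sums give $\sum_{s'\ne s}c(t',s')=-c(t',s)$. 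Summing then over $t'\ne t$ and using zero column sums turns $-\sum_{t'\ne t}c(t',s)$ into $+c(t,s)$. The off-diagonal terms therefore total $\frac1{T(T-1)}\sum_{t,s}c(t,s)^2$.
\end{itemize}
Adding the two parts,
\[
 \E S_\pi^2=\frac1{T^2}\Bigl(\frac1T+\frac1{T(T-1)}\Bigr)\sum_{t,s}c(t,s)^2=\frac{1}{T^2(T-1)}\sum_{t,s}\ip{a_t}{b_s}^2.
\]
Finally,
\[
 \sum_{t,s}\ip{a_t}{b_s}^2=\sum_{t,s}a_t^\top b_sb_s^\top a_t=\tr\Bigl(\sum_ta_ta_t^\top\sum_sb_sb_s^\top\Bigr)=T^2\tr(\Sigma_X\Sigma_Y),
\]
which gives $V=\tr(\Sigma_X\Sigma_Y)/(T-1)$.

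The main obstacle is this combinatorial bookkeeping for the distinct-index terms; the rest is routine. Everything is carried out with $\ip{m_X}{m_Y}$ in place of $r_Xr_YC$, so the moment formulas hold even when a mean vanishes.

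\textbf{Tail bounds.} The bound \cref{eq:cantelli} is the one-sided Cantelli inequality~\cite[Exercise~1.6.4]{durrett} applied to $D_\pi$ with deviation $\tau-r_Xr_YC>0$. For \cref{eq:permcertificate}, take $\tau=r_Xr_YC+\sqrt{V(1-\delta)/\delta}$ when $V>0$. Then
\[
 (\tau-r_Xr_YC)^2=V\,\frac{1-\delta}{\delta},\qquad \frac{V}{V+V(1-\delta)/\delta}=\delta,
\]
so $\Prob_\pi(D_\pi\ge\tau)\le\delta$, and in particular $D_\pi\le\tau$ with probability at least $1-\delta$. If $V=0$, the variance vanishes and $D_\pi$ equals its mean almost surely.
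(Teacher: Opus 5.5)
Your proof is correct and follows essentially the same route as the paper. You split off $\ip{m_X}{m_Y}$, show the residual permutation sum has mean zero, expand its square into diagonal and off-diagonal terms using both zero sums, and then apply Cantelli and solve for $\delta$. The only difference is cosmetic: the paper averages the outer products first, using $\E_\pi[b_{\pi(t)}b_{\pi(s)}^\top]=-\Sigma_Y/(T-1)$ for $t\ne s$ and then $\sum_t a_t=0$, whereas you work with the scalar doubly centered score matrix $\ip{a_t}{b_s}$ and convert $\sum_{t,s}\ip{a_t}{b_s}^2$ to $T^2\tr(\Sigma_X\Sigma_Y)$ at the end.
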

\begin{proof}
Set $a_t=X_t-m_X$ and $b_t=Y_t-m_Y$, so both sequences sum to zero. The random part of $D_\pi$ is $T^{-1}\sum_ta_t^\top b_{\pi(t)}$, whose mean is zero. Uniform permutation gives
\[
 \E_\pi[b_{\pi(t)}b_{\pi(t)}^\top]=\Sigma_Y,
 \qquad
 \E_\pi[b_{\pi(t)}b_{\pi(s)}^\top]=-\frac{\Sigma_Y}{T-1}\quad(t\ne s).
\]
Expanding the square and using $\sum_ta_t=0$ yields
\begin{align*}
 \Var_\pi(D_\pi)
 &=\frac{1}{T(T-1)}\sum_ta_t^\top\Sigma_Ya_t\\
 &=\frac{\tr(\Sigma_X\Sigma_Y)}{T-1}.
\end{align*}
The one-sided second-moment inequality of Cantelli~\cite[Exercise~1.6.4]{durrett} gives \cref{eq:cantelli}; solving its upper bound for $\delta$ gives \cref{eq:permcertificate}.
\end{proof}

The mean under random matching is exactly the part of the correlation that the compressed summary sees, $r_Xr_YC$. An observed correlation far above that mean, measured in units of $\sqrt V$, would be unusual if the dates were matched at random; this is the sense in which the theorem benchmarks synchronized movement that cities cannot detect.

For an absolute-correlation rule and $\tau>|r_Xr_YC|$, Chebyshev's inequality also gives
\begin{equation}\label{eq:permabsolute}
 \Prob_\pi(|D_\pi|\ge\tau)
 \le\min\left\{1,\frac{V}{(\tau-|r_Xr_YC|)^2}\right\}.
\end{equation}
In these statements the only randomness is how the dates of the two collections are matched; the collections themselves are held fixed. Conditioning on the complete collections in this way is stronger than a generic high-probability bound evaluated after a city screen.

\begin{corollary}[Isotropic residual covariance]\label{cor:isotropic}
If $\Sigma_Y=(1-r_Y^2)\id_q/q$, then
\begin{equation}\label{eq:isotropicV}
 V=\frac{(1-r_X^2)(1-r_Y^2)}{q(T-1)}.
\end{equation}
For $r_Y<1$, this hypothesis requires $T\ge q+1$. If $r_Y=1$, the covariance and $V$ vanish and no such rank restriction is needed.
\end{corollary}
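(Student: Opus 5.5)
The plan is to substitute the isotropic hypothesis directly into the variance formula of \cref{thm:permutation}, and then to obtain the rank restriction from the fact that $\Sigma_Y$ is an empirical covariance of $T$ centered vectors.

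\textbf{Computing $V$.} By \cref{thm:permutation}, $V=\tr(\Sigma_X\Sigma_Y)/(T-1)$. If $\Sigma_Y=(1-r_Y^2)\id_q/q$, then
\[
 \tr(\Sigma_X\Sigma_Y)=\frac{1-r_Y^2}{q}\tr\Sigma_X .
\]
The energy identity \cref{eq:energy} gives
\[
 \tr\Sigma_X=\frac1T\sum_t\norm{X_t-m_X}^2=1-r_X^2 ,
\]
so $V=(1-r_X^2)(1-r_Y^2)/(q(T-1))$, which is \cref{eq:isotropicV}. The scalar $(1-r_Y^2)/q$ is forced in any case: taking traces, $\tr\Sigma_Y=1-r_Y^2$, so an isotropic $\Sigma_Y$ must equal $(1-r_Y^2)\id_q/q$.

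\textbf{Rank restriction.} $\Sigma_Y$ is $T^{-1}\sum_t b_tb_t^\top$ with $b_t=Y_t-m_Y$ and $\sum_t b_t=0$. The vectors $b_t$ therefore span a subspace of dimension at most $T-1$, and $\operatorname{rank}\Sigma_Y\le T-1$. When $r_Y<1$ the hypothesis makes $\Sigma_Y$ a positive multiple of $\id_q$, which has rank $q$. Hence $q\le T-1$, that is, $T\ge q+1$.

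\textbf{Degenerate case.} If $r_Y=1$, equality in $\norm{m_Y}\le\frac1T\sum_t\norm{Y_t}=1$ for unit vectors forces every $Y_t$ to equal $m_Y$. Then every $b_t=0$, so $\Sigma_Y=0$ and $V=0$, consistent with the formula, and no rank condition arises.

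There is no real obstacle here. The only points that need a line of justification are the rank bound coming from the zero-sum constraint and the equality case $r_Y=1$.
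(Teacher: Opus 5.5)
Your proof is correct and follows the paper's own argument. You substitute $\tr\Sigma_X=1-r_X^2$ into the variance formula of \cref{thm:permutation}, and you obtain $T\ge q+1$ by comparing the rank bound $T-1$ for a centered empirical covariance with the rank $q$ of a positive multiple of $\id_q$; your justification of the case $r_Y=1$ (all $Y_t$ equal) also fills in a step the paper states without proof, and could equivalently be read off from $\tr\Sigma_Y=1-r_Y^2=0$ together with $\Sigma_Y$ being positive semidefinite.
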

\begin{proof}
Use $\tr\Sigma_X=1-r_X^2$ in \cref{eq:permutationmoments}. An empirical covariance of $T$ centered observations has rank at most $T-1$, whereas a positive multiple of $\id_q$ has rank $q$.
\end{proof}

The covariance overlap $\tr(\Sigma_X\Sigma_Y)$, rather than asset count alone, controls accidental residual alignment. The isotropic formula displays a dimension gain only under its stated covariance assumption.

There is an exact consistency check with \cref{thm:northernnull}. Under the uniform northern-history law, sign reversal leaves each empirical covariance unchanged, so
\[
 \E_T^+\Sigma_X=\E_T^+\Sigma_Y
 =\frac{T-1}{qT}\id_q.
\]
The two covariances are independent. Averaging the general overlap formula in \cref{eq:permutationmoments} therefore gives
\begin{equation}\label{eq:permutationnullconsistency}
 \E_T^+V
 =\frac{\tr\bigl[(\E_T^+\Sigma_X)(\E_T^+\Sigma_Y)\bigr]}{T-1}
 =\frac{T-1}{qT^2},
\end{equation}
exactly the variance in \cref{eq:northernmoments}. This does not require each realized empirical covariance to be isotropic.

Random matching is a useful benchmark for histories with no synchronized temporal structure, but it is not a default model of market signals. An unrestricted permutation destroys serial dependence and can change how a signal relates to the original dated sequence of targets. To use this randomization as an empirical test, one must justify exchangeability for the hypothesis being tested. The statistic itself belongs to the classical permutation-sum framework~\cite{hoeffding}; we have computed its finite-sample moments directly.

\subsection{Independent dates with controlled residual dependence}

Now suppose the paired vectors $(X_t,Y_t)$ are independent and identically distributed over dates. Dependence between $X_t$ and $Y_t$ at a date is allowed. Define
\begin{equation}\label{eq:kappa}
 \mu_X=\E X_t,\quad\mu_Y=\E Y_t,\quad
 \kappa=\E\ip{X_t-\mu_X}{Y_t-\mu_Y}.
\end{equation}
Sample means and cities are computed from the same evaluation dates. The discarded sample residual cross-moment from \cref{eq:residuals,eq:decomposition} is
\begin{equation}\label{eq:sampleK}
 K_T=\frac1T\sum_{t=1}^T\ip{X_t-m_X}{Y_t-m_Y}
 =D_T-\ip{m_X}{m_Y}.
\end{equation}
Thus $\kappa$ is the population counterpart of exactly the quantity controlled deterministically in \cref{thm:sharp}. The next result concentrates $K_T$ around its expectation $(1-1/T)\kappa$; the factor $1-1/T$ comes from centering by sample means while using divisor $T$.

The earlier null models take the two signals to be unrelated. Here they may share movement on every date, and the question is whether a long history pins down how much.

\begin{theorem}[Concentration of the discarded residual cross-moment]\label{thm:concentration}
For iid paired unit vectors and $0<\delta<1$, with probability at least $1-\delta$,
\begin{equation}\label{eq:mcdiarmid}
 \left|K_T-\left(1-\frac1T\right)\kappa\right|
 \le\sqrt{\frac{18\log(2/\delta)}{T}}.
\end{equation}
If both sample means are nonzero, $D_T=r_Xr_YC_T+K_T$.
\end{theorem}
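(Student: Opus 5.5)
The plan is to apply McDiarmid's bounded-differences inequality~\cite{mcdiarmid} to $K_T$, viewed as a function of the $T$ independent paired observations $Z_t=(X_t,Y_t)$. Three steps are needed. First, identify the expectation of $K_T$. Second, bound how much $K_T$ can change when a single date is replaced. Third, solve the resulting two-sided tail bound for the deviation at confidence level $\delta$. The final sentence of the statement is just \cref{eq:decomposition} from \cref{thm:sharp}, applied to these sample means.

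For the expectation, use the representation \cref{eq:sampleK}, $K_T=D_T-\ip{m_X}{m_Y}$. Since the pairs are identically distributed,
\[
 \E D_T=\E\ip{X_t}{Y_t}=\ip{\mu_X}{\mu_Y}+\kappa .
\]
Expanding $\ip{m_X}{m_Y}=T^{-2}\sum_{s,t}\ip{X_s}{Y_t}$ splits the sum into two kinds of terms. Each of the $T(T-1)$ off-diagonal terms has expectation $\ip{\mu_X}{\mu_Y}$, because dates are independent. Each of the $T$ diagonal terms has expectation $\ip{\mu_X}{\mu_Y}+\kappa$. Hence
\[
 \E\ip{m_X}{m_Y}=\ip{\mu_X}{\mu_Y}+\kappa/T,
 \qquad
 \E K_T=(1-1/T)\kappa .
\]
This is the centering in \cref{eq:mcdiarmid}.

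For the bounded differences, replace $Z_t$ by an independent copy $Z_t'$, leaving all other dates unchanged.

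\begin{itemize}
\item \emph{Change in $D_T$.} Only the term $T^{-1}\ip{X_t}{Y_t}$ changes. Both the old and new values lie in $[-1,1]$, so $D_T$ moves by at most $2/T$.
\item \emph{Change in the means.} Each of $m_X$ and $m_Y$ moves by a vector of norm at most $2/T$. Both the old and the new means have norm at most one.
\item \emph{Change in $\ip{m_X}{m_Y}$.} Write
\[
 \ip{m_X'}{m_Y'}-\ip{m_X}{m_Y}=\ip{m_X'-m_X}{m_Y'}+\ip{m_X}{m_Y'-m_Y}.
\]
Cauchy--Schwarz bounds each term by $2/T$, so this inner product changes by at most $4/T$.
\end{itemize}

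Thus $K_T$ has bounded differences $c_t=6/T$, and $\sum_tc_t^2=36/T$. McDiarmid's inequality gives
\[
 \Prob\bigl(|K_T-\E K_T|\ge\varepsilon\bigr)\le 2\exp\!\left(-\frac{2\varepsilon^2}{36/T}\right)=2\exp\!\left(-\frac{T\varepsilon^2}{18}\right).
\]
Setting the right side equal to $\delta$ yields $\varepsilon=\sqrt{18\log(2/\delta)/T}$, which is \cref{eq:mcdiarmid}.

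The only step requiring care is the bounded-difference constant for the cross term $\ip{m_X}{m_Y}$. That term couples all dates, so the crude constant $2/T$ used for $D_T$ does not apply to it directly. The two-term split above resolves this using only the unit-norm bounds, and it is what produces the constant $18$.
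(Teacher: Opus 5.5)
Your proposal is correct and follows the paper's proof essentially step for step. Both compute $\E K_T=(1-1/T)\kappa$ from the diagonal/off-diagonal split of $\ip{m_X}{m_Y}$, use the bounded-difference constants $2/T$ for $D_T$ and $4/T$ for the cross term (total $6/T$), and apply McDiarmid's inequality to obtain the constant $18$; your two-term split of the inner-product change just spells out the step the paper states in one line.
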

\begin{proof}
By \cref{eq:sampleK} and independence across dates,
\[
 \E\ip{m_X}{m_Y}
 =\frac1T\E\ip{X_t}{Y_t}+\left(1-\frac1T\right)\ip{\mu_X}{\mu_Y},
\]
so $\E K_T=(1-1/T)\kappa$. Replacing one paired observation changes $D_T$ by at most $2/T$. Each sample mean changes by at most $2/T$, and all sample mean norms are at most one; hence their inner product changes by at most $4/T$. Thus $K_T$ has bounded-difference constants $6/T$. McDiarmid's inequality~\cite{mcdiarmid} gives
\begin{align*}
 \Prob(|K_T-\E K_T|>z)
 &\le2\exp\left(-\frac{2z^2}{T\cdot(6/T)^2}\right)\\
 &=2\exp\left(-\frac{Tz^2}{18}\right).
\end{align*}
Here $18=6^2/2$ is independent of both $q$ and $d$. Set the right-hand side equal to $\delta$.
\end{proof}

The bound is dimension-free and deliberately conservative. To reduce its stochastic error term to a tolerance $\varepsilon>0$, it suffices to have
\begin{equation}\label{eq:concentrationsamplesize}
 T\ge\left\lceil\frac{18\log(2/\delta)}{\varepsilon^2}\right\rceil.
\end{equation}
For $\delta=0.05$ and $\varepsilon=0.1$, this is $T\ge6{,}640$ independent dates: about 26.3 years at 252 observations per year, or 18.2 years at 365. The guarantee can therefore be too conservative for a short evaluation history. The gain from dimension in \cref{cor:isotropic} is specific to its conditional permutation model. Marginal isotropy alone does not carry that gain over to paired observations whose same-date dependence is unrestricted.

A possible Bernstein refinement~\cite{blm} would exploit a small paired variance $\Var(\ip{X_t-\mu_X}{Y_t-\mu_Y})$, while separately controlling the sample-mean correction $\ip{m_X-\mu_X}{m_Y-\mu_Y}$. Both ingredients would be needed to improve the resulting bound on $K_T$.

\begin{corollary}[A residual-dependent novelty certificate]\label{cor:probnovelty}
If $|\kappa|\le\kappa_0$, then on the event in \cref{thm:concentration},
\begin{equation}\label{eq:probabsolute}
 |D_T|\le r_Xr_Y|C_T|
 +\left(1-\frac1T\right)\kappa_0
 +\sqrt{\frac{18\log(2/\delta)}{T}}.
\end{equation}
A one-sided upper bound follows by removing absolute values on $D_T$ and $C_T$ and assuming only $\kappa\le\kappa_0$.
\end{corollary}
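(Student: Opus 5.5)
The plan is to combine the exact decomposition of \cref{thm:sharp} with the two-sided concentration event of \cref{thm:concentration}; no new probabilistic argument is needed. First, I will check that the decomposition holds for sample means computed from the same dates. Since both sample means are nonzero, \cref{eq:decomposition} gives
\[
 D_T=r_Xr_YC_T+K_T,
\]
with $K_T$ exactly the quantity in \cref{eq:sampleK}. If a mean vanishes, I will replace the term by $\ip{m_X}{m_Y}$, as the paper allows. The triangle inequality then gives
\[
 |D_T|\le r_Xr_Y|C_T|+|K_T|.
\]

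Second, on the event of probability at least $1-\delta$ from \cref{thm:concentration}, I will bound $|K_T|$ by splitting it as
\[
 |K_T|\le\left(1-\frac1T\right)|\kappa|+\left|K_T-\left(1-\frac1T\right)\kappa\right|.
\]
The first term is at most $(1-1/T)\kappa_0$ by hypothesis, using $1-1/T\ge0$. The second term is at most $\sqrt{18\log(2/\delta)/T}$ by \cref{eq:mcdiarmid}. Substituting these into the previous display yields \cref{eq:probabsolute}.

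For the one-sided version, I will drop the absolute values: $D_T=r_Xr_YC_T+K_T$, and on the same event
\[
 K_T\le\left(1-\frac1T\right)\kappa+\sqrt{\frac{18\log(2/\delta)}{T}}.
\]
With only $\kappa\le\kappa_0$ assumed, the positive factor $1-1/T$ preserves the inequality, which gives
\[
 D_T\le r_Xr_YC_T+\left(1-\frac1T\right)\kappa_0+\sqrt{\frac{18\log(2/\delta)}{T}}.
\]
The two-sided event from McDiarmid contains the one-sided event, so the same probability guarantee applies; a one-sided McDiarmid bound would even improve $\log(2/\delta)$ to $\log(1/\delta)$, though this is not needed.

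There is no real obstacle. The only point needing care is to state that the probability statement refers to the single event from \cref{thm:concentration}, fixed before observing $C_T$ and $r_X,r_Y$. The bound then holds simultaneously with the realized cities, because the decomposition is deterministic and no union bound over quantities computed after the screen is required.
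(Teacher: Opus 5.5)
Your proposal is correct and is the argument the paper intends; the paper states the corollary without a separate proof. You combine the identity $D_T=r_Xr_YC_T+K_T$ recorded in \cref{thm:concentration}, the triangle inequality, and the concentration event, and you use $1-1/T\ge0$ to handle the one-sided case. One small wording slip: the two-sided McDiarmid event is contained in the one-sided event $\{K_T-(1-1/T)\kappa\le\sqrt{18\log(2/\delta)/T}\}$, not the other way around, and that containment is exactly why the one-sided bound holds on the event of \cref{thm:concentration}.
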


The three terms have distinct sources: what the cities and retained lengths show, how much synchronized movement the two signals are assumed to share, and sampling error.

Independence between the two signal streams is sufficient, but not necessary, for $\kappa=0$. The bound $\kappa_0$ is an assumption about exactly the synchronized temporal variation that cities discard, and shared factors may make $\kappa$ large. An application must justify or estimate $\kappa_0$; it cannot be inferred from city separation.

\subsection{Nonzero population means and the limiting separation margin}

For signals with persistent nonzero means, the limiting ordering is determined by a population margin. This result permits both dependence across dates and shared movement between signals.

The uniform null has no skill and no shared movement. A real library may have both: means that do not vanish and signals that respond to common drivers. The next theorem identifies what then decides, over a long history, whether the city angle is conservative.

\begin{theorem}[Stationary ergodic population margin]\label{thm:populationordering}
Let $(X_t,Y_t)$ be a jointly stationary ergodic process of paired unit vectors in a fixed dimension, with means $\mu_X,\mu_Y$ of lengths $\rho_X,\rho_Y>0$. Define
\begin{equation}\label{eq:populationmargin}
 \begin{aligned}
 C_*&=\frac{\mu_X^\top\mu_Y}{\rho_X\rho_Y},\qquad
 \kappa=\E\ip{X_t-\mu_X}{Y_t-\mu_Y},\\
 g&=(1-\rho_X\rho_Y)C_*-\kappa.
 \end{aligned}
\end{equation}
Then $D_T-C_T\to-g$ almost surely. If $g>0$, eventually $\theta_T>\phi_T$ almost surely; if $g<0$, eventually $\theta_T<\phi_T$ almost surely. At $g=0$ ergodicity alone gives no ordering conclusion. If both population mean ICs are strictly positive, northern selection has probability tending to one and does not change these limiting conclusions.
\end{theorem}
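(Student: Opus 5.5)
The plan is to apply the ergodic theorem coordinatewise and then pass the limits through the continuous map that produces $D_T-C_T$.

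First, the process $(X_t,Y_t)$ is jointly stationary ergodic, and the quantities $X_t$, $Y_t$ and $\ip{X_t}{Y_t}$ are bounded measurable functions of it. Birkhoff's ergodic theorem therefore gives, almost surely,
\[
 m_X\to\mu_X,\qquad m_Y\to\mu_Y,\qquad D_T\to\E\ip{X_t}{Y_t}=\mu_X^\top\mu_Y+\kappa .
\]
Here the last identity is just the expansion of $\kappa$. Since $\rho_X,\rho_Y>0$, both sample means are eventually nonzero, so cities are eventually defined. The map $m\mapsto m/\norm m$ is continuous away from zero, so $c_X\to\mu_X/\rho_X$ and $c_Y\to\mu_Y/\rho_Y$. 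Hence $C_T\to C_*$ and $r_Xr_Y\to\rho_X\rho_Y$.

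Second, I rewrite the limit in the form of \cref{eq:ordering}:
\[
 D_T-C_T\to\rho_X\rho_YC_*+\kappa-C_*=\kappa-(1-\rho_X\rho_Y)C_*=-g .
\]
As in the proof of \cref{cor:angleordering}, $\theta_T\ge\phi_T$ if and only if $D_T\le C_T$, because $\arccos$ is strictly decreasing. If $g>0$, almost surely $D_T-C_T<0$ for all large $T$, so $\theta_T>\phi_T$ eventually; the case $g<0$ is symmetric. For $g=0$ I only need to remark that the limit gives no sign information.

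Third, for northern selection: if $\ip{\mu_X}e>0$ and $\ip{\mu_Y}e>0$, then almost surely $\ip{m_X}e$ and $\ip{m_Y}e$ are eventually positive. So $\mathbf 1_{H_T}\to1$ almost surely, and by dominated convergence $\Prob(H_T)\to1$. For any event $A_T$ with $\mathbf 1_{A_T}\to1$ almost surely, the conditional probability satisfies
\[
 \Prob(A_T\mid H_T)\ge\Prob(A_T)-\Prob(H_T^c)\to1 .
\]
Taking $A_T=\{\theta_T>\phi_T\}$ (or the reverse event when $g<0$) shows that the limiting conclusions survive conditioning.

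The main obstacle is modest. The ordering statements are almost-sure "eventually" statements on a single sample path, whereas the conditioned law changes with $T$. Under selection they therefore have to be phrased as conditional probabilities tending to one, rather than as pathwise statements. The remaining detail is to check that Birkhoff's theorem applies to vector-valued bounded functionals, which holds coordinatewise in fixed dimension.
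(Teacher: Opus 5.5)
Your proposal is correct and follows the paper's own route. You apply Birkhoff's theorem coordinatewise and to $\ip{X_t}{Y_t}$, use continuity of normalization to get $C_T\to C_*$, subtract to obtain $D_T-C_T\to-g$, and use monotonicity of $\arccos$ for the eventual ordering. Your explicit inequality $\Prob(A_T\mid H_T)\ge\Prob(A_T)-\Prob(H_T^c)$ simply spells out the conditioning step that the paper states in one line.
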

\begin{proof}
Birkhoff's ergodic theorem, applied to each coordinate and to $\ip{X_t}{Y_t}$, gives $m_X\to\mu_X$, $m_Y\to\mu_Y$, and $D_T\to\mu_X^\top\mu_Y+\kappa$ almost surely~\cite[Section~6.2]{durrett}. The sample means are therefore nonzero eventually, and continuity of normalization gives $C_T\to C_*$. Subtraction yields $D_T-C_T\to-g$. Strict margins fix the eventual sign, and $\arccos$ is decreasing. Strictly positive population projections on $e$ imply eventual northern selection almost surely; conditioning on an event whose probability tends to one preserves the corresponding limiting probabilities.
\end{proof}

The margin $g$ sets two forces against each other. Shared movement between the signals, measured by $\kappa$, pulls their histories together. Against it stands the allowance $(1-\rho_X\rho_Y)C_*$, which short mean vectors enlarge and which is positive only when the population cities are less than $90^\circ$ apart. Whichever is larger settles the ordering for all sufficiently long histories, and \cref{ex:sharedpersistent} shows shared movement prevailing even when both signals have positive IC on every date.

Thus the relevant condition is $\kappa<(1-\rho_X\rho_Y)C_*$. Independence between the two streams is sufficient for $\kappa=0$; if also $C_*>0$ and $\rho_X\rho_Y<1$, preservation has probability tending to one. More generally, positive residual dependence is compatible with preservation when it stays below the margin. Ergodicity alone supplies neither a convergence rate nor a central limit theorem.

\Needspace{8\baselineskip}
\begin{corollary}[Residual fluctuations under independent paired dates]\label{cor:populationclt}
In the setting of \cref{thm:populationordering}, assume additionally that the paired observations are iid over dates. If $\sigma_K^2:=\Var(\ip{X_t-\mu_X}{Y_t-\mu_Y})>0$, then
\begin{equation}\label{eq:populationresidualclt}
 \sqrt T(K_T-\kappa)\dto N(0,\sigma_K^2).
\end{equation}
For independent streams, $\kappa=0$ and $\sigma_K^2=\tr(\Sigma_X^{\rm pop}\Sigma_Y^{\rm pop})$. When this trace is positive, $\Prob(K_T\le0)\to1/2$.
\end{corollary}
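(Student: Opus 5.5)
The plan is to split $K_T$ into a mean-zero iid average plus a sample-mean correction that vanishes at rate $T^{-1}$. Put $U_t=X_t-\mu_X$ and $V_t=Y_t-\mu_Y$, and let $\bar U=m_X-\mu_X$ and $\bar V=m_Y-\mu_Y$. Expanding \cref{eq:sampleK} gives the exact identity
\[
 K_T=\frac1T\sum_t\ip{U_t}{V_t}-\ip{\bar U}{\bar V}.
\]
The first term is an average of iid scalars $\xi_t=\ip{U_t}{V_t}$. These are bounded by $4$, so they have finite variance, with mean $\kappa$ and variance $\sigma_K^2$. The classical central limit theorem then gives $\sqrt T(T^{-1}\sum_t\xi_t-\kappa)\dto N(0,\sigma_K^2)$.

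For the correction, Cauchy--Schwarz gives $|\ip{\bar U}{\bar V}|\le\norm{\bar U}\norm{\bar V}$. Each factor has $\E\norm{\bar U}^2=\tr\Cov(X_t)/T\le 1/T$, so $\sqrt T\,\norm{\bar U}\norm{\bar V}$ has expectation at most $T^{-1/2}$ and tends to $0$ in probability. Slutsky's lemma then yields \cref{eq:populationresidualclt}. The centering is $\kappa$ rather than $(1-1/T)\kappa$; the difference $\kappa/\sqrt T$ is negligible and is absorbed in the same step.

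For independent streams, $U_t$ and $V_t$ are independent with mean zero, so $\kappa=\E\ip{U_t}{V_t}=0$. For the variance I will compute
\[
 \E\xi_t^2=\E\,U_t^\top V_tV_t^\top U_t=\E\,U_t^\top\Sigma_Y^{\rm pop}U_t=\tr(\Sigma_X^{\rm pop}\Sigma_Y^{\rm pop}),
\]
conditioning on $U_t$ at the middle step. Here $\Sigma^{\rm pop}$ denotes the population covariance of a single date.

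When this trace is positive, the limit $N(0,\sigma_K^2)$ is continuous at zero. Therefore
\[
 \Prob(K_T\le0)=\Prob\bigl(\sqrt T(K_T-0)\le0\bigr)\to\tfrac12
\]
by the portmanteau theorem.

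The only delicate point is checking that the setting is consistent. Stationarity together with iid dates is a special case of \cref{thm:populationordering}, so the means $\mu_X,\mu_Y$ exist. The correction term must be shown negligible at scale $\sqrt T$, which the second-moment bound above handles without any dimension assumption. I expect no real obstacle beyond this bookkeeping.
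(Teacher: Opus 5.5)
Your proposal is correct and follows the paper's proof essentially step for step. It uses the same exact decomposition of $K_T-\kappa$ into an iid average of $\ip{X_t-\mu_X}{Y_t-\mu_Y}$ minus the product of centered sample means, then applies the scalar CLT to the first term, shows the second is negligible at scale $\sqrt T$, computes the trace by conditioning, and uses continuity of the Gaussian limit at zero for sign balance. Your explicit bound $\sqrt T\,\E[\norm{\bar U}\norm{\bar V}]\le T^{-1/2}$ (via Cauchy--Schwarz in expectation, needed because $\bar U,\bar V$ may be dependent) simply makes precise the paper's $O_{\Prob}(T^{-1})$ claim.
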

\begin{proof}
Put $h_t=\ip{X_t-\mu_X}{Y_t-\mu_Y}$. Direct expansion gives
\[
 K_T-\kappa=\frac1T\sum_{t=1}^T(h_t-\kappa)
 -\ip{m_X-\mu_X}{m_Y-\mu_Y}.
\]
Each centered mean is $O_{\Prob}(T^{-1/2})$, so the final term is $O_{\Prob}(T^{-1})$. The scalar iid central limit theorem proves the claim. Independence between streams gives the covariance-trace formula, and its strict positivity gives a continuous, nondegenerate Gaussian limit and hence sign balance.
\end{proof}

\begin{example}[Persistent shared movement can reverse the limiting ordering]\label{ex:sharedpersistent}
In dimension $q\ge4$, choose orthonormal $e,v,u,w$ and set
\[
 c_X=\tfrac{\sqrt3}{2}e+\tfrac12v,\quad
 c_Y=\tfrac{\sqrt3}{2}e-\tfrac12v,\quad
 u_X=u,\quad u_Y=0.8u+0.6w.
\]
Let $Z_t\in\{-1,1\}$ be a stationary symmetric Markov chain with probability $(1+a)/2$ of retaining its sign, $0\le a<1$. Its lag correlation is $a^h$ and it is ergodic. With $r=0.1$, put
\[
 X_t=rc_X+\sqrt{1-r^2}Z_tu_X,\qquad
 Y_t=rc_Y+\sqrt{1-r^2}Z_tu_Y.
\]
Both vectors are unit and have IC $r\sqrt3/2>0$ at every date. Their means converge to $rc_X,rc_Y$, so $C_T\to0.5$. Yet $D_T=0.01(0.5)+0.99(0.8)=0.797$ for every $T$. Thus $\kappa=0.792$, $g=-0.297$, and separation eventually fails. Taking $a$ near one makes serial dependence persistent. Taking $a=0$ leaves the failure intact with independent dates: the shared movement changes the limiting margin.
\end{example}

\Cref{thm:populationordering} assumes fixed nonzero population means, and this is a substantive assumption. A small empirical retained length is compatible both with a zero population mean and with a stable nonzero one; it proves neither. When means are near zero, normalization is sensitive to sampling variation, and when means drift, the cities may not converge at all. The transition model in \cref{prop:localalternatives} illustrates the case in which mean and noise have comparable scales, but only under its own iid triangular-array assumptions; it is not a limit theorem for dependent data. The exact finite-sample residual identity requires neither stationarity nor convergence.

\subsection{Simultaneous comparisons and the meaning of confidence}

Let $J\ge2$ signal processes be fixed before an independent evaluation sample, and assume the joint collection of transformed signal vectors is iid over evaluation dates. Dependence among signals at each date is unrestricted. There are $M=J(J-1)/2$ unordered pairs. A union bound in \cref{thm:concentration} replaces the error term by
\begin{equation}\label{eq:uniformerror}
 \varepsilon_{T,J,\delta}=\sqrt{\frac{18\log(2M/\delta)}T}.
\end{equation}
With probability at least $1-\delta$, the corresponding bound holds for every pair simultaneously. Given valid pairwise bounds on $\kappa$, this bounds the probability that some pair passes the sufficient screen and yet violates the claimed empirical correlation threshold. It does not license constructing new signals indefinitely against the same evaluation sample~\cite{white}.

This guarantee is also distinct from a conditional error rate among the pairs that pass a rare screen. In general,
\[
 \Prob(\text{false certificate and screen passes})\le\delta
\]
does not imply $\Prob(\text{false certificate}\mid\text{screen passes})\le\delta$; this is the selection effect analyzed for selected parameters in~\cite{benjaminiyekutieli}. \Cref{thm:permutation} has its own explicit conditional probability model.

Financial observations are typically dependent across dates, especially when signals are persistent and target horizons overlap. Two of our results allow such dependence: \cref{thm:dependentordering}, an ordering bound for independent rotationally symmetric histories, and \cref{thm:populationordering}, the population-margin limit. Neither extends the McDiarmid certificate in \cref{thm:concentration} to dependent dates. Extending that certificate would require an explicit dependence assumption and a matching concentration argument; substituting an unspecified effective sample size is not such an argument.

\section{Progressive compression by time blocks}\label{sec:blocks}

Instead of one temporal mean per signal, one can store a mean for each of several time blocks shared by all signals. This keeps part of each signal's evolution while remaining a compressed representation.

Let $\mathcal P$ be a partition of the dates into nonempty blocks. For a block $b$, set $w_b=|b|/T$ and
\begin{equation}
 m_{i,b}=\frac1{|b|}\sum_{t\in b}X_{i,t}=r_{i,b}c_{i,b}
 \quad\text{when }r_{i,b}>0.
\end{equation}
A zero block mean is stored as a zero vector without assigning it a direction. Define
\begin{align}
 A_{ij}^{\mathcal P}
 &=\sum_{b\in\mathcal P}w_b\ip{m_{i,b}}{m_{j,b}},\label{eq:blockcenter}\\
 E_i^{\mathcal P}
 &=1-\sum_{b\in\mathcal P}w_b\norm{m_{i,b}}^2.\label{eq:blockenergy}
\end{align}

When the single-mean certificate is inconclusive, the natural next step is to publish a little more: one mean vector per block of dates. The next theorem shows that the resulting interval for $D_{ij}$ always contains the true value and can only shrink as blocks are subdivided.

\begin{theorem}[Nested correlation intervals]\label{thm:blocks}
For every partition $\mathcal P$,
\begin{equation}\label{eq:blockinterval}
 D_{ij}\in I_{\mathcal P}:=
 \left[A_{ij}^{\mathcal P}-\sqrt{E_i^{\mathcal P}E_j^{\mathcal P}},
 A_{ij}^{\mathcal P}+\sqrt{E_i^{\mathcal P}E_j^{\mathcal P}}\right].
\end{equation}
If $\mathcal Q$ refines $\mathcal P$, then $I_{\mathcal Q}\subseteq I_{\mathcal P}$. For the singleton partition, $I_{\mathcal Q}=\{D_{ij}\}$.
\end{theorem}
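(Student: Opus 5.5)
The plan is to treat each signal's history as a vector in the weighted history Hilbert space and, for a partition $\mathcal P$, split it into a blockwise-constant part and a within-block residual, exactly as in the proof of \cref{thm:sharp}. For a date $t$ in block $b$ put $X_{i,t}=m_{i,b}+A^{\mathcal P}_{i,t}$, where $\sum_{t\in b}A^{\mathcal P}_{i,t}=0$ for every block.

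\textbf{Step 1: the interval contains $D_{ij}$.} Expanding $D_{ij}=T^{-1}\sum_t\ip{X_{i,t}}{X_{j,t}}$ block by block, the cross terms vanish because each block residual sums to zero. This gives
\[
 D_{ij}=A^{\mathcal P}_{ij}+K^{\mathcal P}_{ij},\qquad
 K^{\mathcal P}_{ij}=\frac1T\sum_t\ip{A^{\mathcal P}_{i,t}}{A^{\mathcal P}_{j,t}}.
\]
The unit-norm constraint gives, blockwise and as in \cref{eq:energy}, $T^{-1}\sum_t\norm{A^{\mathcal P}_{i,t}}^2=E^{\mathcal P}_i$. In particular $E^{\mathcal P}_i\ge0$, so the square root in \cref{eq:blockinterval} is defined. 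Cauchy--Schwarz in $\R^{qT}$ then gives $|K^{\mathcal P}_{ij}|\le\sqrt{E^{\mathcal P}_iE^{\mathcal P}_j}$, which is \cref{eq:blockinterval}. Zero block means cause no difficulty, since only the mean vectors enter and no directions are used.

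\textbf{Step 2: nesting under refinement.} Let $\mathcal Q$ refine $\mathcal P$. Inside a coarse block $b$, write the fine means as $m_{i,b'}=m_{i,b}+\delta_{i,b'}$ for $b'\subseteq b$. The $w$-weighted sum of the $\delta_{i,b'}$ over $b'\subseteq b$ is zero. Applying the same orthogonality once more yields
\[
 A^{\mathcal Q}_{ij}-A^{\mathcal P}_{ij}=\sum_{b'\in\mathcal Q}w_{b'}\ip{\delta_{i,b'}}{\delta_{j,b'}},
 \qquad
 E^{\mathcal P}_i=E^{\mathcal Q}_i+F_i,\quad F_i=\sum_{b'\in\mathcal Q}w_{b'}\norm{\delta_{i,b'}}^2.
\]
Cauchy--Schwarz bounds the first difference by $\sqrt{F_iF_j}$. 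Containment $I_{\mathcal Q}\subseteq I_{\mathcal P}$ is equivalent to
\[
 |A^{\mathcal Q}_{ij}-A^{\mathcal P}_{ij}|+\sqrt{E^{\mathcal Q}_iE^{\mathcal Q}_j}\le\sqrt{E^{\mathcal P}_iE^{\mathcal P}_j}.
\]
It therefore suffices to prove
\[
 \sqrt{F_iF_j}+\sqrt{E^{\mathcal Q}_iE^{\mathcal Q}_j}\le\sqrt{(F_i+E^{\mathcal Q}_i)(F_j+E^{\mathcal Q}_j)}.
\]
This is Cauchy--Schwarz in $\R^2$ applied to the vectors $(\sqrt{F_i},\sqrt{E^{\mathcal Q}_i})$ and $(\sqrt{F_j},\sqrt{E^{\mathcal Q}_j})$.

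\textbf{Step 3: the singleton partition.} For the partition into single dates, each block mean is the unit vector $X_{i,t}$ itself. Hence $E_i=0$ and $A_{ij}=D_{ij}$, so the interval collapses to $\{D_{ij}\}$.

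\textbf{Main obstacle.} The only delicate point is Step 2. One must correctly identify the decomposition $E^{\mathcal P}=E^{\mathcal Q}+F$ and the matching difference of centers, so that both the center shift and the shrinkage of the radius are controlled by the same quantities $F_i,F_j$. Once that bookkeeping is in place, the two-dimensional Cauchy--Schwarz step closes the argument.
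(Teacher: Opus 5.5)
Your proof is correct and is essentially the paper's argument. Your fine-mean deviations $\delta_{i,b'}$ and within-fine-block residuals are exactly the paper's orthogonal components $V_i=(P_{\mathcal Q}-P_{\mathcal P})X_i$ and $W_i=(\id-P_{\mathcal Q})X_i$, and nesting is closed by the same two Cauchy--Schwarz steps, in the history space and in $\R^2$; the only difference is that you do the bookkeeping in block coordinates, whereas the paper phrases it as nested orthogonal projections.
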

\begin{proof}
Equip histories with inner product
\[
 \ip fg_T=\frac1T\sum_t\ip{f_t}{g_t}.
\]
Block averaging is the orthogonal projection $P_{\mathcal P}$ onto block-constant histories. The projected inner product is $A_{ij}^{\mathcal P}$, and the residual squared norms are $E_i^{\mathcal P}$. Orthogonality and Cauchy--Schwarz give \cref{eq:blockinterval}.

For a refinement put
\[
 V_i=(P_{\mathcal Q}-P_{\mathcal P})X_i,\qquad
 W_i=(\id-P_{\mathcal Q})X_i.
\]
The components are orthogonal. The interval center changes by $\ip{V_i}{V_j}_T$, while its old radius is
\[
 B_{\mathcal P}
 =\sqrt{(\norm{V_i}_T^2+\norm{W_i}_T^2)
 (\norm{V_j}_T^2+\norm{W_j}_T^2)}.
\]
Cauchy--Schwarz in $\R^2$ and then in the history space give
\begin{align*}
 B_{\mathcal P}
 &\ge\norm{V_i}_T\norm{V_j}_T+\norm{W_i}_T\norm{W_j}_T\\
 &\ge\left|\ip{V_i}{V_j}_T\right|+B_{\mathcal Q}.
\end{align*}
The new center shift plus new radius therefore fits inside the old interval. At the singleton partition the projection is the identity, so the residuals vanish.
\end{proof}

This result gives a deterministic resolution hierarchy. Start with one mean vector per signal; refine the shared partition when a pair remains inconclusive; compare full histories only when necessary. The interval upper endpoint certifies a signed upper bound, while an interval contained in $[-\tau,\tau]$ certifies absolute novelty. Refinement can only strengthen these certificates.

The natural compressed object is therefore the mean vector $m_i=r_ic_i$, a point in the closed unit ball, rather than the city alone. The city is its direction, and its length determines the residual energy. The proof works for any nested orthogonal projections; we use time blocks because each block mean can be read as the city of a shorter period. The bounds require that the compared signals share one partition and one frame convention. \Cref{sec:blockexperiment} measures how fine the partition must be for the competition library.

\section{Stability and wandering of cities}\label{sec:stability}

Retained length also governs directional stability. For $m\ne0$, normalization $n(m)=m/\norm m$ has derivative
\begin{equation}\label{eq:derivative}
 Dn_m[h]=\frac{(\id-cc^\top)h}{\norm m},\qquad c=m/\norm m.
\end{equation}
A small retained length therefore amplifies transverse perturbations of the mean. The same quantity thus governs both problems: how much a city reveals about novelty, and how far a city wanders.

\subsection{Persistent positive IC and expanding-window motion}

Consider one infinite sequence $(X_t)_{t\ge1}$ of unit vectors in the specified common frame, and suppress the signal index. Set
\begin{equation}\label{eq:expandingnotation}
 m_T=\frac1T\sum_{t=1}^T X_t,\qquad r_T=\norm{m_T},\qquad
 \alpha_T=\frac1T\sum_{t=1}^T\IC_t=\ip{m_T}e.
\end{equation}
Whenever $r_T>0$, put $c_T=m_T/r_T$. Here a history of length $T$ means the first $T$ observations. The coordinates of earlier observations are kept fixed when the history is extended. A rolling-window statement requires the corresponding assumption on each rolling window.

We measure predictivity by target alignment. Positive empirical average IC, a positive population mean IC, and a positive empirical margin at every length are distinct conditions. The exact identity
\begin{equation}\label{eq:citylatitude}
 \alpha_T=r_T\ip{c_T}e
\end{equation}
shows that $\alpha_T>0$ is equivalent to a well-defined city in the open northern hemisphere. For the unit stacked history $x_T=T^{-1/2}(X_1,\ldots,X_T)$ and North-Pole history $e_T=T^{-1/2}(e,\ldots,e)$, the whole-history angle is
\begin{equation}\label{eq:targethistoryangle}
 \Theta_T=\arccos\ip{x_T}{e_T}=\arccos\alpha_T.
\end{equation}
Thus positive average IC is also equivalent to $\Theta_T<\pi/2$. This angle is not the average of the individual polar angles.

\begin{definition}[Persistent positive IC]\label{def:persistentIC}
A sequence has persistent positive IC if there are $a\in(0,1]$ and $T_0\ge1$ such that $\alpha_T\ge a$ for all $T\ge T_0$. Equivalently, $\liminf_{T\to\infty}\alpha_T>0$.
\end{definition}

\Needspace{15\baselineskip}
A city computed from an expanding history moves as observations arrive. How far can it move, and can it leave the target's hemisphere, if the signal keeps a positive average IC?

\begin{theorem}[Persistent IC confines cities and controls their motion]\label{thm:predictivemotion}
Suppose $\alpha_T\ge a>0$ for every $T\ge T_0$. Then the cities exist and, for every such $T$,
\begin{equation}\label{eq:predictivecap}
 \boxed{r_T\ge a,\qquad
 d_{\sphere}(c_T,e)\le\arccos a<\pi/2,}
\end{equation}
where $d_{\sphere}(x,y)=\arccos\ip xy$. Successive cities satisfy the chordal motion bound
\begin{equation}\label{eq:predictivestep}
 \boxed{\norm{c_{T+1}-c_T}\le\frac{1}{a\sqrt{T(T+1)}}.}
\end{equation}
No independence, stationarity, or distributional assumption is required.
\end{theorem}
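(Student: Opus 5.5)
The plan has two parts: a one-line Cauchy--Schwarz argument for the cap \cref{eq:predictivecap}, and for the motion bound \cref{eq:predictivestep} an exact chordal identity applied to the unnormalized partial sums rather than to the means.

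\emph{Confinement.} Fix $T\ge T_0$. By \cref{eq:citylatitude} and Cauchy--Schwarz with $\norm e=1$,
\[
 a\le\alpha_T=\ip{m_T}e\le r_T .
\]
Hence $r_T\ge a>0$, so $c_T$ exists. Moreover,
\[
 \ip{c_T}e=\frac{\alpha_T}{r_T}\ge\alpha_T\ge a,
\]
because $r_T\le1$, which holds since $m_T$ is an average of unit vectors. Since $\arccos$ is decreasing, $d_{\sphere}(c_T,e)\le\arccos a<\pi/2$, which is \cref{eq:predictivecap}.

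\emph{Motion.} Write $S_T=Tm_T$, so that $S_{T+1}=S_T+X_{T+1}$ and $\norm{S_{T+1}-S_T}=1$. Normalization is scale invariant, so $c_T=S_T/\norm{S_T}$ and $c_{T+1}=S_{T+1}/\norm{S_{T+1}}$. The key tool is the exact identity for nonzero $u,v$:
\[
 \norm u\norm v\,\Bigl\lVert\frac u{\norm u}-\frac v{\norm v}\Bigr\rVert^2
 =2\norm u\norm v-2\ip uv
 =\norm{u-v}^2-(\norm u-\norm v)^2 .
\]
Dropping the nonnegative radial term gives
\[
 \Bigl\lVert\frac u{\norm u}-\frac v{\norm v}\Bigr\rVert\le\frac{\norm{u-v}}{\sqrt{\norm u\norm v}} .
\]
This is the finite-difference counterpart of the derivative bound \cref{eq:derivative}. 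Apply it with $u=S_T$ and $v=S_{T+1}$. Both $T$ and $T+1$ are at least $T_0$, so the confinement step gives $\norm{S_T}=Tr_T\ge Ta$ and $\norm{S_{T+1}}\ge(T+1)a$. Therefore
\[
 \norm{c_{T+1}-c_T}\le\frac1{\sqrt{Ta\cdot(T+1)a}}=\frac1{a\sqrt{T(T+1)}} ,
\]
which is \cref{eq:predictivestep}.

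The argument is purely deterministic and uses only unit norms and the margin $a$, so no distributional assumption enters. The one step needing care is the choice to compare the unnormalized sums $S_T$ and not the means $m_T,m_{T+1}$. The sums differ by exactly one unit vector, whereas the means differ by $(X_{T+1}-m_T)/(T+1)$, whose norm can only be bounded by $2/(T+1)$. Comparing the means would therefore lose a constant factor, and using the sums yields the exact constant stated in the theorem.
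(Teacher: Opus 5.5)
Your proposal is correct and follows the paper's proof essentially step for step. You get the cap from Cauchy--Schwarz together with $r_T\le1$. You get the motion bound from the same identity
\[
\norm{S_{T+1}-S_T}^2=\bigl(\norm{S_{T+1}}-\norm{S_T}\bigr)^2+\norm{S_{T+1}}\norm{S_T}\norm{c_{T+1}-c_T}^2,
\]
applied to the unnormalized sums, dropping the radial term and using $\norm{S_T}\ge aT$.
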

\begin{proof}
Since $\norm{X_t}=1$, we have $r_T\le1$. From $\alpha_T=\ip{m_T}e$ and Cauchy--Schwarz,
\[
 r_T\ge\alpha_T\ge a,\qquad
 \ip{c_T}e=\frac{\alpha_T}{r_T}\ge\alpha_T\ge a.
\]
These inequalities give \cref{eq:predictivecap}. To control motion, write $S_T=\sum_{t=1}^T X_t=Tr_Tc_T$. The identity
\begin{equation}\label{eq:sumstepidentity}
 \norm{S_{T+1}-S_T}^2
 =\bigl(\norm{S_{T+1}}-\norm{S_T}\bigr)^2
 +\norm{S_{T+1}}\norm{S_T}\norm{c_{T+1}-c_T}^2
\end{equation}
has left-hand side $\norm{X_{T+1}}^2=1$. Dropping its first nonnegative term and using $\norm{S_T}\ge aT$ and $\norm{S_{T+1}}\ge a(T+1)$ proves \cref{eq:predictivestep}.
\end{proof}

For a platform the theorem means that the city of a persistently predictive signal changes slowly: after $T$ observations with margin $a$, one further observation moves it by less than $1/(aT)$.

The theorem says that the mean cannot collapse and that the city stays inside a fixed spherical cap. The step bound is of order $T^{-1}$, which is not summable over $T$, so it does not imply convergence. The retained-length and cap bounds are individually sharp: alternating $ae+\sqrt{1-a^2}\,u$ and $ae-\sqrt{1-a^2}\,u$, with $u\perp e$ unit, gives $r_T=a$ at every even $T$; a constant sequence $X_t=ae+\sqrt{1-a^2}\,u$ attains the cap boundary.

The same margin also supplies a sufficient assumption for the novelty problem. At a common history length, if two signals both have average IC at least $a>0$, then $r_i,r_j\ge a$ and \cref{cor:distance} gives
\begin{equation}\label{eq:predictivenovelty}
 \sin\frac{\theta_{ij}}2\ge a\sin\frac{\phi_{ij}}2.
\end{equation}
This is a deterministic separation guarantee for the predictive class. Its strength depends on the size of the IC margin; a small positive margin gives only a weak lower bound on history separation.

\subsection{Two angular conditions and their relation to IC}

Let $\theta_t=d_{\sphere}(X_t,e)\in[0,\pi]$, and let
\[
 \mathcal E=\{x\in\sphere(H):\ip xe=0\}
\]
be the equator. Angular distance to this set is
\begin{equation}\label{eq:equatordistance}
 d_{\sphere}(X_t,\mathcal E)
 :=\inf_{z\in\mathcal E}d_{\sphere}(X_t,z)
 =\left|\frac\pi2-\theta_t\right|.
\end{equation}
The phrase ``closer to the North Pole than to the equator'' can be read in two ways, and the two readings are not equivalent. The first compares the average polar angle with the pole-to-equator distance $\pi/2$. The second compares the average polar angle with the average angular distance of the observations from the equator.

\begin{example}[An average polar angle below $\pi/2$ does not imply positive IC]\label{ex:angularIC}
Fix a unit $u\perp e$ and put $v=-e/2+(\sqrt3/2)u$, so that $d_{\sphere}(v,e)=2\pi/3$. Repeat the ten-observation cycle consisting of three copies of $e$ followed by seven copies of $v$. Every prefix satisfies
\begin{equation}\label{eq:angularcounterexample}
 \frac1T\sum_{t=1}^T\theta_t\le\frac{7\pi}{15}<\frac\pi2,
 \qquad \alpha_{10k}=\frac{3-7/2}{10}=-\frac1{20}
 \quad(k\ge1).
\end{equation}
To verify the prefix bound, write $T=10k+j$, $0\le j\le9$. The number of copies of $v$ is $7k+\max\{j-3,0\}\le7T/10$. The IC sum is
\begin{equation}\label{eq:angularprefixsum}
 T\alpha_T=-\frac{k}{2}+\min\{j,3\}-\frac12\max\{j-3,0\}.
\end{equation}
At $T=63$ this is zero, so the city lies on the equator. It is strictly negative for every $T\ge64$: if $k=6$ and $j\ge4$, it is at most $-1/2$; if $k\ge7$, it is at most $-k/2+3<0$. Thus the average polar angle never exceeds $84^\circ$, yet the city lies in the southern hemisphere for every sufficiently long prefix. The failure persists under a uniform angular margin below $90^\circ$.
\end{example}

For the second interpretation, define the angular advantage
\begin{equation}\label{eq:angulargap}
 g_T=\frac1T\sum_{t=1}^T
 \bigl[d_{\sphere}(X_t,\mathcal E)-d_{\sphere}(X_t,e)\bigr].
\end{equation}
All angular quantities in the following bound are in radians.

\begin{proposition}[Angular advantage implies positive average IC]\label{prop:angulargap}
For every finite unit-vector history,
\begin{equation}\label{eq:gapIC}
 \alpha_T\ge\frac{2g_T}{\pi}.
\end{equation}
In particular, $g_T>0$ guarantees a northern city. If $g_T\ge\eta>0$ for all $T\ge T_0$, then \cref{thm:predictivemotion} applies with $a=2\eta/\pi$.
\end{proposition}
\begin{proof}
For $0\le\theta\le\pi/2$, concavity of cosine on this interval gives
\[
 \cos\theta\ge1-\frac{2\theta}{\pi}
 \ge1-\frac{4\theta}{\pi}
 =\frac2\pi\left(\left|\frac\pi2-\theta\right|-\theta\right).
\]
For $\pi/2\le\theta\le\pi$, the right-hand side equals $-1$ and is again no larger than $\cos\theta$. Averaging this pointwise inequality gives \cref{eq:gapIC}. Equality holds for histories supported on the two poles, so the coefficient $2/\pi$ cannot be increased in a universal linear lower bound of this form for $g_T>0$.
\end{proof}

This angular advantage is sufficient but not necessary for positive IC. A constant sequence at polar angle $60^\circ$ has IC $1/2$, although its distance to the equator is only $30^\circ$ and its angular advantage is negative. The class with persistent positive IC is therefore broader than the class with a persistent positive angular advantage.

Strict positivity at every length must also be distinguished from a fixed positive margin. For example, set $X_1=e$ and, for $k\ge1$, $X_{2k}=e$ and $X_{2k+1}=-e$. Then $\alpha_T=r_T$ equals $1/T$ for odd $T$ and $2/T$ for even $T$, and $g_T=(\pi/2)\alpha_T>0$ at every length. Nevertheless $r_T\to0$. A city can remain at the North Pole even as its retained length vanishes.

\subsection{From confinement to convergence}

\begin{example}[Uniform predictivity and angular advantage without convergence]\label{ex:nonconvergentpredictive}
Fix a unit $u\perp e$ and an angle $0<\beta<\pi/4$, and define
\[
 x_+=\cos\beta\,e+\sin\beta\,u,\qquad
 x_-=\cos\beta\,e-\sin\beta\,u.
\]
Let block $k$ have length $L_k=2^{2^k}$, using $x_+$ on odd blocks and $x_-$ on even blocks. Every observation has IC $\cos\beta>0$ and angular advantage $\pi/2-2\beta>0$, so both margins hold on every prefix and indeed on every finite window.

Write $N_k=\sum_{j=1}^kL_j$ and let $x^{(k)}$ denote the vector used in block $k$. Since $N_{k-1}/L_k\to0$,
\[
 \norm{m_{N_k}-x^{(k)}}
 \le\frac{2N_{k-1}}{N_k}\longrightarrow0.
\]
Normalization is continuous at each unit vector $x_\pm$. Consequently the cities at odd block endpoints approach $x_+$, while those at even block endpoints approach $x_-$. The city does not converge, despite uniform positive IC and a uniform positive angular advantage. Each successive step still obeys \cref{eq:predictivestep}.
\end{example}

Convergence to the North Pole requires control of the transverse average. Decompose
\begin{equation}\label{eq:transversemean}
 m_T=\alpha_Te+u_T,\qquad
 u_T=(\id-ee^\top)m_T\perp e.
\end{equation}

\begin{corollary}[Transverse cancellation gives North-Pole convergence]\label{cor:northconvergence}
Suppose $\alpha_T\ge a>0$ for all sufficiently large $T$ and $\norm{u_T}\to0$. Then $c_T\to e$, with the quantitative bound
\begin{equation}\label{eq:northconvergence}
 \tan d_{\sphere}(c_T,e)
 =\frac{\norm{u_T}}{\alpha_T}
 \le\frac{\norm{u_T}}{a}.
\end{equation}
If also $\alpha_T\to\alpha_*>0$, then $r_T\to\alpha_*$.
\end{corollary}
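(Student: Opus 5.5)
The proof works entirely in the orthogonal decomposition \cref{eq:transversemean}, $m_T=\alpha_Te+u_T$ with $u_T\perp e$. For $T\ge T_0$ the assumption $\alpha_T\ge a>0$ makes $m_T$ nonzero, so the city $c_T=m_T/r_T$ exists. By \cref{thm:predictivemotion} it also lies in the open northern hemisphere.

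Write $\psi_T=d_{\sphere}(c_T,e)\in[0,\pi/2)$. The coordinates of the city along $e$ and along $e^\perp$ are
\[
 \cos\psi_T=\ip{c_T}e=\frac{\alpha_T}{r_T},\qquad
 \sin\psi_T=\norm{(\id-ee^\top)c_T}=\frac{\norm{u_T}}{r_T}.
\]
The first identity is \cref{eq:citylatitude}. The second holds because the transverse part of $c_T$ is $u_T/r_T$, and it has nonnegative sine since $\psi_T\le\pi/2$.

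Dividing the two identities cancels $r_T$ and gives the equality $\tan\psi_T=\norm{u_T}/\alpha_T$. Since $\alpha_T\ge a$, the inequality $\norm{u_T}/\alpha_T\le\norm{u_T}/a$ follows, which proves \cref{eq:northconvergence}.

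If $\norm{u_T}\to0$, the right side of \cref{eq:northconvergence} tends to zero. Because $\arctan$ is continuous at $0$ and $\psi_T\in[0,\pi/2)$, this gives $\psi_T\to0$. The chordal distance then satisfies $\norm{c_T-e}=2\sin(\psi_T/2)\to0$, so $c_T\to e$.

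For the final assertion, Pythagoras in the decomposition gives $r_T^2=\alpha_T^2+\norm{u_T}^2$. If $\alpha_T\to\alpha_*$ and $\norm{u_T}\to0$, then $r_T^2\to\alpha_*^2$. Since both $r_T$ and $\alpha_*$ are positive, taking square roots gives $r_T\to\alpha_*$.

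There is no real obstacle. The only point needing care is confirming that $\psi_T\le\pi/2$, which ensures that taking $\tan$ and inverting it is legitimate. This is exactly the cap bound of \cref{thm:predictivemotion}, or equivalently $\ip{c_T}e>0$.
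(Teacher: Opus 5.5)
Your proof is correct and follows essentially the same route as the paper. Both use the orthogonal decomposition $m_T=\alpha_Te+u_T$ to read off $\cos d_{\sphere}(c_T,e)=\alpha_T/r_T$ and $\sin d_{\sphere}(c_T,e)=\norm{u_T}/r_T$, with the polar angle strictly below $\pi/2$, and then use Pythagoras, $r_T^2=\alpha_T^2+\norm{u_T}^2$, for the convergence claims.
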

\begin{proof}
Orthogonality in \cref{eq:transversemean} gives $r_T^2=\alpha_T^2+\norm{u_T}^2$. The polar angle is strictly less than $\pi/2$ and has cosine $\alpha_T/r_T$ and sine $\norm{u_T}/r_T$, proving \cref{eq:northconvergence}. Both convergence claims follow.
\end{proof}

No limit of $\alpha_T$ is needed for the directional conclusion. More generally, if $\alpha_T\to\alpha_*>0$ and $u_T\to u_*$, then
\begin{equation}\label{eq:generalnorthlimit}
 c_T\longrightarrow
 \frac{\alpha_*e+u_*}{\sqrt{\alpha_*^2+\norm{u_*}^2}}.
\end{equation}
A predictive signal can therefore have a stable city away from the North Pole. Whether the transverse average cancels, and hence whether the city converges to the North Pole, depends on the chosen frame convention. The IC margin does not: it is invariant under the residual transformations of \cref{prop:gauge}.

\Needspace{19\baselineskip}
\subsection{Nonzero population means and isotropic noise}

The deterministic results say where a city can be. Under a simple sampling model one can also say whether it settles, and how quickly.

\begin{proposition}[Nonzero means and the isotropic zero-mean regime]\label{prop:stability}
Let $X_t$ be iid unit vectors in $\R^q$, with $\mu=\E X_t$, $\rho=\norm\mu$, and $\Sigma=\Cov(X_t)$. Then
\begin{equation}\label{eq:meanlength}
 \E\norm{m_T}^2=\rho^2+\frac{1-\rho^2}{T}.
\end{equation}
If $\rho>0$, then $c_T\to c_*=\mu/\rho$ almost surely, and
\begin{equation}\label{eq:directionclt}
 \sqrt T(c_T-c_*)\dto
 N\left(0,\frac{(\id-c_*c_*^\top)\Sigma(\id-c_*c_*^\top)}{\rho^2}\right).
\end{equation}
If $X_t$ is uniform on $\sphere^{q-1}$ with $q\ge2$, then $m_T\to0$ almost surely, $\E r_T^2=1/T$, the city is uniform for every fixed $T$, and
\begin{equation}\label{eq:chisquare}
 T r_T^2\dto\chi_q^2/q.
\end{equation}
In this isotropic zero-mean model, $(c_T)$ does not converge in probability, even to a random direction.
\end{proposition}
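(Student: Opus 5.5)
The plan is to treat the mean-length identity, the nonzero-mean regime, and the isotropic regime in turn, using only the iid structure and the unit-norm constraint. For \cref{eq:meanlength}, expand
\[
 \norm{m_T}^2=\frac1{T^2}\sum_{s,t}\ip{X_s}{X_t}.
\]
The $T$ diagonal terms each equal one. Independence gives the $T(T-1)$ off-diagonal expectations as $\rho^2$. Hence $\E\norm{m_T}^2=1/T+(1-1/T)\rho^2$, which is the stated formula.

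For $\rho>0$, I will combine the strong law of large numbers, the delta method, and the derivative \cref{eq:derivative}. The strong law gives $m_T\to\mu\ne0$ almost surely. The normalization map $n$ is continuous away from zero, so $c_T=n(m_T)\to n(\mu)=c_*$. The observations are bounded, so the multivariate central limit theorem applies:
\[
 \sqrt T(m_T-\mu)\dto N(0,\Sigma).
\]
The map $n$ is differentiable at $\mu$, and by \cref{eq:derivative} its derivative is $h\mapsto(\id-c_*c_*^\top)h/\rho$. The delta method then yields the covariance in \cref{eq:directionclt}.

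For the uniform case, $\mu=0$ and $\Cov(X_t)=\id_q/q$, since the trace is one and rotational invariance makes the covariance isotropic.
\begin{itemize}
\item The strong law gives $m_T\to0$ almost surely.
\item Setting $\rho=0$ in \cref{eq:meanlength} gives $\E r_T^2=1/T$.
\item The joint law of $(X_t)$ is invariant under any fixed orthogonal map, so the law of $m_T$ is too. Hence $c_T$ (defined almost surely, since $m_T=0$ has probability zero for $q\ge2$) is uniform.
\item The central limit theorem gives $\sqrt T m_T\dto N(0,\id_q/q)$. The continuous mapping theorem applied to $\norm{\cdot}^2$ gives $Tr_T^2\dto\chi^2_q/q$.
\end{itemize}

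The main obstacle is the non-convergence in probability. Suppose $c_T\to c$ in probability for some random direction $c$. Then $c_T$ would be Cauchy in probability, so
\[
 \Prob\bigl(\norm{c_{2T}-c_T}>\eta\bigr)\to0\quad\text{for every }\eta>0.
\]
To contradict this, I will compute the joint limit of $\sqrt T(m_T,m_{2T})$. Write $\sqrt{2T}\,m_{2T}=(G+G')/\sqrt2$, where $G$ and $G'$ are the independent limits of the normalized sums over the first and second halves, each $N(0,\id_q/q)$. Then $\bigl(\sqrt T m_T,\sqrt{2T}m_{2T}\bigr)\dto\bigl(G,(G+G')/\sqrt2\bigr)$.

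Normalization is continuous except on a null set of the limit, so $(c_T,c_{2T})\dto\bigl(G/\norm G,\,(G+G')/\norm{G+G'}\bigr)$. These two directions differ with positive probability, because $G'$ is independent of $G$ and non-degenerate. By the portmanteau theorem applied to the open set $\{\norm{x-y}>\eta\}$ for a small $\eta$, $\liminf\Prob(\norm{c_{2T}-c_T}>\eta)>0$. This contradicts the Cauchy property, so $(c_T)$ does not converge in probability.
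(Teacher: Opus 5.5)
Your proposal is correct and follows essentially the same route as the paper. It uses the diagonal/off-diagonal expansion for $\E\norm{m_T}^2$, the strong law plus the delta method through \cref{eq:derivative}, and rotational invariance with the central limit theorem in the isotropic case. For the final claim it uses the two-block joint limit $(c_T,c_{2T})\dto\bigl(G/\norm G,(G+G')/\norm{G+G'}\bigr)$ to show that $(c_T)$ is not Cauchy in probability, and your explicit portmanteau step and the observation that $m_T\neq0$ almost surely spell out details the paper leaves brief.
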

\begin{proof}
Expanding the squared norm of the sample mean gives $1/T$ from the diagonal terms and $(1-1/T)\norm\mu^2$ from the distinct-date terms, proving \cref{eq:meanlength}. If $\rho>0$, the strong law and continuity of normalization give almost-sure convergence; the multivariate central limit theorem and \cref{eq:derivative} give \cref{eq:directionclt}. Cities are defined for all sufficiently large $T$ almost surely in this case; arbitrary definitions at earlier zero means do not change the limit.

Under uniform spherical sampling, the sample sum is rotationally invariant and nonzero almost surely for each fixed $T$. Its direction is therefore uniform. The strong law gives $m_T\to0$, and the central limit theorem gives $\sqrt Tm_T\dto N(0,\id_q/q)$, implying \cref{eq:chisquare}.

For the final claim, split the first $2T$ observations into two independent blocks. Their normalized sums converge jointly to independent isotropic nondegenerate Gaussians $Z_1,Z_2$. Consequently $(c_T,c_{2T})$ converges in distribution to
\[
 \left(\frac{Z_1}{\norm{Z_1}},\frac{Z_1+Z_2}{\norm{Z_1+Z_2}}\right).
\]
The distance between these limiting directions has positive probability of exceeding some fixed positive number. Hence $(c_T)$ is not Cauchy in probability, which rules out convergence in probability to any random vector.
\end{proof}

The contrast shows why retained length is the quantity to watch. With a nonzero population mean of length $\rho$, the city's sampling error is of order $1/(\rho\sqrt T)$, so a short mean vector means a noisy city. With isotropic zero-mean noise, the city is a uniformly random direction at every history length, however long.

In particular, if the iid population has positive mean IC $p=\ip\mu e>0$, the strong law gives $\alpha_T\to p$ almost surely. For every fixed $a\in(0,p)$, \cref{thm:predictivemotion} therefore applies after an almost surely finite, possibly sample-dependent $T_0$. If also $(\id-ee^\top)\mu=0$, then the city converges to $e$ by \cref{cor:northconvergence}. Population predictivity thus yields an eventual empirical margin under this sampling model; it does not require a positive IC on every date or every short prefix.

The nonzero-mean conclusion is the standard asymptotic theory of the sample mean direction~\cite{hendriks}. We include that conclusion because its denominator, the population mean length $\rho$, is the same quantity that controls the compression problem. The chi-square limit in \cref{eq:chisquare} is the classical Rayleigh statistic for testing uniformity~\cite{mardiajupp}. The city itself always has unit norm; what vanishes under zero-mean noise is the unnormalized average, and with it the retained length.

\subsection{Zero IC, zero mean, and isotropy are different assumptions}

A signal with zero average target projection can have a nonzero transverse mean. Its city may be stable while carrying no average IC. Conversely, a zero vector mean does not imply isotropy. If $\sqrt Tm_T\dto Z$ for a nondegenerate anisotropic Gaussian $Z$, then
\begin{equation}
 c_T\dto Z/\norm Z,
\end{equation}
which is generally not uniform on the sphere. To claim that noise visits all directions, one needs an isotropic model, or at least one with sufficiently broad support; zero IC alone is not enough.

\subsection{Expanding and rolling windows}

For a rolling window of fixed width $W$, put
\begin{equation}
 m_{t,W}=\frac1W\sum_{s=t-W+1}^tX_s.
\end{equation}
If each of two successive rolling windows has average IC at least $a>0$, its city obeys the cap and retained-length bounds in \cref{eq:predictivecap}. For $c_{t,W}=m_{t,W}/\norm{m_{t,W}}$, the corresponding motion bound is
\begin{equation}\label{eq:rollingpredictivestep}
 \norm{c_{t+1,W}-c_{t,W}}\le\frac{2}{aW}.
\end{equation}
Indeed, the unnormalized window sums differ by $X_{t+1}-X_{t-W+1}$, of norm at most two, and each has norm at least $aW$; apply the identity used in \cref{eq:sumstepidentity}. With fixed $W$, this bound does not decrease as calendar time advances. A positive margin on expanding prefixes does not imply the same margin on rolling windows.

Under iid zero-mean sampling, $\E\norm{m_{t,W}}^2=1/W$ at every eligible date. The distribution is stationary; it does not shrink merely because calendar time advances. Adjacent windows overlap and their cities are dependent.

\begin{proposition}[Full-sphere visits for isotropic rolling cities]\label{prop:rolling}
Under iid uniform sampling on $\sphere^{q-1}$, $q\ge2$, and fixed $W$, the rolling city visits every nonempty open subset of the sphere infinitely often almost surely.
\end{proposition}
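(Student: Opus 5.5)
The plan is to use disjoint windows, which are independent, together with the second Borel--Cantelli lemma. Fix a nonempty open set $U\subseteq\sphere^{q-1}$. The sphere has a countable base of open sets, and countably many almost-sure events intersect to an almost-sure event. It therefore suffices to show that, for each fixed $U$, the event $\{c_{t,W}\in U\}$ occurs for infinitely many $t$ almost surely. Only the eligible dates $t\ge W$ are considered.

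The first step is to consider the non-overlapping windows ending at $t_k=kW$, for $k\ge1$. The window means $m_{kW,W}$ are built from disjoint blocks of iid observations, so they are independent and identically distributed. Under uniform sampling the window sum is rotationally invariant and nonzero almost surely, as in the proof of \cref{prop:stability}. Hence each window city $c_{kW,W}$ is uniform on $\sphere^{q-1}$. The events $A_k=\{c_{kW,W}\in U\}$ are therefore independent, each with probability $\sigma_q(U)$.

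The second step is to check that $\sigma_q(U)>0$. This holds because normalized surface measure charges every nonempty open set. Then $\sum_k\Prob(A_k)=\infty$, and the second Borel--Cantelli lemma shows that infinitely many $A_k$ occur almost surely. The rolling city therefore visits $U$ infinitely often along the subsequence $t=kW$, and so also along all $t$.

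For $W=1$ the argument still works: the city is $X_t$ itself, and the same reasoning applies directly. The main obstacle is only the almost-sure nonvanishing of each window sum, which is needed so that the city is defined and uniform. For $q\ge2$, the sum of $W$ iid uniform unit vectors has a distribution without an atom at zero. To see this, condition on the first $W-1$ observations. The sum is zero only if $X_W$ equals minus their sum, which is a single point and so has $\sigma_q$-measure zero. Rotational invariance of the law of the sum then makes its direction uniform, exactly as in \cref{prop:stability}.
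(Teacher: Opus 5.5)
Your proposal is correct and follows essentially the same route as the paper: independent uniform cities on the disjoint windows ending at $W,2W,3W,\ldots$, the second Borel--Cantelli lemma for each open set of positive surface measure, and a countable base (each nonempty open set contains a basis element) to make the conclusion hold simultaneously for all open sets. Your explicit check that each window sum is almost surely nonzero, obtained by conditioning on the first $W-1$ observations, spells out a detail the paper leaves implicit.
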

\begin{proof}
At dates $W,2W,3W,\ldots$, the windows are disjoint, so their city directions are independent and uniform. Every nonempty open set has positive surface measure and is visited infinitely often by the second Borel--Cantelli lemma~\cite[Section~2.3]{durrett}. Apply this statement simultaneously to a countable basis for the sphere's topology. Every nonempty open set contains a basis element.
\end{proof}

Under an explicit model, this makes ``spherical wandering'' precise. It does not identify the motion of a rolling city with spherical Brownian motion, and it does not cover dependence, moving population means, or frame transport, each of which needs additional modeling. \Cref{sec:transition} gives an explicit sampling model for the transition regime, in which population means and sampling noise have comparable magnitude.

\subsection{A specified transition regime}\label{sec:transition}

Mean directions and sampling fluctuations contribute at the same scale when $\norm{\mu_T}$ is of order $T^{-1/2}$. The following triangular-array formulation gives explicit sufficient conditions. Here the population law may vary with the row length $T$; this is a sequence of sampling problems, rather than successive prefixes of one fixed process.

\Needspace{14\baselineskip}
Between a clear nonzero mean and pure noise lies the case most relevant to weak signals: a mean that is real but of the same order as its own sampling error.

\begin{proposition}[Local mean and noise on the same scale]\label{prop:localalternatives}
Fix $q\ge2$. For each $T$, let $X_{T,1},\ldots,X_{T,T}$ be iid unit vectors in $\R^q$, with mean $\mu_T$ and covariance $\Sigma_T$. Suppose
\[
 \sqrt{T}\mu_T\longrightarrow h,\qquad
 \Sigma_T\longrightarrow\Sigma,
\]
where $\Sigma$ necessarily satisfies $\tr\Sigma=1$ and need not be invertible. Put $m_T=T^{-1}\sum_{t=1}^T X_{T,t}$ and $r_T=\norm{m_T}$, and define $c_T=m_T/r_T$ with an arbitrary unit value when $m_T=0$. If $Z\sim N(0,\Sigma)$, then
\begin{equation}\label{eq:localalternatives}
 \bigl(\sqrt{T}r_T,c_T\bigr)
 \dto\left(\norm{h+Z},\frac{h+Z}{\norm{h+Z}}\right).
\end{equation}
\end{proposition}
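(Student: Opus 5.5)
The plan is to apply the Lindeberg--Feller central limit theorem for triangular arrays to the centered sums, and then push the limit through a map that is continuous almost everywhere. First I would decompose
\[
 \sqrt T m_T=\sqrt T\mu_T+\frac1{\sqrt T}\sum_{t=1}^T(X_{T,t}-\mu_T).
\]
The first term converges to $h$ by hypothesis. For the second term, each row consists of iid vectors with covariance $\Sigma_T\to\Sigma$. Each summand is bounded in norm by $2$, since $\norm{X_{T,t}}=1$ and $\norm{\mu_T}\le1$. After division by $\sqrt T$, every summand therefore has norm at most $2/\sqrt T\to0$, so the Lindeberg condition holds trivially: the truncated second moments vanish identically once $2/\sqrt T<\varepsilon$. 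Applying the multivariate Lindeberg--Feller theorem through the Cram\'er--Wold device, with each projection $\ip{v}{\cdot}$ having variance $v^\top\Sigma_Tv\to v^\top\Sigma v$, gives convergence of the centered sum to $Z\sim N(0,\Sigma)$. Slutsky's lemma then yields $\sqrt T m_T\dto h+Z$. The remark that $\tr\Sigma=1$ is not needed for the limit itself; it follows from $\tr\Sigma_T=1-\norm{\mu_T}^2\to1$.

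Next, I would apply the continuous mapping theorem to
\[
 F(y)=\bigl(\norm y,\,y/\norm y\bigr),
\]
which is continuous on $\R^q\setminus\{0\}$, with $F(0)$ set to an arbitrary unit value. Because $\sqrt T r_T=\norm{\sqrt T m_T}$ and $c_T=\sqrt T m_T/\norm{\sqrt T m_T}$ are invariant under the positive scaling, this map produces exactly the pair in \cref{eq:localalternatives}. Convergence through $F$ requires $\Prob(h+Z=0)=0$.

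I expect this last point to be the main obstacle, because $\Sigma$ need not be invertible. If $\Sigma$ has rank at least one, then $h+Z$ is Gaussian on the affine subspace $h+\operatorname{range}\Sigma$, which has positive dimension because $\tr\Sigma=1$ forces $\Sigma\neq0$. The point $0$ is a single point of that subspace, and a nondegenerate Gaussian on a space of positive dimension gives zero mass to points. This holds whether or not $0$ lies in the affine subspace, so the discontinuity set of $F$ is null for the limit law and the continuous mapping theorem applies. A secondary check is the case $h=0$ with $0$ in the support; it is already covered, because points are null sets. The arbitrary convention for $c_T$ when $m_T=0$ therefore does not affect the limit.
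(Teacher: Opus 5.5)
Your proposal is correct and follows essentially the same route as the paper: bounded summands make the Lindeberg condition automatic, Cram\'er--Wold and the mean scaling give $\sqrt T m_T\dto h+Z$, $\tr\Sigma=\lim(1-\norm{\mu_T}^2)=1$ rules out $\Sigma=0$ so that $h+Z$ has no atom at the origin, and continuous mapping finishes. The paper is only slightly more explicit on two small points you pass over quickly: it handles directions with $v^\top\Sigma v=0$ separately by Chebyshev, since the scalar Lindeberg--Feller theorem is stated for positive limiting variance, and it records $\Prob(m_T=0)\to0$ (the set $\{0\}$ is closed and null for the limit), which covers conventions at $m_T=0$ that vary with $T$, whereas your fixed $F(0)$ covers a fixed convention.
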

\begin{proof}
For any fixed $v\in\R^q$, the centered row summands
\[
 \frac{\ip{v}{X_{T,t}-\mu_T}}{\sqrt{T}}
\]
are bounded in absolute value by $2\norm v/\sqrt{T}$. Their Lindeberg condition is therefore automatic, and the variance of their sum is $v^\top\Sigma_Tv\to v^\top\Sigma v$. If the limiting variance is positive, apply the scalar triangular-array central limit theorem~\cite[Section~3.4]{durrett}; if it is zero, Chebyshev's inequality gives convergence to zero in probability. The Cram\'er--Wold argument~\cite[Section~3.10]{durrett} and the assumed mean scaling therefore give
\[
 \sqrt{T}(m_T-\mu_T)\dto Z,\qquad
 \sqrt{T}m_T\dto h+Z.
\]
The covariance limit is positive semidefinite and, because the observations are unit vectors and $\mu_T\to0$,
\[
 \tr\Sigma=\lim_{T\to\infty}\tr\Sigma_T
 =\lim_{T\to\infty}\bigl(1-\norm{\mu_T}^2\bigr)=1.
\]
Thus $\Sigma\ne0$ automatically. Some one-dimensional projection of $Z$ has positive variance and a continuous Gaussian law, which implies $\Prob(h+Z=0)=0$ even when $\Sigma$ is singular. Continuous mapping gives the joint limit. The same absence of an atom at zero gives $\Prob(m_T=0)\to0$, making the arbitrary convention immaterial.
\end{proof}

In this regime the city is not stable, and for $h\ne0$ it is not uniformly random either: it is the direction of a fixed displacement $h$ plus Gaussian noise of comparable size. The small retained lengths in the competition sample (median $0.0338$) make this regime relevant there, without establishing that it applies.

For $h=0$ and $\Sigma=\id_q/q$, the limiting direction is uniform. Singular covariance limits are also allowed: if $h=0$ and $\Sigma=ww^\top$ for a unit vector $w$, the limiting direction takes the two values $w$ and $-w$ with equal probability. For nonzero finite $h$, the limit retains both a mean displacement and random fluctuations, while $r_T$ remains of order $T^{-1/2}$ in probability. The proposition describes distributions at the transition scale; a model for joint motion across dates requires additional temporal assumptions.

\Needspace{17\baselineskip}
\section{Numerical checks and competition evidence}\label{sec:empirical}

\subsection{Synthetic checks}

The preceding claims are analytic. We report three synthetic checks of their formulas, using random seed 260926 for the random samples. \Cref{ex:finite} gives the correlation $0.985$ and datewise IC $0.05$ directly. For a six-date, four-dimensional pair of normalized Gaussian vector samples, all $6!=720$ permutations were enumerated. The results were
\begin{center}
\small
\begin{tabular}{@{}lrr@{}}
\toprule
Quantity & Enumeration & Formula\\
\midrule
Mean & $-0.045388$ & $-0.045388$\\
Variance & $0.051354$ & $0.051354$\\
\bottomrule
\end{tabular}
\end{center}
The enumerated moments and formulas agree to machine precision; the table shows six decimal places. The nested intervals were also checked on 1,000 synthetic 24-date pairs in dimension five, using partitions into 1, 2, 4, 8, and 24 equal contiguous blocks. No numerical violation was observed. These checks verify numerical implementations of the identities.

\subsection{Competition May 2026: sample and measurement}\label{sec:competition}

We study an export of the prediction histories submitted to \href{https://www.alphanova.tech}{AlphaNova}'s Competition May 2026. The histories are stored in gauge space, that is, in the common target-aligned frame of \cref{sec:setup} under the export's stored frame convention. ``May 2026'' identifies when the signals were submitted; the histories themselves span a multi-year historical evaluation. The study is therefore retrospective.

The export contains 919 histories with 19 coordinates per row ($q=19$, corresponding to 20 demeaned assets). File hashes and numeric-content hashes identify the same 843 distinct histories, with 76 additional copies in 17 groups. Nine distinct histories cover fewer than 77 of the 80 reference month blocks and are excluded from the main comparison. The remaining 834 histories share 56,120 stored rows. Blocks are aligned by their starting row labels, with observations keyed by block and row label. Calendar provenance and repeated boundary observations remain subject to the limitations below.

The main sample, denoted U, enforces the unit-vector condition. A zero row has norm below $10^{-12}$; a unit row has norm within $10^{-9}$ of one. There are 400 shared non-unit rows across the full reference layout, five per month; 378 lie on the main common support. We build U in three steps. First, we drop those 378 rows. Second, we exclude 58 signals that have more than $1\%$ zero rows on that support. Third, we drop every row position at which any retained signal has a zero or non-unit row: 3,907 zero-row positions and two other non-unit positions. The result is 776 distinct histories on 51,833 common rows in 76 month blocks. These data-quality rules were fixed before the separation statistics were computed.

The 378-row count reflects incomplete month ends, rather than 76 complete sets of five. The shared non-unit rows are the last five positions of each reference block. Months 0--74 contribute $75\times5=375$ rows. In month 75, 19 histories stop at row 724 of 726, removing two of those five positions from the common support; three remain. The support also includes only the first 178 rows of month 76, before its non-unit month-end positions, so that block contributes none. Thus the count is $375+3+0=378$. After all row exclusions, U has observations only in months 0--75.

For each of the $\binom{776}{2}=300{,}700$ unordered pairs, means, retained lengths, and cities are recomputed on the same support. We calculate $D_{ij}$, $C_{ij}$, and
\[
 K_{ij}=D_{ij}-r_ir_jC_{ij},\qquad
 \Delta_{ij}=\arccos D_{ij}-\arccos C_{ij}.
\]
Separation is \emph{preserved} when $D_{ij}\le C_{ij}$, equivalently $\Delta_{ij}\ge0$: the city angle is then a conservative summary of the full-history angle. It is \emph{overstated} when $D_{ij}>C_{ij}$. Numerical ties within $10^{-10}$ are counted as preserved; the 17 ties are pairs identical on the retained support. All comparisons are signed. Each compares one pair's city angle with the same pair's history angle; we do not compare a signal's nearest neighbor among cities with its possibly different nearest neighbor among histories.

\Needspace{11\baselineskip}
\subsection{Positive-IC signals and conservative city separation}\label{sec:positiveic}

Of the 776 histories in U, 731 have strictly positive average IC on the common evaluation support. Their 266,815 unordered pairs, in each of which both signals have positive average IC, form the \emph{positive-IC subset} used for the following descriptive comparisons; we also call them \emph{both-positive} pairs. History separation is at least city separation for $157{,}198/266{,}815=58.9\%$ of these pairs. Among the 210,951 pairs whose city angle is at most $90^\circ$, the fraction is $156{,}362/210{,}951=74.1\%$. Beyond $90^\circ$, it is $836/55{,}864=1.5\%$.

In this subset, then, the city angle is a conservative estimate of the history angle for a majority of pairs, and for about three quarters of the pairs whose city angle is at most $90^\circ$. \Cref{tab:empiricalrates} compares the positive-IC subset, its complement, and the full sample. The $74.1\%$ figure is an average over city angles from $0^\circ$ to $90^\circ$; the rate is not the same at every angle, as \cref{fig:empiricalbins} shows. The table's northern null row gives reference limits for the both-positive selection only. It is not a calibrated prediction for this library, and it is not the appropriate selection law for the complement.

The positive-IC subset dominates the full sample, so comparing those two hides how different the complement is. The complement consists of the 33,885 pairs that involve at least one of the 45 signals without strictly positive average IC. Subtracting the positive-IC counts from the full-sample counts gives 19,660 complement pairs with $C\ge0$, of which 18,198 preserve separation, and 14,225 with $C<0$, of which 531 preserve it. Three contrasts follow. First, $58.0\%$ of complement pairs have nonnegative city correlation, against $79.1\%$ of both-positive pairs. Second, for $C\ge0$, preservation is $92.6\%$ in the complement and $74.1\%$ among both-positive pairs. Third, for $C<0$, it is $3.7\%$ and $1.5\%$, respectively. Overall, the complement preserves separation in $18{,}729/33{,}885=55.3\%$ of cases. In this sample, therefore, the sign of average IC is associated with different city geometry and different conditional preservation. These comparisons are descriptive; they do not identify a causal effect of selecting for positive IC.

\begin{table}[!htb]
\centering\small
\caption{Preservation by empirical IC-sign group, with $T=51{,}833$. Conditional columns report the frequency of $D\le C$ within the indicated city-correlation regime. The complement consists of pairs with at least one nonpositive mean IC. The northern null row applies only to the both-positive selection, with $q=19$ and $T\to\infty$; it is not a benchmark for the complement. Its limits also hold under \cref{thm:dependentordering} when $a_T\to0$.}
\label{tab:empiricalrates}
\begin{tabular}{@{}lrrrrr@{}}
\toprule
Pair set & Pairs & \shortstack{Share with\\$C\ge0$} & \shortstack{Preserved\\given $C\ge0$} & \shortstack{Preserved\\given $C<0$} & \shortstack{Preserved\\overall}\\
\midrule
Both mean ICs positive & 266,815 & $79.1\%$ & $74.1\%$ & $1.5\%$ & $58.9\%$\\
\shortstack[l]{At least one mean IC\\not positive} & 33,885 & $58.0\%$ & $92.6\%$ & $3.7\%$ & $55.3\%$\\
All pairs in U & 300,700 & $76.7\%$ & $75.7\%$ & $2.0\%$ & $58.5\%$\\
\midrule
\shortstack[l]{Northern null limit\\(both positive only)} & --- & $56.0\%$ & $100\%$ & $0\%$ & $56.0\%$\\
\bottomrule
\end{tabular}
\end{table}

In the full sample U, including signals without positive average IC, preservation occurs for $175{,}927/300{,}700=58.5\%$ of all pairs, $174{,}560/230{,}611=75.7\%$ of nonnegative-city pairs, and $1{,}367/70{,}089=2.0\%$ of negative-city pairs. These are pair-weighted rates for the ordering $D\le C$. The separate threshold question is evaluated in \cref{sec:screening}. In the remaining 124,773 pairs ($41.5\%$), the city angle overstates the history angle. The median shortfall is $11.9^\circ$, and for one quarter of these pairs it exceeds $21.7^\circ$. Both how often preservation fails and by how much therefore matter.

\Needspace{9\baselineskip}
Among the 230,611 pairs with $C_{ij}\ge0$, the exact decomposition is
\begin{center}
\small
\begin{tabular}{@{}lrr@{}}
\toprule
Region & Pairs & Share of $C_{ij}\ge0$\\
\midrule
$K_{ij}\le0$ & 19,285 & $8.4\%$\\
$0<K_{ij}\le(1-r_ir_j)C_{ij}$ & 155,275 & $67.3\%$\\
$K_{ij}>(1-r_ir_j)C_{ij}$ & 56,051 & $24.3\%$\\
\bottomrule
\end{tabular}
\end{center}
Most preserved pairs therefore lie in the admissible positive-residual band, not in the sufficient region $K_{ij}\le0$. For both-positive pairs the three shares are $6.9\%$, $67.2\%$, and $25.9\%$. In the complement, subtraction gives 4,667 nonnegative-city pairs preserved with $K\le0$, 13,531 preserved with $0<K\le(1-r_ir_j)C$, and 1,462 violations, or $23.7\%$, $68.8\%$, and $7.4\%$, respectively. Conditional preservation is thus 18.4 percentage points higher in the complement than among both-positive pairs. Of that gap, 16.8 points come from the complement's larger share of nonpositive residuals and only 1.6 points from the positive-residual band. This is an arithmetic decomposition of the observed gap, not a statement about separate causal effects.

The same contrast in residual signs appears when all pairs in each group are counted, not only those with $C\ge0$. There are 33,961 full-sample pairs with $K\le0$, including 23,919 both-positive pairs. The complement therefore has $10{,}042/33{,}885=29.6\%$ with $K\le0$, against $9.0\%$ in the both-positive subset. Equivalently, the residual cross-moment is positive for $23{,}843/33{,}885=70.4\%$ of complement pairs and $242{,}896/266{,}815=91.0\%$ of both-positive pairs. In this sample, positive residual agreement is thus more frequent among positive-IC pairs. Residual signs alone neither show that the residuals are larger in magnitude nor identify a common-factor mechanism.

Retained lengths are small: the median is $r_i=0.0338$, the 95th percentile $0.1177$, and the maximum $0.2567$. Hence $|D_{ij}-K_{ij}|=|r_ir_jC_{ij}|\le0.0659$ for every pair in U, so the residual is an important part of the observed history agreement. Because the retained lengths are small, the full-history certificate in \cref{cor:screen} certifies no pair at threshold $0.5$: even at the maximum length, its signed upper endpoint is at least $1-2(0.2567)^2\simeq0.8682$. By the sharpness part of \cref{thm:sharp}, this is a limit on the information in a pair's cities and retained lengths, not a slack inequality. \Cref{sec:blockexperiment} applies the block refinement of \cref{thm:blocks} to the same pairs. Small retained lengths alone, however, neither force $D_{ij}$ near zero nor determine the sign of $D_{ij}-C_{ij}$. The exact criterion remains $K_{ij}\le(1-r_ir_j)C_{ij}$, including when $C_{ij}<0$.

\begin{figure}[!htb]
\centering
\begin{tikzpicture}
\begin{axis}[
 width=.89\textwidth,height=5.7cm,
 xmin=-1,xmax=1,ymin=0,ymax=100,
 xtick={-1,-.5,0,.5,1},ytick={0,25,50,75,100},
 xlabel={City correlation $C_{ij}$},
 ylabel={Preservation frequency (\%)},
 tick label style={font=\small},label style={font=\small},
 axis line style={muted},tick style={muted},
 grid=major,grid style={rulegray},
 legend style={draw=none,font=\small,at={(.02,.98)},anchor=north west}
]
\addplot[navy,thick,mark=*,mark size=1.8pt] coordinates {
(-0.95,0.00000000)
(-0.85,0.00000000)
(-0.75,0.00000000)
(-0.65,0.00000000)
(-0.55,0.00000000)
(-0.45,0.05543237)
(-0.35,0.09504198)
(-0.25,0.19750520)
(-0.15,0.38951841)
(-0.05,4.60526316)
(0.05,50.44449093)
(0.15,58.46482349)
(0.25,65.62489752)
(0.35,74.81247694)
(0.45,83.87875536)
(0.55,90.92527864)
(0.65,95.82815306)
(0.75,98.22238568)
(0.85,99.34014474)
(0.95,99.47855994)
};
\addlegendentry{Observed pairs in U}
\addplot[gold,dashed,forget plot] coordinates {(0,0) (0,100)};
\end{axis}
\end{tikzpicture}
\caption{Observed preservation by city correlation, in bins of width $0.1$ (points at bin centers). Frequencies rise from $50.4\%$ for $0\le C_{ij}<0.1$ to above $99\%$ for $C_{ij}>0.8$. Bin sizes range from 370 to 38,246 pairs. The highest rates occur for closely aligned cities. Because the threshold $(1-r_ir_j)C_{ij}$ itself increases approximately with $C_{ij}$ when retained lengths are small, this profile alone does not establish the predictive information in city alignment; \cref{sec:screening} supplies the threshold-specific comparison with base rates.}
\label{fig:empiricalbins}
\end{figure}

\subsection{Threshold screening}\label{sec:screening}

Fix a history-correlation target $\tau$ and a city-correlation cutoff $c$. The screen accepts a pair when $C_{ij}\le c$; success means $D_{ij}\le\tau$. With matched cutoffs $c=\tau$, this asks whether a city angle of at least $\arccos\tau$ identifies a history angle of at least the same size. Screening results below use all 776 signals in U, without a positive-IC restriction. All probabilities in this empirical section denote fractions of the observed pairs. We report the conditional success rate and its improvement over the base rate,
\[
 \Prob(D\le\tau\mid C\le c),\qquad
 \Prob(D\le\tau\mid C\le c)-\Prob(D\le\tau),
\]
together with pair retention, $\Prob(C\le c)$. For an accepted pair, threshold success asks less than the preservation event $D\le C$: the history angle must clear the threshold, but it need not be as large as the city angle. Thus, when $C\le\tau$, preservation is sufficient for $D\le\tau$ but not necessary.

\begin{table}[!htb]
\centering\small
\caption{Matched-threshold screening on U: accept $C\le\tau$ and require $D\le\tau$. Success is conditional on acceptance; the base rate uses all 300,700 pairs. Improvement is in percentage points (pp).}
\label{tab:screening}
\begin{tabular}{@{}rrrrrr@{}}
\toprule
$\tau$ & Accepted pairs & Retained & Success & Base rate & Improvement\\
\midrule
$0.75$ & 255,790 & $85.1\%$ & $99.5\%$ & $97.7\%$ & $+1.8$ pp\\
$0.50$ & 217,926 & $72.5\%$ & $94.3\%$ & $85.0\%$ & $+9.3$ pp\\
$0.25$ & 160,414 & $53.3\%$ & $75.6\%$ & $56.2\%$ & $+19.4$ pp\\
$0$ & 70,089 & $23.3\%$ & $21.2\%$ & $11.2\%$ & $+10.0$ pp\\
$-0.50$ & 5,940 & $2.0\%$ & $3.7\%$ & $0.073\%$ & $+3.6$ pp\\
\bottomrule
\end{tabular}
\end{table}

At the primary threshold $\tau=0.5$, corresponding to $60^\circ$, the screen accepts 217,926 pairs, of which 205,490 meet the history threshold. The $94.3\%$ success rate exceeds the $85.0\%$ base rate by $9.3$ percentage points. Put differently, the proportion of pairs with $D>0.5$ falls from $15.0\%$ before screening to $5.7\%$ among accepted pairs, a relative reduction of $62.0\%$. These rates are conditional on acceptance. Conditioning instead on the history outcome gives a different picture: $27.5\%$ of all pairs with $D>0.5$ still pass the screen, and $19.6\%$ of all pairs with $D\le0.5$ are rejected, so the screen retains $80.4\%$ of the acceptable pairs. Among rejected pairs, $32{,}715/82{,}774=39.5\%$ exceed the history-correlation limit and $60.5\%$ meet it. \Cref{tab:screeningcounts} gives the complete counts.

\begin{table}[!htb]
\centering\small
\caption{Pair counts at the primary threshold. The 94.3\% success rate is $205{,}490/217{,}926$; the 85.0\% base rate is $255{,}549/300{,}700$.}
\label{tab:screeningcounts}
\begin{tabular}{@{}lrrr@{}}
\toprule
City decision & $D\le0.5$ & $D>0.5$ & Total\\
\midrule
Accepted: $C\le0.5$ & 205,490 & 12,436 & 217,926\\
Rejected: $C>0.5$ & 50,059 & 32,715 & 82,774\\
\midrule
Total & 255,549 & 45,151 & 300,700\\
\bottomrule
\end{tabular}
\end{table}

Each row of \cref{tab:screening} changes the city cutoff and the history target together. At $\tau=0.75$, success is $99.5\%$, but the base rate is already $97.7\%$. At $\tau=0.25$, the improvement is $19.4$ percentage points, the largest among the reported thresholds. At $\tau=0$ and $-0.5$, success exceeds the base rate but remains only $21.2\%$ and $3.7\%$, respectively: opposite city directions do not usually imply nonpositive or equally negative history correlation. At $\tau=-0.5$, all 220 pairs with $D\le-0.5$ pass the screen, so the exact success and base fractions are $220/5{,}940$ and $220/300{,}700$, respectively. These comparisons are descriptive and do not identify a universally optimal threshold.

To isolate the effect of tightening the city screen, \cref{tab:fixedtarget} holds the history target at $D\le0.5$. Moving from $c=0.5$ to $c=0$ increases success by only $2.5$ percentage points, from $94.3\%$ to $96.8\%$, while pair retention falls from $72.5\%$ to $23.3\%$ and acceptable-pair retention falls from $80.4\%$ to $26.5\%$. In this sample, the tighter cutoff buys a small gain in success rate at the cost of most acceptable pairs.

\begin{table}[!htb]
\centering\small
\caption{A fixed history target, $D\le0.5$, with varying city cutoff. Acceptable-pair retention is $\Prob(C\le c\mid D\le0.5)$. The no-screen baseline includes all 300,700 pairs.}
\label{tab:fixedtarget}
\begin{tabular}{@{}lrrr@{}}
\toprule
City cutoff $c$ & Pairs retained & Success & Acceptable pairs retained\\
\midrule
No screen & $100.0\%$ & $85.0\%$ & $100.0\%$\\
$0.75$ & $85.1\%$ & $92.4\%$ & $92.5\%$\\
$0.50$ & $72.5\%$ & $94.3\%$ & $80.4\%$\\
$0.25$ & $53.3\%$ & $95.8\%$ & $60.1\%$\\
$0$ & $23.3\%$ & $96.8\%$ & $26.5\%$\\
$-0.50$ & $2.0\%$ & $96.4\%$ & $2.2\%$\\
\bottomrule
\end{tabular}
\end{table}

\subsection{History correlations and screening uncertainty}\label{sec:screenuncertainty}

The distribution of $D$ explains the high base rate at the primary threshold. Its median is $0.203$, mean $0.238$, and interquartile range $[0.021,0.401]$; its 5th and 95th percentiles are $-0.012$ and $0.665$. The fractions with $D\le0$, $0.25$, $0.5$, and $0.75$ are $11.2\%$, $56.2\%$, $85.0\%$, and $97.7\%$, respectively. City and history correlation are associated: their Spearman rank correlation is $0.561$, with a month-block bootstrap median of $0.556$ and a 95\% percentile range of $[0.429,0.649]$. Thus $C_{ij}$ can be informative about $D_{ij}$ even though the explicit mean term $r_ir_jC_{ij}$ contributes little to it: city alignment and residual co-movement need not be independent.

Pairs share signals and signals share scientists, so the 300,700 pair comparisons cannot be treated as independent trials. To indicate variability over time, we resample the 76 retained month blocks with replacement, a nonoverlapping block bootstrap~\cite{carlstein,lahiri}, apply the same resampled months to every signal, and recompute means, cities, $C$, and $D$ in each replicate. The screening bootstrap uses 1,000 replicates and seed 20261002. \Cref{tab:screenbootstrap} reports the original estimates, bootstrap medians, and 2.5--97.5 percentile ranges.

\begin{table}[!htb]
\centering\small
\caption{Month-block resampling on U. Screening rows use $c=\tau=0.5$; improvement is in percentage points. Angle preservation uses $D\le C$ over all pairs.}
\label{tab:screenbootstrap}
\begin{tabular}{@{}lrrr@{}}
\toprule
Statistic & Original & Bootstrap median & 95\% range\\
\midrule
Pairs accepted & $72.5\%$ & $74.2\%$ & $70.4$--$76.6\%$\\
Success among accepted & $94.3\%$ & $94.1\%$ & $92.7$--$95.2\%$\\
Base rate & $85.0\%$ & $84.9\%$ & $83.3$--$86.6\%$\\
Improvement & $9.3$ pp & $9.1$ pp & $8.2$--$10.0$ pp\\
Accepted given $D>0.5$ & $27.5\%$ & $29.1\%$ & $24.8$--$33.4\%$\\
Rejected given $D\le0.5$ & $19.6\%$ & $17.8\%$ & $15.5$--$22.3\%$\\
\midrule
Angle preservation & $58.5\%$ & $54.2\%$ & $47.5$--$61.6\%$\\
\bottomrule
\end{tabular}
\end{table}

Not every bootstrap distribution is centered on its original estimate. At the primary cutoff, the bootstrap median of pair acceptance is $1.7$ percentage points above the original value, and that of angle preservation is $4.3$ points below it ($54.2\%$ against $58.5\%$). Resampling repeats some months and omits others; because the mean vectors are small, the resulting variation can materially change city directions and the distribution of $C$. This is consistent with the observed shift but does not, taken alone, isolate its cause. Screening success and improvement at the primary cutoff each shift by only $-0.2$ points. The improvement range of $8.2$--$10.0$ points is therefore informative about this resampling scheme; it is not a guarantee for future signals or dates. Resampling months independently preserves shared movement within each selected block. It does not preserve dependence between successive months, and it does not jointly model uncertainty across scientists and time; \cref{sec:robustness} reports a two-way resampling.

The same month draws give ranges for the two IC-sign groups of \cref{sec:positiveic}, with membership fixed at the 731 signals that have positive average IC on the original support (\cref{tab:icsignbootstrap}). For both-positive pairs, overall preservation has a bootstrap median of $54.6\%$ and a range of $48.3$--$61.2\%$, which extends below one half. For city angles of at most $90^\circ$ the range is $68.2$--$75.7\%$. The gap in that regime between the complement and the both-positive pairs, $18.4$ percentage points in the original sample, has a bootstrap median of $20.0$ points and a range of $15.9$--$23.9$, so its sign is stable under this resampling. Recomputing positive-IC membership within each replicate changes little: the number of positive-IC signals has median 726 (range 694--753), and the both-positive ranges become $48.6$--$61.5\%$ overall and $67.7$--$75.8\%$ for $C\ge0$. The complement's ranges are wide because it rests on 45 signals.

\begin{table}[!htb]
\centering\small
\caption{Month-block resampling by IC-sign group: the 1,000 month draws of \cref{tab:screenbootstrap}, with positive-IC membership fixed at the original 731 signals. Each cell gives the original value, the bootstrap median, and the 2.5--97.5 percentile range, in percent.}
\label{tab:icsignbootstrap}
\begin{tabular}{@{}lrrrrrr@{}}
\toprule
 & \multicolumn{3}{c}{Both-positive pairs} & \multicolumn{3}{c}{Complement}\\
\cmidrule(lr){2-4}\cmidrule(l){5-7}
Statistic & Original & Median & 95\% range & Original & Median & 95\% range\\
\midrule
Preserved & $58.9$ & $54.6$ & $48.3$--$61.2$ & $55.3$ & $50.8$ & $37.4$--$66.4$\\
Preserved, $C\ge0$ & $74.1$ & $71.8$ & $68.2$--$75.7$ & $92.6$ & $92.2$ & $87.2$--$95.3$\\
Preserved, $C<0$ & $1.5$ & $1.0$ & $0.5$--$2.6$ & $3.7$ & $2.5$ & $1.3$--$5.1$\\
Share with $C\ge0$ & $79.1$ & $75.7$ & $69.1$--$81.6$ & $58.0$ & $54.0$ & $40.5$--$69.4$\\
Share with $K\le0$ & $9.0$ & $9.8$ & $5.7$--$15.7$ & $29.6$ & $32.3$ & $16.5$--$53.4$\\
\bottomrule
\end{tabular}
\end{table}

\subsection{Block means: how fine a partition the certificates need}\label{sec:blockexperiment}

Because the single-mean certificate decides no pair in this library (\cref{sec:positiveic}), we apply the block intervals of \cref{thm:blocks} to U. For a partition $\mathcal P$ of the 51,833 rows, each signal stores one mean vector per block, and the interval $I_{\mathcal P}$ of \cref{eq:blockinterval} has half-width $\sqrt{E_i^{\mathcal P}E_j^{\mathcal P}}$. We use two nested families. The \emph{dyadic} family splits the stored row order into $2^k$ consecutive blocks of nearly equal length, for $k=0,\ldots,15$. The \emph{calendar-like} family uses the 76 month blocks, then consecutive chunks of 168, 24, and 6 rows inside each month. Both end at single rows, where the interval is the point $D_{ij}$. Chunks are counted in stored rows; without timestamps they are not verified weeks, days, or hours. For each of the 21 distinct partitions (the one-block and single-row partitions belong to both families) and all 300,700 pairs, $D_{ij}$ lies in its interval and every finer interval lies inside the coarser one, with no violation at tolerance $10^{-10}$.

A pair is \emph{certified} at threshold $\tau$ when the upper endpoint of its interval is at most $\tau$, as in \cref{cor:screen}. \Cref{tab:blockcertificates} reports, among the pairs with $D_{ij}\le\tau$, the share that is certified.

\begin{table}[!htb]
\centering\small
\caption{Block-mean certificates on U. $B$ is the number of blocks, and $B/T$ the number of stored mean vectors per signal relative to its 51,833 rows. The last three columns give the percentage of pairs with $D_{ij}\le\tau$ whose interval certifies $D_{ij}\le\tau$.}
\label{tab:blockcertificates}
\begin{tabular}{@{}lrrrrrr@{}}
\toprule
Partition & $B$ & $B/T$ & Median half-width & $\tau=0.75$ & $\tau=0.5$ & $\tau=0.25$\\
\midrule
\multicolumn{7}{@{}l}{\emph{Dyadic family}}\\
One block (the city) & $1$ & $0.0019\%$ & $0.998$ & $0.0$ & $0.0$ & $0.0$\\
64 blocks & $64$ & $0.12\%$ & $0.958$ & $0.1$ & $0.0$ & $0.0$\\
256 blocks & $256$ & $0.49\%$ & $0.914$ & $1.7$ & $0.0$ & $0.0$\\
1{,}024 blocks & $1{,}024$ & $2.0\%$ & $0.829$ & $14.4$ & $0.3$ & $0.0$\\
2{,}048 blocks & $2{,}048$ & $4.0\%$ & $0.759$ & $33.0$ & $1.6$ & $0.0$\\
4{,}096 blocks & $4{,}096$ & $7.9\%$ & $0.663$ & $55.7$ & $8.0$ & $0.2$\\
8{,}192 blocks & $8{,}192$ & $16\%$ & $0.535$ & $74.3$ & $28.5$ & $1.5$\\
16{,}384 blocks & $16{,}384$ & $32\%$ & $0.380$ & $89.6$ & $54.7$ & $12.5$\\
32{,}768 blocks & $32{,}768$ & $63\%$ & $0.186$ & $97.6$ & $83.4$ & $59.9$\\
\midrule
\multicolumn{7}{@{}l}{\emph{Calendar-like family}}\\
Month blocks & $76$ & $0.15\%$ & $0.943$ & $0.5$ & $0.0$ & $0.0$\\
168-row chunks & $353$ & $0.68\%$ & $0.896$ & $3.4$ & $0.0$ & $0.0$\\
24-row chunks & $2{,}189$ & $4.2\%$ & $0.751$ & $35.3$ & $1.9$ & $0.0$\\
6-row chunks & $8{,}673$ & $17\%$ & $0.522$ & $75.8$ & $30.9$ & $1.8$\\
\midrule
Single rows & $51{,}833$ & $100\%$ & $0.000$ & $100.0$ & $100.0$ & $100.0$\\
\bottomrule
\end{tabular}
\end{table}

The certificates need fine resolution. At $\tau=0.5$, no pair is certified with 64 blocks or fewer, and fewer than $0.05\%$ of the 255,549 pairs with $D\le0.5$ are certified with 512 blocks or fewer, with month blocks, or with 168-row chunks. The share reaches $28.5\%$ at 8,192 blocks, where each signal stores $16\%$ as many vectors as its history has rows; $54.7\%$ at 16,384 blocks ($32\%$); and $83.4\%$ at 32,768 blocks ($63\%$). In the calendar-like family, 24-row chunks certify $1.9\%$ and 6-row chunks $30.9\%$, with a month-block bootstrap range of $29.2$--$32.5\%$ (200 replicates). A looser threshold is cheaper: at $\tau=0.75$, 4,096 blocks ($7.9\%$ of the rows) certify $55.7\%$ of the pairs below the threshold, and 24-row chunks certify $35.3\%$. A tighter threshold is dearer: at $\tau=0.25$, only $12.5\%$ are certified at 16,384 blocks.

The residual energies show why. With one block the median $E_i$ is $0.999$; with month blocks it is still $0.954$, and with 24-row chunks $0.769$. Most of each history's energy therefore stays in the unobserved residual until blocks are a few rows long. This is consistent with the lag profiles of \cref{sec:theorydata}, whose median lag-one value is $0.469$: the block means of signals that change this quickly remain short.

Results for the absolute rule and for both-positive pairs are similar. At 16,384 blocks and $\tau=0.5$, the absolute certificate covers $54.2\%$ of the pairs with $|D|\le0.5$, and the signed certificate $49.9\%$ of the both-positive pairs with $D\le0.5$. The lower endpoint certifies $D_{ij}>\tau$ for few pairs before full resolution: $1.5\%$ of all pairs at 32,768 blocks and $\tau=0.5$.

\paragraph{Block means as a screen.}
The block means also define a compressed history whose correlation can serve as a point estimate. Put
\[
 C^{\mathcal P}_{ij}=\frac{A^{\mathcal P}_{ij}}{\sqrt{(1-E^{\mathcal P}_i)(1-E^{\mathcal P}_j)}},
\]
which is the city correlation $C_{ij}$ for one block and $D_{ij}$ for single rows. \Cref{tab:blockscreening} repeats the screening study of \cref{sec:screening} at $c=\tau=0.5$ with $C^{\mathcal P}$ in place of $C$. This is a descriptive screen, not a certificate.

\begin{table}[!htb]
\centering\small
\caption{Screening on U with the block correlation $C^{\mathcal P}$ at $c=\tau=0.5$; the base rate is $85.0\%$ throughout. Entries other than the rank correlation are percentages. Gain is success minus the base rate, in percentage points. The last two columns condition on the history outcome: the share accepted among pairs with $D>0.5$, and the share rejected among pairs with $D\le0.5$.}
\label{tab:blockscreening}
\setlength{\tabcolsep}{4.5pt}
\begin{tabular}{@{}lrrrrrrrr@{}}
\toprule
 & & Spearman & & & & & Accepted, & Rejected,\\
Partition & $B$ & with $D$ & $D\le C^{\mathcal P}$ & Accepted & Success & Gain & $D>0.5$ & $D\le0.5$\\
\midrule
\multicolumn{9}{@{}l}{\emph{Calendar-like family}}\\
One block & $1$ & $0.561$ & $58.5$ & $72.5$ & $94.3$ & $+9.3$ & $27.5$ & $19.6$\\
Month blocks & $76$ & $0.666$ & $36.8$ & $83.5$ & $92.8$ & $+7.9$ & $39.8$ & $8.8$\\
168-row chunks & $353$ & $0.752$ & $34.8$ & $84.6$ & $93.1$ & $+8.1$ & $38.9$ & $7.3$\\
24-row chunks & $2{,}189$ & $0.853$ & $37.4$ & $85.4$ & $94.2$ & $+9.2$ & $33.0$ & $5.3$\\
6-row chunks & $8{,}673$ & $0.938$ & $41.8$ & $84.6$ & $96.3$ & $+11.3$ & $20.8$ & $4.1$\\
\midrule
\multicolumn{9}{@{}l}{\emph{Dyadic family}}\\
16 blocks & $16$ & $0.655$ & $42.2$ & $81.0$ & $93.4$ & $+8.4$ & $35.7$ & $11.0$\\
256 blocks & $256$ & $0.731$ & $34.8$ & $84.6$ & $93.0$ & $+8.0$ & $39.6$ & $7.5$\\
4{,}096 blocks & $4{,}096$ & $0.897$ & $38.8$ & $85.1$ & $95.1$ & $+10.1$ & $27.8$ & $4.8$\\
16{,}384 blocks & $16{,}384$ & $0.965$ & $44.9$ & $84.5$ & $97.3$ & $+12.3$ & $15.1$ & $3.2$\\
32{,}768 blocks & $32{,}768$ & $0.991$ & $47.6$ & $84.8$ & $98.9$ & $+13.9$ & $6.4$ & $1.4$\\
\midrule
Single rows & $51{,}833$ & $1.000$ & $100.0$ & $85.0$ & $100.0$ & $+15.0$ & $0.0$ & $0.0$\\
\bottomrule
\end{tabular}
\end{table}

The rank correlation between $C^{\mathcal P}$ and $D$ rises with resolution, from $0.561$ for one block to $0.666$ for month blocks and $0.938$ for 6-row chunks. Screening success does not rise monotonically. Month blocks accept more pairs than the single city ($83.5\%$ against $72.5\%$), including more of the pairs with $D>0.5$ ($39.8\%$ against $27.5\%$). Success among accepted pairs therefore falls from $94.3\%$ to $92.8\%$, and the improvement over the base rate from $9.3$ to $7.9$ percentage points. What month blocks gain is retention: they reject $8.8\%$ of the pairs with $D\le0.5$, against $19.6\%$. The improvement returns to $9.2$ points with 24-row chunks and reaches $11.3$ with 6-row chunks (month-block bootstrap range $10.5$--$12.2$). The share of pairs with $D\le C^{\mathcal P}$ is not monotone either: it falls from $58.5\%$ to about $35\%$ at intermediate resolutions before reaching $100\%$ at single rows.

An exact identity reads this second pattern. Let $K^{\mathcal P}_{ij}$ be the average inner product of the two within-block residual histories, so that $D_{ij}=A^{\mathcal P}_{ij}+K^{\mathcal P}_{ij}$ by the orthogonal decomposition in the proof of \cref{thm:blocks}, and, when both residual energies are positive, let $\rho^{\mathcal P}_{ij}=K^{\mathcal P}_{ij}/\sqrt{E^{\mathcal P}_iE^{\mathcal P}_j}$ be the correlation of those residual histories. Then
\[
 D_{ij}=\sqrt{(1-E^{\mathcal P}_i)(1-E^{\mathcal P}_j)}\;C^{\mathcal P}_{ij}
 +\sqrt{E^{\mathcal P}_iE^{\mathcal P}_j}\;\rho^{\mathcal P}_{ij},
\]
and $D_{ij}\le C^{\mathcal P}_{ij}$ exactly when $K^{\mathcal P}_{ij}\le\bigl(1-\sqrt{(1-E^{\mathcal P}_i)(1-E^{\mathcal P}_j)}\bigr)C^{\mathcal P}_{ij}$, the block analogue of \cref{eq:ordering}. When the two residual energies are equal, the criterion is simply $\rho^{\mathcal P}_{ij}\le C^{\mathcal P}_{ij}$. The fall to about $35\%$ therefore says that, for most pairs at intermediate resolutions, the residual agreement exceeds its margin; with comparable residual energies, the two signals' within-block residuals are more correlated than their block means. The shared movement between signals lies more within blocks than between them, which is consistent with the residual energies reported above. The cutoff was not optimized, and the comparison is retrospective on the same support.

\subsection{Interpreting the theory--data comparison}\label{sec:theorydata}

The empirical residual distribution explains which pairs preserve separation. In U, $88.7\%$ of residual cross-moments are positive; the share is $91.0\%$ in the both-positive-IC subset and $70.4\%$ in its complement (\cref{sec:positiveic}). The exact sample identity
\[
 D_{ij}-C_{ij}=K_{ij}-(1-r_ir_j)C_{ij}
\]
locates each violation where residual agreement exceeds the available margin. It requires no stochastic sampling assumptions. Preservation with $C<0$ requires $D<0$, and only $11.2\%$ of full-sample pairs have $D\le0$, so such pairs can come only from a small part of the library. For that reason the qualitative contrast across $90^\circ$ is, taken alone, only a weak test of the uniform null.

The overall $58.9\%$ rate in the positive-IC subset is close to the northern null limit $\gamma_{19}\simeq56.0\%$, but the agreement is the net result of two differences that offset each other. In that subset, $79.1\%$ of pairs have $C\ge0$, against $56.0\%$ under the northern null. Within that regime, preservation is $74.1\%$, against a limit of one under the null with $a_T\to0$. Residual signs differ as well: the share of nonpositive residuals is $9.0\%$, whereas the independent-date null has a limiting share of one half. (\Cref{thm:dependentordering} does not assert that residual-sign limit under arbitrary serial dependence.)

\paragraph{Finite-sample dependence sensitivity.}
The five-hour forward-return target overlaps at adjacent hourly observations, and predictions can remain dependent over longer periods. \Cref{tab:dependencesensitivity} gives hypothetical lower bounds from \cref{thm:dependentordering} for $T=51{,}833$ and $q=19$. The factor $L_T=Ta_T$ is determined by temporal Gram matrices and, in the stationary case, by the product-lag sum in \cref{eq:dependentlags}. The table uses assumed values of the factor and shows what the independent, rotationally symmetric path model implies; its entries are not confidence bounds for the observed $74.1\%$. Estimates for the competition histories follow below.

\begin{table}[!htb]
\centering\small
\caption{Sensitivity of the northern nonnegative-city preservation lower bound to hypothetical temporal dependence, at $T=51{,}833$ and $q=19$. The general bound is $\max\{0,1-\mathcal B(L_T/T)/\gamma_{19}\}$. The independent-date row uses the sharper \cref{thm:uniformorderingrate}.}
\label{tab:dependencesensitivity}
\begin{tabular}{@{}lrr@{}}
\toprule
Sampling statement & Assumed $L_T$ & Lower bound\\
\midrule
Independent uniform dates & $1$ & $99.01\%$\\
General dependent-history bound & $1$ & $98.75\%$\\
General dependent-history bound & $3.4$ & $97.68\%$\\
General dependent-history bound & $5$ & $97.18\%$\\
General dependent-history bound & $10$ & $96.01\%$\\
General dependent-history bound & $25$ & $93.66\%$\\
General dependent-history bound & $100$ & $87.15\%$\\
\bottomrule
\end{tabular}
\end{table}

Formally replacing $T$ by $T/5=10{,}366.6$ in the independent-date formula gives $97.7853\%$ (the integer count $10{,}366$ gives the same rounded value). This is a sensitivity calculation, not an application to a fractional sample. The replacement is exact for aligned equal blocks of five repeated vectors when $T=5n$, as described in \cref{sec:dependentnull}; a five-hour target does not establish that model. Neither this calculation nor an assumed $L_T=5$ brings the null lower bound down to the observed $74.1\%$. Greater temporal dependence can weaken the probability bound, but cannot remove the null's second-moment restriction.

\paragraph{Estimated lag profiles.}
We estimated the factor for the competition histories in the stationary form of \cref{eq:dependentlags}. For each signal, $\hat\alpha_i(h)$ is the average of $\ip{X_{i,t}}{X_{i,t+h}}$ over row pairs $(t,t+h)$ that lie in the same month block and are both retained in U; no lag crosses a block boundary. The pair factor is
\[
 \hat L_{ij}=1+2\sum_{h=1}^{H}\left(1-\frac hT\right)\hat\alpha_i(h)\hat\alpha_j(h),\qquad T=51{,}833,
\]
a product of the two signals' own lag profiles. Across the 776 signals the median of $\hat\alpha_i(1)$ is $0.469$; the median falls to $0.198$ at lag 5, $0.079$ at lag 24, and $0.041$ at lag 120. \Cref{tab:estimatedlag} gives the distribution of $\hat L_{ij}$ over all 300,700 pairs at four truncations $H$, with the lower bound of \cref{thm:dependentordering} evaluated at the median and at the 95th-percentile factor. At $H=120$ the median factor is $2.4$, with a month-block bootstrap range of $2.3$--$2.5$ (200 replicates), and the implied bound is $98.06\%$; at the 95th-percentile factor of $12$ it is $95.70\%$. Even at $H=720$ and the 95th-percentile factor of $24$, the bound is $93.80\%$. Restricting to both-positive pairs, or centering each history at its mean before computing lags, changes the median factor by less than $0.3$ at every truncation.

\begin{table}[!htb]
\centering\small
\caption{Estimated pair factors $\hat L_{ij}$ from within-block lag profiles (uncentered, all 300,700 pairs in U), and the lower bound $\max\{0,1-\mathcal B(\hat L/T)/\gamma_{19}\}$ of \cref{thm:dependentordering} at $T=51{,}833$. Lags are counted in stored rows.}
\label{tab:estimatedlag}
\begin{tabular}{@{}rrrrr@{}}
\toprule
Truncation $H$ & Median $\hat L$ & 95th-percentile $\hat L$ & Bound at median & Bound at 95th percentile\\
\midrule
$5$ & $1.8$ & $4.1$ & $98.33\%$ & $97.45\%$\\
$24$ & $2.1$ & $7.0$ & $98.20\%$ & $96.66\%$\\
$120$ & $2.4$ & $12$ & $98.06\%$ & $95.70\%$\\
$720$ & $2.9$ & $24$ & $97.84\%$ & $93.80\%$\\
\bottomrule
\end{tabular}
\end{table}

Three qualifications apply. First, lags are counted in stored rows, not in verified elapsed hours: the export has no timestamps, rows were removed, and the 76 month blocks of U hold 682 retained rows on average. Second, the estimate at $h=720$ rests on only 1,032 row pairs, because no lag crosses a block boundary. Third, $\hat L_{ij}$ is a plug-in estimate and can fall below one (its minimum is $0.32$), because products of estimated lag correlations can be negative. The factor is also heterogeneous across pairs: at $H=120$ its 95th percentile is about five times its median. Fourth, the upper tail has not converged: the 95th-percentile factor doubles between $H=120$ and $H=720$, and no lag crosses a month boundary. Given the size of the gap to $74.1\%$, this does not affect the conclusion.

None of these bounds approaches the observed $74.1\%$: the bound of \cref{thm:dependentordering} falls to that level only at a factor of about $390$, against an estimated median of $2.4$ and a 95th percentile of $24$. Within the independent, rotationally symmetric path model, estimated serial dependence therefore does not account for the gap between the null and the data. This is a statement about that model with plug-in lag correlations. It does not turn the null-model bounds into guarantees for the observed library.

\paragraph{A second-moment diagnostic independent of the lag factor.}
For every temporal dependence structure in \cref{thm:dependentordering}, \cref{cor:dependentmoments} gives $q\E D_T^2=a_T\le1$. This remains true after northern selection, although $\E_T^+D_T$ need not be zero. It also holds for mixtures of admissible pair laws with different temporal Gram matrices, because the upper bound holds pair by pair. No analogous invariance is claimed for additional conditioning on city angle or competition-quality filters.

Let $M=300{,}700$, $\overline D=M^{-1}\sum_{i<j}D_{ij}$, and $\overline{D^2}=M^{-1}\sum_{i<j}D_{ij}^2$ denote \emph{empirical} pair averages in U. Jensen's inequality and the reported mean give
\begin{equation}\label{eq:empiricalmomentdiagnostic}
 q\,\overline{D^2}\ge q\,(\overline D)^2
 \simeq19(0.238)^2=1.076236>1.
\end{equation}
Rounding the mean to three decimals does not affect the comparison with one. Thus no admissible choice of temporal covariance matrices can make this null's \emph{expected} second moment equal to the observed empirical second moment. The mismatch is informative as a description: adjusting temporal dependence alone cannot fit that moment while the path law remains independent and rotationally symmetric.

An empirical average is not a model expectation, however, and \cref{eq:empiricalmomentdiagnostic} should be read as a diagnostic only. It does not show that a finite random sample from the null could not produce these data, and it is not a calibrated rejection test: pair comparisons share histories, and competition selection may alter their distribution, so a test would need a sampling and selection model for the whole library. What the diagnostic indicates is that the null's assumptions, taken together, are limited as a description of this library. It cannot, taken alone, tell cross-signal dependence apart from nonuniform directional laws or selection effects. The predominantly positive observed residuals are separate evidence of realized shared movement.

\paragraph{The population margin and its limits of applicability.}
\Cref{thm:populationordering} allows temporal and cross-signal dependence through a jointly stationary ergodic law. For fixed nonzero population means, preservation is governed by the margin $(1-\rho_X\rho_Y)C_*-\kappa$, and shared residual movement can cause persistent violations, as \cref{ex:sharedpersistent} shows. This is a conditional framework for interpreting dependence, not a fitted model for the exported library.

Three observations caution against treating the theorem as a fitted model. The empirical median retained length is only $0.0338$. Full-sample preservation varies from approximately $42\%$ to $64\%$ across the reported windows. And month resampling shifts its median down by $4.3$ percentage points. None of these disproves fixed nonzero means or stationarity; equally, none shows that city directions are already close to stable population directions. If the means are near zero relative to sampling noise, or if they drift, the fixed-mean asymptotic description can be inadequate. \Cref{prop:localalternatives} illustrates a mean--noise transition under specified iid assumptions; extending that transition law to dependent or drifting histories requires further assumptions. Until population stability is established, the finite-sample residual identity is the direct explanation of the measured ordering.

\subsection{Robustness, uncertainty, and export limitations}\label{sec:robustness}

A second sample P retains all 834 long histories and all 56,120 common rows as stored, including zeros and non-unit rows. Its raw stored inner-product ordering gives preservation rates of $59.9\%$ overall, $77.9\%$ for $C_{ij}\ge0$, and $2.0\%$ for $C_{ij}<0$. Sample P differs from U in signal membership, support, and row treatment at once, so this is a joint sensitivity check. Because P does not satisfy the unit-vector condition, its rates do not have the spherical-angle interpretation of those for U, and the difference between the two samples cannot be attributed to zero rows alone.

Keeping duplicate submissions gives $58.6\%$ overall preservation; excluding the 17 pairs identical on U's support, or the 79 pairs with both $C_{ij}>0.999$ and $D_{ij}>0.999$, leaves the rounded $58.5\%$ rate unchanged. Restricting to the 763 unit-compatible histories covering all 80 reference months gives $57.4\%$. Same-scientist and different-scientist pairs preserve separation in $67.4\%$ and $58.4\%$ of cases, respectively, with the latter accounting for 298,224 of the 300,700 pairs.

Cities and all pair statistics were also recomputed within time windows. Across the reported disjoint partitions the overall rate ranges from approximately $42\%$ to $64\%$, while the nonnegative-city rate ranges from approximately $65\%$ to $80\%$ and the negative-city rate stays below $4\%$. The share of nonnegative-city pairs changes substantially across windows. On the last retained month alone, preservation is $38.9\%$ and median retained length is $0.169$. These comparisons show that the rates are sensitive to the evaluation period. Because period and window length change together, they do not isolate a causal effect of length.

A complementary bootstrap resamples 152 scientist clusters while holding pair statistics fixed. With 1,000 replicates, its overall angle-preservation range is $55.1\%$--$62.3\%$. The month-block analysis in \cref{sec:screenuncertainty} instead recomputes cities and histories, producing a wider $47.5\%$--$61.6\%$ range and the downward median shift reported there. The two exercises answer different questions.

The same cluster draws give ranges for threshold screening and for the positive-IC subset (\cref{tab:clusterbootstrap}). A pair of distinct signals is counted $w_aw_b$ times, where $w$ is the number of drawn copies of each signal; a signal is never paired with its own copy, and same-scientist pairs of distinct signals are kept. At the primary threshold the improvement over the base rate is $9.3$ percentage points, with a cluster-bootstrap range of $7.2$--$12.1$. A two-way bootstrap, which resamples scientist clusters and month blocks in the same replicate and recomputes cities, gives $7.1$--$12.2$. Both ranges exclude zero and are wider than the month-block range of $8.2$--$10.0$. At $\tau=-0.5$ the range reaches zero: only $0.07\%$ of pairs have $D\le-0.5$, and in some replicates none is drawn. For both-positive pairs, the cluster bootstrap is centered on the original values, because pair statistics are held fixed, while the two-way bootstrap inherits the downward shift of month resampling.

\begin{table}[!htb]
\centering\small
\caption{Scientist-cluster resampling (152 clusters, pair statistics fixed) and two-way resampling of clusters and month blocks (cities recomputed), 1,000 replicates each. Screening rows without a stated threshold use $c=\tau=0.5$. Cells give the bootstrap median and the 2.5--97.5 percentile range; improvements are in percentage points.}
\label{tab:clusterbootstrap}
\begin{tabular}{@{}lrrr@{}}
\toprule
Statistic & Original & Cluster bootstrap & Two-way bootstrap\\
\midrule
Pairs accepted & $72.5\%$ & $72.4\%$ ($67.2$--$77.2$) & $74.0\%$ ($67.6$--$79.0$)\\
Success among accepted & $94.3\%$ & $94.4\%$ ($91.8$--$96.2$) & $94.2\%$ ($91.1$--$96.3$)\\
Base rate & $85.0\%$ & $84.8\%$ ($81.0$--$88.4$) & $84.9\%$ ($80.3$--$88.6$)\\
Improvement & $+9.3$ & $+9.5$ ($+7.2$ to $+12.1$) & $+9.2$ ($+7.1$ to $+12.2$)\\
Accepted given $D>0.5$ & $27.5\%$ & $27.3\%$ ($20.7$--$33.6$) & $28.5\%$ ($20.3$--$36.4$)\\
Rejected given $D\le0.5$ & $19.6\%$ & $19.4\%$ ($15.4$--$24.2$) & $17.8\%$ ($13.8$--$23.5$)\\
\midrule
Improvement, $c=\tau=0.75$ & $+1.8$ & $+1.9$ ($+1.3$ to $+2.8$) & $+1.9$ ($+1.3$ to $+2.9$)\\
Improvement, $c=\tau=0.25$ & $+19.4$ & $+19.2$ ($+15.9$ to $+23.3$) & $+18.2$ ($+13.2$ to $+24.3$)\\
Improvement, $c=\tau=0$ & $+10.0$ & $+9.9$ ($+6.9$ to $+13.4$) & $+11.5$ ($+4.1$ to $+23.3$)\\
Improvement, $c=\tau=-0.5$ & $+3.6$ & $+3.3$ ($+0.0$ to $+11.8$) & $+3.0$ ($+0.0$ to $+11.9$)\\
\midrule
Both-positive: preserved & $58.9\%$ & $59.0\%$ ($55.4$--$63.0$) & $54.7\%$ ($47.6$--$62.4$)\\
Both-positive: preserved, $C\ge0$ & $74.1\%$ & $74.2\%$ ($70.7$--$77.9$) & $71.9\%$ ($66.5$--$77.3$)\\
Both-positive: preserved, $C<0$ & $1.5\%$ & $1.5\%$ ($0.9$--$2.3$) & $1.0\%$ ($0.3$--$3.0$)\\
\bottomrule
\end{tabular}
\end{table}

These exercises remain indicative. None models dependence between successive months, and the two-way scheme resamples scientists and months independently of each other. That a majority of pairs preserve separation is a fact about this sample, not a robust lower bound across periods: the month-block and two-way ranges for overall preservation both extend below one half.

Direct recomputation of 200 sampled pairs agreed with batched values of $D$, $C$, and $K$ to $10^{-14}$ or better. No residual-energy bound violations were observed beyond floating-point rounding. Recomputed full-history ICs, lengths, and cities also match the supplied summary tables to approximately $10^{-14}$. These checks establish that the analysis of the export is numerically consistent. They do not independently validate how the gauge frames were constructed upstream.

Three export limitations remain material. First, neither per-row timestamps nor explicit month labels accompany the stored blocks: chronology is inferred and cross-checked against local non-production data. Second, adjacent blocks overlap by six row labels at most boundaries (four at three boundaries). Without timestamps we cannot tell whether these labels represent the same hours twice, so every observation keeps the weight of its production block. Third, the five shared non-unit rows per month occur at month end, but their upstream cause has not been verified against the production gauge transforms. Removing those rows establishes the norm condition on U, but does not itself verify orthogonality and a common frame against the original signals. The findings consequently describe the cleaned exported histories. Production timestamps, month metadata, and direct checks of the common gauge transforms would complete the provenance validation.

The empirical contribution has three parts. The first is the measured frequency with which the city angle is conservative among positive-IC pairs, in particular the $74.1\%$ rate for city angles of at most $90^\circ$. The second is the gain from threshold screening over the base rate at each threshold, together with its cost in rejected pairs. The third is the temporal resolution at which block-mean certificates begin to decide pairs (\cref{sec:blockexperiment}). All three are pairwise, retrospective comparisons on the same evaluation support. The first two are signed and do not evaluate an absolute-correlation rule $|D|\le\tau$. None of the three evaluates a simultaneous guarantee for a selected signal portfolio or performance on future periods; each of those applications requires a separate evaluation. The threshold comparisons are descriptive; the cutoffs were not optimized out of sample. And an inconclusive deterministic certificate still does not establish duplication.

\Needspace{10\baselineskip}
\section{Conclusion}\label{sec:conclusion}

We asked when the separation between two cities is a conservative estimate of the separation between the underlying histories. In the observed Competition May 2026 library, positive-IC pairs preserve separation in $58.9\%$ of cases overall and in $74.1\%$ when their city angle is at most $90^\circ$: in that regime, about three quarters of pairs have histories at least as separated as their cities. Beyond $90^\circ$, preservation is $1.5\%$. The complement, in which at least one signal lacks positive average IC, preserves separation in $55.3\%$ of cases overall and in $92.6\%$ when $C\ge0$. Positive residual agreement is more frequent among both-positive pairs ($91.0\%$ against $70.4\%$). These contrasts are a descriptive association with the sign of average IC. They do not identify a causal effect, and they do not make the northern-null benchmark applicable to the complement. The results by evaluation window and under month resampling show that the overall majority should not be extrapolated to every period: the month-block range for both-positive pairs is $48.3$--$61.2\%$ overall, against $68.2$--$75.7\%$ for city angles of at most $90^\circ$.

The uniform null describes an idealized mechanism. Under it, history correlation approaches zero, so the sign of the city correlation alone determines the limiting ordering. Selecting for positive IC changes the aggregate limit to $\gamma_q$, approximately $56\%$ for 20 assets; the conditional limits of one and zero hold with or without that selection. \Cref{thm:dependentordering} allows temporal dependence within independent rotationally symmetric histories and keeps those limits whenever its temporal Gram factor $a_T$ tends to zero. Serial dependence therefore need not destroy the asymptotic result, although it does change the finite-sample bounds.

The empirical frequencies do not validate this null. Lag profiles estimated from the stored histories give a median pair factor of about $2.4$, and null lower bounds between $93.8\%$ and $98.3\%$ across the truncations and percentiles reported, far above the observed $74.1\%$. Moreover, no choice of temporal covariance within the null can match the empirical second moment in \cref{eq:empiricalmomentdiagnostic}. That mismatch is a descriptive diagnostic, not a calibrated rejection test for a finite library. In the data, shared residual movement determines the observed margin, and \cref{thm:populationordering} gives the limiting counterpart of that margin for jointly stationary ergodic histories with fixed nonzero means. Because retained lengths are small and the rates are sensitive to the evaluation window, those stability assumptions are not established here; a mean--noise transition, or population means that change over time, remain possible.

The threshold study answers the pairwise screening question directly. At a $60^\circ$ cutoff on the full 776-signal sample, the city screen retains $72.5\%$ of pairs and $80.4\%$ of those meeting the history threshold. Its success rate is $94.3\%$, against a base rate of $85.0\%$, and its improvement range is $8.2$--$10.0$ percentage points under month-block resampling, $7.2$--$12.1$ under scientist-cluster resampling, and $7.1$--$12.2$ when both are resampled together. At this fixed target, tighter city cutoffs add little success and discard many acceptable pairs. Two gaps remain. The provenance of the timestamps and of the gauge transforms is incomplete, so lags are measured in stored rows. And prospective screening performance is not established.

The deterministic results show what city directions and retained lengths can guarantee for arbitrary histories. The bounds are sharp, and for this sample's small retained lengths they are inconclusive at the $0.5$ threshold. Means over shared time blocks give nested certificates that tighten as the blocks are refined and recover the full-history correlation at the finest resolution. In this library they need fine resolution: at the $0.5$ threshold, fewer than $0.05\%$ of the pairs below the threshold are certified with 512 blocks or fewer, $54.7\%$ when each signal stores about a third as many vectors as its history has rows, and $83.4\%$ at about two thirds. Persistent positive IC confines a city and bounds its displacement from one observation to the next; counterexamples show that confinement does not imply convergence. Storing the mean vector rather than the city alone, and finer block means when needed, keeps explicit the information on which these guarantees depend.

\Needspace{10\baselineskip}
\phantomsection
\end{document}